\newtheorem{proper}{Theorem}
\newtheorem{prop}[proper]{Property}
\newtheorem{propos}{Theorem}
\newtheorem{proposition}[propos]{Proposition}
\newtheorem{defi}{Theorem}
\newtheorem{definition}[defi]{Definition}
\newtheorem{example}{Theorem}
\newtheorem{ex}[example]{Example}
\newtheorem{theorem}{Theorem}
\begin{document}


\title{Evaluating Temporal Queries Over Video Feeds}



%
%
%
%

\numberofauthors{3} 

\author{
%
%
\alignauthor
Yueting Chen\\
       \affaddr{York University}\\
       \email{ytchen@eecs.yorku.ca}
\alignauthor
Xiaohui Yu\\
       \affaddr{York University}\\
       \email{xhyu@yorku.ca}
\alignauthor 
Nick Koudas\\
       \affaddr{University of Toronto}\\
       \email{koudas@cs.toronto.edu}
}


\maketitle

\begin{abstract}

Recent advances in Computer Vision and Deep Learning made possible the efficient extraction of a schema from frames of streaming video. As such, a stream of objects and their associated classes along with unique object identifiers derived via object tracking can be generated,  providing unique objects as they are captured across frames.

In this paper we initiate a study of temporal queries involving objects and their co-occurrences in video feeds. For example, queries that identify video segments during which the same two red cars and the same two humans appear jointly for five minutes are of interest to many applications ranging from law enforcement to security and safety. We take the first step and define such queries in a way that they incorporate certain physical aspects of video capture such as object occlusion. We present an architecture consisting of three layers, namely object detection/tracking, intermediate data generation and query evaluation. 
We propose two techniques, Marked Frame Set (MFS) and Strict State Graph (SSG), to organize all detected objects in the intermediate data generation layer, which effectively, given the queries, minimizes the number of objects and frames that have to be considered during query evaluation. We also introduce an algorithm called State Traversal (ST) that processes incoming frames against the SSG and efficiently prunes objects and frames unrelated to query evaluation, while maintaining all states required for succinct query evaluation. 

We present the results of a thorough experimental evaluation utilizing both real and synthetic data  establishing the trade-offs between MFS and SSG. We stress various parameters of interest in our evaluation and demonstrate that the proposed query evaluation methodology coupled with the proposed algorithms is capable to evaluate temporal queries over video feeds efficiently, achieving orders of magnitude performance benefits.
\end{abstract}

\section{Introduction}

Video data abound. As of this writing the equivalent of 500 hours of video is uploaded to Youtube alone, every minute. Since the majority of video feeds will never be inspected by humans, analysis of video, the ability to extract meaningful information from video feeds as well as the ability to express and evaluate different types of video analytics queries has become a pressing concern.

Recent advances in Deep Learning (DL) \cite{DBLP:journals/nature/LeCunBH15,DBLP:books/daglib/0040158} technologies introduced algorithms and models to comprehend challenging data types such as images, video and unstructured text (among many others); recent breakthroughs in applications such as image classification and object detection offer the ability to classify objects in image frames, detect object locations in frames as well as track objects from frame to frame with solid accuracy. 

There has been increasing recent interest in the community \cite{Kang:2017:NON:3137628.3137664,blazeit,DBLP:conf/cidr/KangBZ19, xarchakos, xarchakos1} in the processing of video queries utilizing DL models as primitives for data extraction from videos. We are also concerned with applications that involve video collected from static cameras as it is prevalent in surveillance and security applications. Our focus however is on the efficient execution of {\em temporal queries} over video feeds.  A video feed is an ordered sequence of frames, presented at a specific {\em frame rate}, typically 30 frames per second ({\em fps}) to align with human visual perception. Different types of objects appear in frames and can be easily detected by state-of-the-art object detection algorithms. Similarly, DL models are highly effective to track the same object (e.g., car) from one frame to the next, as objects persist across frames. Given this basic ability to detect and track objects across the temporal evolution of a video feed, several queries of interest arise. For example, we may be interested to discover video segments during which only the exact same two cars (and no other object) appear continuously within a 10 second interval; similarly we may wish to detect video segments during which the same three trucks and the same human object appear for a duration of one minute jointly (possibly along with many other cars/trucks and humans among them). These queries become highly sophisticated when the query objects can be associated in arbitrary Conjunctive Normal Form (CNF) expressions. Moreover, when query specified objects can be connected to external identities (e.g., license plates) such queries become highly powerful. 

\begin{figure}[t!]
    \centering
    \begin{subfigure}[t]{0.25\textwidth}
        \centering
        \includegraphics[height=70pt]{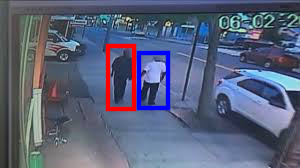}
        \caption{Two Men}
        \label{fig:qex:1}
    \end{subfigure}%
    ~ 
    \begin{subfigure}[t]{0.25\textwidth}
        \centering
        \includegraphics[height=70pt]{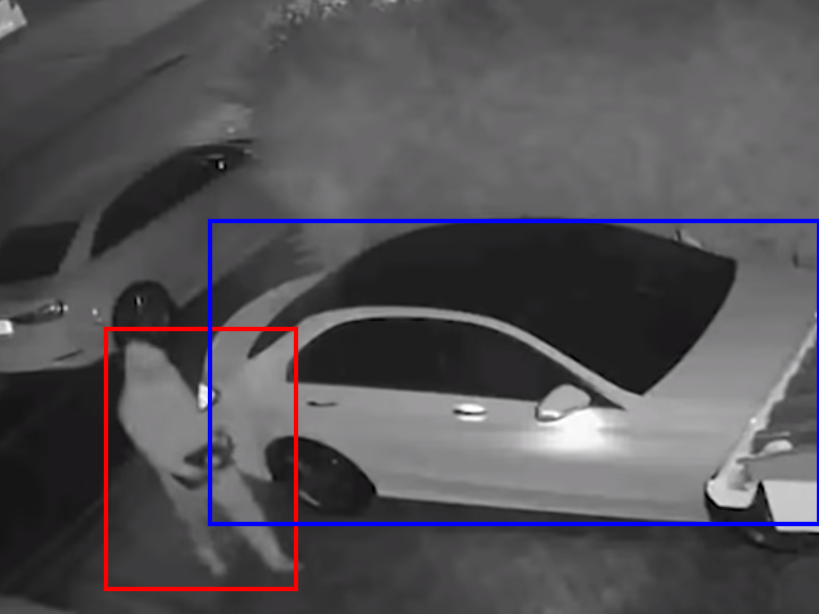}
        \caption{A Car Thief}
        \label{fig:qex:2}
    \end{subfigure}
    \caption{Video Examples}
    \label{fig:qex}
\end{figure}

As a sample application scenario, consider surveillance analysis, where a lot of video footage is examined after an incident for certain events of interest involving joint presence of objects. For example, after an incident witnesses may report a white car and two males on the street. Subsequently, authorities may wish to detect in a large video footage segments where a white car and two humans appear jointly for at most five minutes (e.g., Figure \ref{fig:qex}a). Efficient execution of such queries can automate many of these tasks.

Evaluating queries on video feeds involving multiple objects and temporal constraints presents numerous challenges, however.
First, finding video frames where the above query (a white car and two males appear jointly) is satisfied can be very expensive since both the actual objects and frames that satisfy the query are uncertain, which means that a brute-force approach has to consider all possible object combinations and frames, thus leading to a large computation space. 
Sets of objects that are not promising to be part of any query answer should be pruned immediately from further evaluation.
Second,
{\em Occlusions} occur naturally among objects in videos, where some objects disappear from the visible screen in some frames (because of being blocked by other objects) and then reappear. For example, in Figure \ref{fig:qex}b the human may disappear behind the car for a few seconds or longer. Occlusions can be captured by state-of-art object tracking algorithms \cite{DBLP:conf/itsc/KrebsDF17, Wojke2017simple}, and thus they need to be considered in the query semantics. Similarly, although state-of-the-art object tracking algorithms perform well when tracking an object across frames, they are still error prone. As a result, some objects may disappear or are assigned new identifiers in some frames. Temporal query semantics has to account for pragmatic challenges such as occlusion, and subsequent query evaluation needs to take such semantics into account.

In this paper, we place the problem of evaluating temporal queries over video feeds into perspective, and in particular make the following contributions:
\begin{itemize}[noitemsep]
	\item We define semantics for temporal queries on video feeds accounting for aspects intrinsic to video objects such as object occlusion.
	\item We decouple object detection/classification from query evaluation, introducing an intermediate layer (MCOS Generation) that optimizes the input to the query evaluation module.
	\item We fully develop the MCOS Generation module, detailing a way to organize objects detected through state maintenance. 
	\item We propose {\em Marked Frame Set} (MFS) to prune unnecessary states early. Additionally, to provide even better pruning efficiency in certain scenarios, we also propose to maintain states in a graph structure called {\em Strict State Graph} (SSG). 
	\item We propose an algorithm called {\em State Traversal} ($ST$) that processes incoming frames against the SSG, dynamically adjusting its structure and incrementally and efficiently maintaining all primitives necessary for succinct query evaluation. 
	\item We propose several modifications to a CNF query execution algorithm, and demonstrate how to efficiently couple CNF query execution with algorithm ST, yielding improved performance in our application scenario.
	\item We present a detailed experimental evaluation of the proposed algorithms utilizing both synthetic and real data,
    demonstrating the trade-offs between MFS and SSG. Both approaches for state maintenance yield significant performance improvements over baselines on real datasets, with benefits increasing as query parameters vary.  We also demonstrate that adapting our state maintenance during query evaluation yields an optimized approach that exhibits significant speedup on query evaluation; up to more than 100 times in our experiments.
\end{itemize}

The remainder of this paper is organized as follows. In Section \ref{sect:problem}, we formally introduce the problem of interest in this work,  followed by Section \ref{sect:overview} which presents an overview of our solutions.
Section \ref{sect:state} presents an overview of the proposed MCOS Generation module along with our main technical proposals. In Section \ref{sect:eval}, we discuss query evaluation. Section \ref{sect:exp} presents a thorough experimental evaluation of all introduced approaches. Section \ref{sect:related} reviews related work, and Section \ref{sect:conclusion} concludes the paper. 
All proofs of theorems can be found in the appendix.

\section{Problem Definition} \label{sect:problem}

Consider a video feed containing 5 frames, $(\{B\}, \{ABC\},\\ \{ABDF\}, \{ABCF\}, \{ABD\})$,
where each letter represents an object, and each object set represents a frame in order.
Assume all objects are of the same class. 
Queries such as ``select the object set that appears in all frames" can be easily answered by intersecting the object set of all frames directly. 
However, if we allow occlusions/detection errors and issue queries such as ``Select the video frames where some objects appear jointly for at least 3 frames in a window of 5 frames", the problem becomes nontrivial.
In this case, the object sets $\{B\}$ (appearing in frames $\{0,1,2,3,4\}$) and $\{AB\}$ (in frames $\{1,2,3,4\}$) should be selected. If we further relax the duration threshold from 3 frames to 2 frames, three more object sets ($\{ABC\}$, $\{ABD\}$, $\{ABF\}$) should also be selected. In real-world applications, such queries may further be restricted to a certain number of objects (e.g. at least 2 persons and 1 car to appear jointly).

To formalize the problem, we consider a video feed $V$ as a bounded sequence of frames, i.e., $V = \langle f_0, \ldots,  f_{i}, \ldots, f_{N-1}\rangle$, where $N$ is the total length of the video.
We use $\mathbb{ID}$ to denote the set of possible object identifiers (id) that can appear in the video feed (unique identifiers that are persistent for each detected object across video frames). Typically such identifiers are extracted by the application of object tracking models \cite{DBLP:conf/itsc/KrebsDF17}. We use $\mathbb{L}$ to denote the set of class labels for all objects. Class labels (e.g., human, car) are extracted by the application of object classification/detection models on each frame.

Utilizing object detection and tracking algorithms, we can extract a set of objects $O_i$ from each frame $f_i$, and for each  $o_i \in O_i$, its id $id(o_i)$ and its class label $label(o_i)$. Thus, we can obtain a structured relation $VR$ from the video feed $V$, with schema $(fid, id, class)$, where $fid$ is the frame id $i$ ($i \in [0, N]$), $id\in \mathbb{ID}$ is the object id, and $class\in \mathbb{L}$ is the class label.  Object tracking algorithms ensure that an object has the same $id$ for all frames in which it appears, even under the presence of occlusions \cite{mei2011minimum, Wojke2017simple, wu2013online}. 

Given a window size $w$, we use $VR_{i,w}$ to denote the structured relation obtained from the frames in the current window,  $VR_{i,w} = \langle f_{max(0, i-w)}, \ldots,  f_i\rangle$.
To capture object co-occurrence across frames in $VR_{i,w}$, we introduce a new predicate, $cooc(\mathbb{ID}_q, f)$, to determine whether the objects specified in the set $\mathbb{ID}_q$ are co-occurring in frame $f$.  That is, $cooc(\mathbb{ID}_q, f)$ 
is TRUE iff $\mathbb{ID}_q \subseteq \Pi_{id}$$(\sigma_{fid=f}(VR_{i,w}))$. 
We use $\mathbb{F}$ to denote the set containing all the frame ids in the current window, $\mathbb{F} = \Pi_{fid}(VR_{i,w})$.
We slightly abuse terminology and use the term frame and frame id interchangeably as all frames are unique.

A co-occurrence object set is a set of objects that all appear jointly in certain frames, defined as follows:

\begin{definition}\textbf{Co-occurrence Object Set (COS).}
Given a frame set $\mathbb{F'}$, where $\mathbb{F'} \subseteq \mathbb{F}$, if $\exists \mathbb{ID'} \subseteq  \mathbb{ID}$, $\mathbb{ID'} \ne \emptyset$, such that $\forall f \in \mathbb{F'}$, $cooc(\mathbb{ID'}, f)$ is TRUE, then $\mathbb{ID'}$ is a co-occurrence object set of frame set $\mathbb{F'}$.
\end{definition}

For a given frame set $\mathbb{F'}$, an object set is a maximum co-occurrence object set if and only if none of its supersets is a co-occurrence object set of $\mathbb{F'}$, defined as follows:
 
\begin{definition}\textbf{ Maximum Co-occurrence Object Set\\ (MCOS).}
Suppose $\mathbb{ID'}$ is a co-occurrence object set of frame set $\mathbb{F'}$. $\mathbb{ID'}$ is the maximum co-occurrence object set iff $\forall \mathbb{ID''} \subseteq \mathbb{ID}$, $\mathbb{ID'} \subset \mathbb{ID''}$, such that $\exists f \in \mathbb{F'}$, $cooc(\mathbb{ID''}, f)$ is FALSE.
\end{definition}

In the example mentioned in the beginning of this section, multiple MCOSs exist in the window of 5 frames. For example, object set $\{AB\}$ is an MCOS of frame set $\{1,2,3,4\}$, while $\{ABC\}$ is an MCOS of frame set $\{1,3\}$.

We wish to support arbitrary {\em Conjunctive Normal Form} (CNF) queries over the video feeds. We adopt sliding window query semantics \footnote{Other options are possible, such as tumbling window, and our solution will work equally well with such semantics.}; each query is expressed along with a temporal window context $w$. The window advances every time a new frame is encountered and the query result is evaluated over the most recently encountered $w$ frames.

Each condition $c$ in the CNF query is of the form $x$ $\theta$ $n$, where $x$ is the class label of an object, $n$ is an integer value and $\theta \in \{\le, =, \ge \}$. For example, $'car' \ge n$ specifies the number of cars to be greater or equal to value $n$. Bounded range queries can also be constructed based on conjunctions, e.g., $'person' \le 5 \land 'person' \ge 3$ queries whether the number of people is between 3 and 5. Since objects can appear and disappear in the visible screen due to occlusion, we introduce a {\em duration parameter} $d$ to each query, where $0 \le d \le w$. Without occlusion an object should appear in the visible screen for $w$ frames; with occlusion, however, the number of appearances can be less than $w$. The duration parameter captures this and specifies the minimum number of frames an object (more generally an MCOS) should appear in the current window.  

Given a duration parameter $d$ and a window size $w$, a condition $c$ will be evaluated as TRUE iff there are frames $\mathbb{F}' \subseteq \mathbb{F}$, such that the following conditions are satisfied.

\begin{itemize}[noitemsep, nolistsep]
    \item $|\mathbb{F'}| \ge d$.
    \item There exists some maximum co-occurrence object set $\mathbb{ID'} \subset \mathbb{ID}$, such that $\forall f \in \mathbb{F'}$: 
    
    $G_{count(id)}(\Pi_{id}(\sigma_{class=x \land cooc(\mathbb{ID'}, f)}(VR_{i,w}))) $ $\theta$ $n$. 
\end{itemize}

Since a query $q$ involves multiple conditions, the query result is a set of MCOSs, which includes all the possible frame sets where the overall CNF expression is evaluated to TRUE.

\section{Overview} \label{sect:overview}

\begin{figure}
    \centering
    \includegraphics[width=0.35\textwidth,height=5cm]{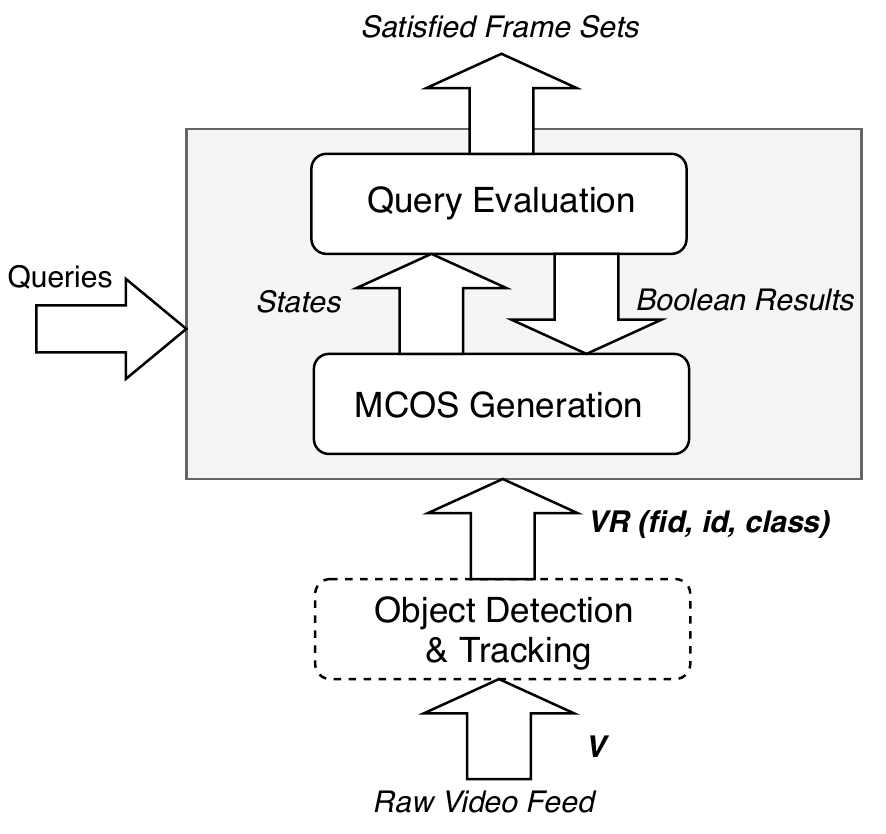}
    \caption{Overview}
    \label{fig:overview}
\end{figure}

The architecture of our overall solution is shown in  Figure~\ref{fig:overview}. Our solution consists of three modules, namely Object Detection and Tracking, Maximum Co-occurrence Object Set (MCOS) Generation, and Query Evaluation.

In the Object Detection/Tracking module, the input video feed, $V$, is processed to extract object ids and associated class labels. This module  consumes the raw video feed (frames) and produces a structured relation $VR$. In this module we deploy state-of-the-art object detection and tracking algorithms. Specifically, we use Faster R-CNN \cite{DBLP:journals/pami/RenHG017} as the object detection algorithm and Deep Sort \cite{Wojke2017simple} as the tracking algorithm.
The module is designed to be `plug and play' such that any algorithm from the computer vision community can be adopted as it becomes available. The output of this layer is a relation of tuples with schema $(fid, id, class)$ indicating that an object with identifier $id$ is of a specific $class$ and is detected at frame $fid$. This relation of tuples is relayed to the MCOS Generation module.

MCOS Generation is responsible for generating and materializing MCOS for query evaluation. 
The module computes MCOS incrementally by reusing intermediate results from the previous window.
It also maintains sufficient information to decide when an MCOS should be removed as the window slides.
To achieve this, the module utilizes information from queries, such as the window size $w$, duration condition $d$, and classes involved in the query. Queries with the same window size will be grouped together since they may share the same MCOS.
Objects with {\em class} not requested by any query may be dropped from $VR$ upon entering the MCOS generation module.
We introduce the notion of a \textit{State} to represent the basic unit that materializes an intermediate set of objects during generation, defined as follows:
\begin{definition}\textbf{State.} Let $\mathbb{F}$ be the set of all frames in the current window and $\mathbb{ID}$ the set of all object identifiers. A state $s=(\mathbb{ID_{\text{s}}}, \mathbb{F_\text{s}})$, where $\mathbb{ID_{\text{s}}} \subseteq \mathbb{ID}$, $\mathbb{F_\text{s}} \subseteq \mathbb{F}$, and $\mathbb{ID}_{s}$ is a COS of frame set $\mathbb{F}_{s}$.
\end{definition}
Each state consists of two types of information: an object set and the frames in which the objects appeared. Note that the object set may not be an MCOS in some states. A state, $s$, is a \textit{valid state} iff $\mathbb{ID}_{s}$ is an MCOS of frame set $\mathbb{F}_{s}$. Otherwise, it is an \textit{invalid state}. 
MCOS Generation should only produce valid states to Query Evaluation since queries are evaluated on MCOS only.

We use $\mathbb{S}$ to represent the set of all possible states in the current window. The object set in a state $s$ is an MCOS as long as $\forall s' \in \mathbb{S}$, if $\mathbb{F}_{s'} = \mathbb{F}_{s}$, then $\mathbb{ID}_{s'} \subset \mathbb{ID}_{s}$.

For each valid state, $s$, processing on the MCOS can be divided into three steps: 
\begin{enumerate}[noitemsep, nolistsep]
    \item Check against the duration parameter $d$. The query could be evaluated as TRUE only if $|\mathbb{F}_{s}| \ge d$. 
    \item Conditions in CNF expressions are aggregates evaluated on class labels (e.g., `car $\ge$ 3'). Thus, objects have to be aggregated after classification based on their class labels.
    \item Once all conditions of an expression have been processed, we can move on to evaluate CNF expressions. For this step we adapt CNF evaluation algorithms \cite{whang2009indexing}.
\end{enumerate}

In MCOS Generation module, simple optimizations can reduce number of states evaluated. 
For example testing the duration parameter $d$ can be conducted when the MCOS are generated (a form of push down optimization).
Namely, a state $s$ containing an MCOS can be pruned if $|\mathbb{F}_s| < d$ ({\em unsatisfied states}). 
States with MCOS that are not pruned ({\em satisfied states}) will be relayed to the Query Evaluation module. 
Since we are evaluating queries over a window, some states that are not satisfied at frame $i$ may be satisfied in the future as the window slides. 
Subsequently state maintenance, requires succinct book-keeping to assume both efficient execution and correctness.

Since queries are defined and evaluated on windows, we maintain states for each window and update these states incrementally.
Queries are analyzed first, objects that are not required for the given query will be discarded immediately.
States can be generated, satisfied, and invalid as new frames arrive. 
We present the MCOS Generation module in the next section. Then we discuss the Query Evaluation module in Section \ref{sect:eval}.

\section{MCOS Generation} \label{sect:state}

\subsection{Objective}
For each new arriving frame, 
the objective of MCOS generation is to maintain a set of states $\mathbb{S}$ and select a subset $\mathbb{S'} \subseteq \mathbb{S}$, where $\forall s' \in \mathbb{S'}$, $s'$ is a valid and satisfied state, while generating $|\mathbb{S'}|$ efficiently.
We refer to this procedure as \textbf{state maintenance}. 

\subsection{Marked Frame Set} \label{sect:mPS}

\subsubsection{Objective}

An MCOS can be obtained by intersecting two or more object sets from the current window. 
Generating MCOS by assessing object intersections across object sets of frames in each window can be very expensive. Since some MCOSs are shared between windows, we use states to keep the information of an MCOS, and derive new states from existing states computed in the previous window.

When the window slides, frames that are removed from the window are denoted as \textbf{expired}. For each existing state, we remove the expired frame ids from their frame sets. A state can be removed from the collection of states in the current window only if the frame set is empty.

\subsubsection{A First Attempt} \label{sect:attempt}

As a  first attempt to state maintenance, consider the following approach. 
Let the set of all states in the current window be $\mathbb{S}$, and assume that the last frame id in the current window is $i$.
When a new frame with id $i'=i+1$ arrives with associated object set $\mathbb{ID}_{o_{i'}}$, 
the following steps are conducted during state maintenance:

\begin{enumerate}[noitemsep]
    \item Append the new frame id: $\forall s \in \mathbb{S}$, if $\exists s' \in \mathbb{S}$, $\mathbb{ID}_{s} \cap \mathbb{ID}_{o_{i'}} = \mathbb{ID}_{s'}$, the new frame id can be simply appended to the existing state $s'$.
    \item Create new states: new states will be created as follows:
    \begin{enumerate}[noitemsep, nolistsep]
        \item For each state $s \in \mathbb{S}$, let $\mathbb{ID'} = \mathbb{ID}_{s} \cap \mathbb{ID}_{o_{i'}}$. If $\forall s' \in \mathbb{S}$, $\mathbb{ID'} \ne \mathbb{ID}_{s'}$, then a new state $ns$ is created with object set $\mathbb{ID'}$. In this case, $ \mathbb{F}_{s} \subset \mathbb{F}_{ns}$.
        \item If $\forall s \in \mathbb{S}$, $\mathbb{ID}_{s} \cap \mathbb{ID}_{o_{i'}} \ne \mathbb{ID}_{o_{i'}}$. This is the case where no state has the same object set as the arriving frame. A new state, $ns$, is created with the new frame id as the only element in $\mathbb{F}_{ns}$.
    \end{enumerate}
    
\end{enumerate}

\begin{ex}
\label{example-1}
Consider generating MCOS for single query with duration parameter $d=3$ frames and a window size of $w=4$ frames, as shown in Table \ref{table:example}. Each letter represents a detected object. The Frame and Object Set columns provide the frame id and associated object set obtained from the video feed. The column States depicts the set of states maintained in the current window, where satisfied states are underlined. Satisfied MCOSs are listed in the EXP(Expected) column. 

If we maintain states by simply keeping the frame set of each object set, the states are shown in the \text{\em Frame Set States} column in the table.

A new state with object set $\{B\}$ is created at frame 0 since there are no existing states. When frame 1 arrives, according to the above steps: we first append the new frame id 1 to state $\{B\}$; then a new state $\{ABC\}$ with frame set $\{1\}$ is created (step 2.b). With the same procedure, when frame 2 arrives, the new frame id is added to state $\{B\}$. Two new states are created: $\{AB\}$ is created by step 2.a, since $\{ABC\} \cap \{ABDF\} = \{AB\}$; $\{ABDF\}$ is created by step 2.b, since no existing states have the same object set as $\{ABDF\}$. By keep doing so, we obtain the complete state set in the current window for each frame.
Note that the object set $\{A\}$ is not generated and maintained in any window. This is because $\{A\}$ is not an MCOS since $A$ always co-occurs with object $B$ (in frames 1-4). 

No object set is satisfied in frame 0 and 1 since the total number of frames is less than 3. At frame 2, frame set $\{0,1,2\}$ generates one object set $\{B\}$, which is an MCOS. When frame 3 arrives, frame set $\{1, 2, 3\}$ produces another MCOS $\{AB\}$.
Note that in frame 4, the state with object set $\{B\}$ is satisfied (size of the frame set $\ge 3$); however, it is not an MCOS of the current frame set. The only MCOS in the current window is $\{AB\}$, which appeared in all frames. 

\end{ex}

\begin{table}[h]
\begin{small}
	\caption{A Video Segment Example}
	\label{table:example}
	\begin{center}
		\begin{tabular}{ |c|c|c|c| } 
			\hline
			FID & Objects & States & EXP \\ 
			\hline
			0& $\{B\}$ & $(\{B\}, \{0\})$ & $\emptyset$\\ 
			\hline
			1& $\{ABC\}$ & $(\{B\}, \{0, 1\})$; $(\{ABC\}, \{1\})$ & $\emptyset$\\
			\hline
			& & \underline{$(\{B\}, \{0, 1, 2\})$}; $(\{ABC\}, \{1\})$ & $\{B\}$ \\
			2& $\{ABDF\}$ &  $(\{AB\}, \{1, 2\})$; $(\{ABDF\}, \{2\})$ &  \\
			\hline
			& &\underline{$(\{B\}, \{0, 1, 2, 3\})$}; $(\{ABC\}, \{1, 3\})$ & $\{B\}$\\
			3&  $\{ABCF\}$ & \underline{$(\{AB\}, \{1, 2, 3\})$}; $(\{ABDF\}, \{2\})$  & $\{AB\}$\\
			&  & $(\{ABF\}, \{2,3\})$;$(\{ABCF\}, \{3\})$ & \\
			\hline
			&  & \underline{$(\{B\}, \{1, 2, 3, 4\})$}; $(\{ABC\}, \{1, 3\})$ & \\ 
			&  & \underline{$(\{AB\}, \{1, 2, 3, 4\})$} & $\{AB\}$\\
			4& $\{ABD\}$ & $(\{ABDF\}, \{2\})$;$(\{ABF\}, \{2,3\})$  &\\
			&  & $(\{ABD\}, \{2,4\})$;$(\{ABCF\}, \{3\})$ & \\ 
			\hline
		\end{tabular}
	\end{center}
\end{small}
\end{table}
In the above example, the only MCOS of frame set $\{1, 2, 3,\\ 4\}$ in the window of frame $4$ is $\{AB\}$. However, we are keeping two states ($\{AB\}$ and $\{B\}$) with the same frame set. $\{B\}$ is not an MCOS of the frame set, thus the corresponding state is an invalid state, and should not be processed further and/or evaluated. If we maintain it will impose unnecessary overhead and it will be processed against new arriving frames.

\subsubsection{Key Frame Set} \label{sect:kmPS}

Such invalid states (shown in Section \ref{sect:attempt}) should be removed from the state set so we introduce a mechanism to do so. 
Comparing frame sets across states can be costly since we don't know which two states have to be compared without any indexes.
To avoid such computations we maintain information regarding when a state becomes invalid. Once a state is invalid, it is removed from the state set immediately.

Consider the state $(\{AB\}, \{1,2,3,4\})$ from frame 4 in Table \ref{table:example}. Removing frame 3 from the frame set, the resulting state $(\{AB\}, \{1,2,4\})$ is still valid since  $\{AB\}$ is still an MCOS of the new frame set. 
However, if we remove both frames 1 and 3, the new state $(\{AB\},\{2, 4\})$ will become invalid since the MCOS for frame set $\{2,4\}$ is $\{ABD\}$.
Thus, for a valid state $s$, a subset of $\mathbb{F}_{s}$ determines the validity of a state. If we identify these frames, we can assert whether a state is valid or not.
Instead of waiting until every frame expires, a state can be pruned as long as all frames in a key frame set expired. This observation can readily be used in Example \ref{example-1} to prune $\{B\}$.
To formally define these frames we introduce the notion of Key Frame Sets:

\begin{definition}
\textbf{Key Frame Set}. Let $s$ be a state with MCOS $\mathbb{ID}_{s}$ of the frame set $\mathbb{F}_{s}$. For some frame set $\mathbb{KF}_{s}$, where $\mathbb{KF}_{s} \subseteq \mathbb{F}_{s}$. $\mathbb{KF}_{s}$ is a key frame set iff:
\begin{enumerate}[noitemsep,nolistsep]
    \item $\mathbb{ID}_{s}$ is not an MCOS of frame set $\mathbb{F}_s \setminus \mathbb{KF}_{s}$.
    \item $\forall kf \in \mathbb{KF}_{s}$, $\mathbb{ID}_{s}$ is an MCOS of frame set $\mathbb{F}_s \setminus \mathbb{KF}_{s} \cup \{kf\}$.
\end{enumerate}

Each frame $kf \in \mathbb{KF}_{s}$ is a \textbf{Key Frame} in the key frame set.
\end{definition}

That is, a Key Frame Set of state, $s$, is a set of frames with the following properties: if we remove all frames in the key frame set from a state $s$, then $\mathbb{ID}_{s}$ will not be an MCOS; if we add one of the key frames back, $\mathbb{ID}_{s}$ will be an MCOS. In example \ref{example-1} if we remove frames 1 and 3 from state $\{AB\}$, another state $(\{ABD\}, \{2,4\})$ has the same frame set. Since $\{AB\} \subset \{ABD\}$, the object set $\{AB\}$ is no longer an MCOS. Thus, $\{1,3\}$ is one of the key frame sets. For the same reason $\{2,4\}$ and $\{1,4\}$ are also key frame sets. Since frames are expired in temporal order, state $\{AB\}$ will be invalid after frame $3$ expires if no other frames are added to the state.

For a state, $s$, that is generated by some frame, $k$, directly, the frame $k$ is a key frame of $s$ (as in case 2 of Section \ref{sect:attempt}).
Even when the window slides, if the frame, $k$, remains in the window, the state, $s$, will always be valid.
To highlight a key frame in a frame set, we add a mark (*) before the frame id. A frame set with such marked frames will be denoted as \textbf{Marked Frame Set}.

Suppose a new frame $i$ arrives, and a state $ns$ is created by frame $i$ (Case 2 of Section \ref{sect:attempt}).
For each state $s \in \mathbb{S}$, frames will be marked according to the {\bf Frame Marking Rules}, as follows:

\begin{enumerate}[noitemsep, nolistsep]
    \item For the state $ns$ that is created by frame $i$ (Step 2.b in Section \ref{sect:attempt}), frame $i$ will always be marked in $\mathbb{F}_{ns}$.
    \item If $\exists s' \in \mathbb{S}$, $\mathbb{ID}_{s'} \cap \mathbb{ID}_{ns} = \mathbb{ID}_{s}$, $\forall f \neq i \in \mathbb{F}_{s'}$, if $f$ is marked in $\mathbb{F}_{s'}$, then $f$ will also be marked in $\mathbb{F}_{s}$.
    
\end{enumerate}

By applying the above rules, the following theorem can be proved:

\begin{theorem}
The set of marked frames in each state is a Key Frame Set.
\label{th:kPS}
\end{theorem}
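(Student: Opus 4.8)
The plan is to prove Theorem~\ref{th:kPS} by induction on the number of frames processed, showing that the invariant ``the marked frames of every maintained state form a key frame set'' is preserved by the Frame Marking Rules each time a new frame arrives. First I would set up the base case: when the first frame arrives, the only state created is $ns$ with $\mathbb{F}_{ns} = \{i\}$ and frame $i$ marked (Rule 1); here $\mathbb{F}_s \setminus \mathbb{KF}_s = \emptyset$, and since a COS over the empty frame set is trivially not ``maximal'' in the required sense (there is no frame witnessing any object set), condition~1 of the Key Frame Set definition holds, and condition~2 holds because over the single frame $\{i\}$ the object set $\mathbb{ID}_{ns}$ is exactly the detected object set, hence an MCOS. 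More generally I would observe that any newly created state from Case 2.b has its generating frame $k$ marked, and that whenever $k$ is in the window the state is valid because over $\{k\}$ alone (or any frame set containing $k$ whose intersection equals $\mathbb{ID}_s$) the object set is an MCOS.

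Next, for the inductive step I would assume the invariant holds for all states in $\mathbb{S}$ before frame $i$ arrives, and analyze what the marking rules do. There are three kinds of states after the update: (i) states $s'$ whose object set already appeared, to which frame $i$ is merely appended (Step~1 of Section~\ref{sect:attempt}) — here the set of marked frames is unchanged, and I must argue that adding an unmarked frame $i$ to $\mathbb{F}_{s'}$ keeps the old marked set a key frame set, which follows because enlarging $\mathbb{F}_{s'}$ only makes it ``easier'' for $\mathbb{ID}_{s'}$ to stay an MCOS, so both conditions transfer; (ii) the brand-new state $ns$ from Case 2.b, handled by Rule~1 as in the base case; (iii) states $s$ affected by Rule~2, where $s = \mathbb{ID}_{s'} \cap \mathbb{ID}_{ns}$ for some $s'$, and the marks of $s'$ (other than $i$) are copied onto $s$. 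For case (iii) I would show both directions: if we delete all marked frames of $s$ (which, by Rule~2, are exactly the non-$i$ marked frames of $s'$, intersected appropriately with $\mathbb{F}_s$), then $\mathbb{ID}_s$ is not an MCOS because on $\mathbb{F}_s \setminus \mathbb{KF}_s$ the surviving frames all contain some strict superset of $\mathbb{ID}_s$ — inherited from the fact that on $\mathbb{F}_{s'} \setminus \mathbb{KF}_{s'}$ the object set $\mathbb{ID}_{s'} \supsetneq \mathbb{ID}_s$ co-occurs; and conversely adding back any single key frame $kf$ restores maximality because it did so for $s'$ and the intersection structure is monotone.

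I expect the main obstacle to be case (iii): precisely tracking how the key frame set of the \emph{child} state $s$ relates to that of the \emph{parent} $s'$, including subtleties such as (a) frames of $\mathbb{F}_{s'}$ that are not in $\mathbb{F}_s$ (frames where $\mathbb{ID}_{s'}$ co-occurs but $\mathbb{ID}_s \subsetneq \mathbb{ID}_{s'}$ does not hold as an intersection constraint), (b) the role of frame $i$ itself, which is marked in $ns$ but deliberately \emph{not} propagated by Rule~2, and (c) the possibility that $s$ already existed with its own marks from an earlier parent, so the new marks must be shown consistent (an intersection/union of key frame sets is again a key frame set, or at least the maintained set remains one). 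I would handle (a)–(c) by proving a clean structural lemma first: for a state $s$, a frame $f \in \mathbb{F}_s$ is ``non-essential'' iff $\mathbb{ID}_s$ remains an MCOS of $\mathbb{F}_s \setminus \{f\}$ together with the rest, and then show the marked set equals the complement of the maximal non-essential subset, reducing everything to monotonicity of the $cooc$ predicate under shrinking/growing frame sets. Once that lemma is in place, all three state-types fall out by routine set manipulation; the duration-parameter and window-sliding aspects play no role here since the theorem is purely about the static correctness of the marks.
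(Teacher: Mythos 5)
Your overall strategy---induction on the sequence of arriving frames, a case split on how each state is touched by the new frame, and a reduction of the hard case to how key-frame-ness propagates from a generating (parent) state to the intersection (child) state---is essentially the route the paper's proof takes. The paper pivots its case analysis on whether the maximum frame id of a state is marked, and in the unmarked case asserts that the key frame set of an intersection-generated state is the union of the key frame sets of its generators, arguing both conditions of the definition by contradiction; your case (iii) plus the ``non-essential frames'' lemma is the same idea in different clothing.

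There is, however, a concrete gap in your case (i), and it is not peripheral. You claim that appending an unmarked frame $i$ to a state $s$ preserves both conditions of the Key Frame Set definition ``because enlarging $\mathbb{F}_{s}$ only makes it easier for $\mathbb{ID}_{s}$ to stay an MCOS.'' That monotonicity points the wrong way for condition~1: condition~1 requires that $\mathbb{ID}_{s}$ is \emph{not} an MCOS of $\mathbb{F}_{s} \setminus \mathbb{KF}_{s}$, i.e., that some strict superset of $\mathbb{ID}_{s}$ co-occurs on all residual frames, and adding frame $i$ to the residual set can destroy every such witness, since the witness need not be contained in $\mathbb{ID}_{o_i}$ (only $\mathbb{ID}_{s}$ itself is). Take $f_0=\{ABC\}$, $f_1=\{ABD\}$, $f_2=\{ABCE\}$. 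After frame~1 the state $(\{AB\},\{*0,1\})$ has marked set $\{0\}$, with witness $\{ABD\}$ on the residual $\{1\}$. Frame~2 is merely appended to this state, yet $\{ABD\}\not\subseteq\{ABCE\}$ and no strict superset of $\{AB\}$ co-occurs on $\{1,2\}$, so if the marks stayed put, as your case (i) assumes, condition~1 would now fail. In fact the marks do \emph{not} stay put: Rule~2 fires with $s'=\{ABD\}$ (since $\{ABD\}\cap\{ABCE\}=\{AB\}$) and marks frame~1 in the state $\{AB\}$, precisely to repair condition~1. So ``append'' states and ``intersection'' states are not disjoint cases, and the append case cannot be discharged by monotonicity---it needs the same careful treatment you reserve for case (iii). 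Carrying the example one step further, the resulting marked set $\{0,1\}$ of $(\{AB\},\{0,1,2\})$ satisfies condition~1 but violates condition~2 for $kf=0$ (the MCOS of $\{0,2\}$ is $\{ABC\}$), while $\{1\}$ and $\{0,2\}$ are both legitimate key frame sets; this shows that key frame sets are not unique, that your proposed lemma (``the marked set equals the complement of the maximal non-essential subset'') is not well posed, and that any rigorous write-up must either confront this or retreat to the weaker, temporally ordered expiry property the algorithm actually relies on.
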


For a state $s$, if at least one frame is marked, then it has at least one key frame (according to theorem \ref{th:kPS}). According to the definition of Key Frame Set, state $s$ contains an MCOS. Thus, state $s$ is a valid state if at least one frame in $\mathbb{F}_{s}$ is marked. A state is eliminated from the current collection of states once all marked frames in the
frame set expire. 

We show in the following example that
the marked frame set can reduce the maintenance cost significantly while producing the correct result. 

\begin{table}[h]
\begin{small}
	\caption{Example with Marked Frame Set}
	\label{table:mfs}
	\begin{center}
		\begin{tabular}{ |c|c|c| } 
			\hline
			FID & Objects & States \\
			\hline
			0& $\{B\}$ & $(\{B\}, \{*0\})$ \\
			\hline
			1& $\{ABC\}$ & $(\{B\}, \{*0, 1\})$; $(\{ABC\}, \{*1\})$ \\
			\hline
			& & \underline{$(\{B\}, \{*0, 1, 2\})$}; $(\{ABC\}, \{*1\})$ \\
			2& $\{ABDF\}$ &  $(\{AB\}, \{*1,2\})$; $(\{ABDF\}, \{*2\})$ \\
			\hline
			& & \underline{$(\{B\}, \{*0, 1, 2, 3\})$}; $(\{ABC\}, \{*1, 3\})$ \\
			3&  $\{ABCF\}$ & \underline{$(\{AB\}, \{*1, 2, 3\})$}; $(\{ABDF\}, \{*2\})$ \\
			&  & $(\{ABF\}, \{*2,3\})$;$(\{ABCF\}, \{*3\})$ \\
			\hline
			&  & \sout{$(\{B\}, \{1, 2, 3, 4\})$}; $(\{ABC\}, \{*1, 3\})$ \\ 
			&  & \underline{$(\{AB\}, \{*1, 2, *3, 4\})$} \\
			4& $\{ABD\}$ & $(\{ABDF\}, \{*2\})$; $(\{ABF\}, \{*2,3\})$\\ 
			&  & $(\{ABD\},\{*2,*4\})$; $(\{ABCF\}, \{*3\})$ \\ 
			\hline
		\end{tabular}
	\end{center}
\end{small}
\end{table}

\begin{ex}
    Consider the example in Table \ref{table:example} again. State with object set $\{B\}$ will be generated at frame 0, while 0 will be marked in the frame set. From frame 1 to 3, the new frame id will be appended to the existing state. When frame 4 arrives, frame 0 will be removed from the set along with the mark. Since all the marked frames have been removed from the state, the state is invalid and will be removed. Thus, at frame 4, only state $\{AB\}$ will be selected as the result.
\end{ex}

\subsubsection{The MFS Approach} \label{sect:mfs-app}

Following the Frame Marking Rules (Section \ref{sect:kmPS}), we propose the MFS approach, which utilizes a set of states to maintain all possible object sets along with their 
Marked Frame Set. 
In particular, in MFS, we maintain a set of existing states from the previous window, and when a new frame arrives, we compute the object set intersection between each existing state and the new frame and mark the frame set according to the Frame Marking Rules.

\subsection{Strict State Graph}  \label{sect:ssg}

Although MFS provides the ability to remove unnecessary states early, all states still have to be processed for each arriving frame. In this section, we further explore the relationship between states and provide an alternative approach to provide better pruning power.
 
\subsubsection{Principal States}

We first introduce \textbf{Principal State} to denote a subset of the states in the current window, defined as follows:

\begin{definition}
\textbf{Principal State}. 
For a window size $w$, let $V_{i,w} = <f_{max(0, i-w)}, ..., f_i>$ be the corresponding video frames.
A state $s$ in the current window is a principal state iff $\exists f_j \in V_{i,w}$, such that $\mathbb{ID}_{f_j} = \mathbb{ID}_{s}$.
\end{definition}

A state may become a principal state when a new frame, $k$, arrives: a) if $\exists s \in \mathbb{S}$, such that $s$ shares the same object set as the new frame $k$, namely $\mathbb{ID}_{f_k} = \mathbb{ID}_s$ then $s$ will becomes a principal state. 2) if none of the states in $\mathbb{S}$ has the same object set as the new frame, a new state, $s$, will be created with object set that of frame $k$ and will become a principal state. We will say that a principal state, $s$, is created from frame $k$, and that frame $k$ created principal state $s$.
For the principal state, $s$, the frame that created it, $k$, is always a key frame of state $s$.
A principal state will cease being one, once all the frames that created it expire from the current window. 

For example, at frame 0 in Table \ref{table:mfs}, state $\{B\}$ is the only state and is generated by the arriving frame directly, thus it is a principal state. For the same reason, principal state $\{ABC\}$ is created when frame 1 arrives, while another principal state $\{ABDF\}$ is created by frame 2. At frame 3, there are four principal states in total: $\{B\}$, $\{ABC\}$, $\{ABDF\}$ and $\{ABCF\}$. When frame 4 arrives, frame 0 expires from the current window, thus state $\{B\}$ is no longer a principal state.

Principal states carry an important property. Consider a window on the relation; starting with one frame, the only existing state (that was created from the frame) is a principal state. If we add one more frame to the window, a new state may be created based on the existing principal state and the new principal state. By keep on adding more frames, new states can always be created based on the existing states and the new principal state. Thus, every state in the current window is generated by principal states {\em directly or indirectly}.

\subsubsection{Strict State Graph}

To capture the relationship of how the states are generated, we use \textbf{State Graph} to connect those states, defined as follow:

\begin{definition}
\textbf{State Graph(SG)}. A state graph is a directed graph, which can be represented as $G=(\mathbb{S},\mathbb{E})$, where $\mathbb{S}$ is the set of all states and $\mathbb{E}$ is the set of all edges. For each edge $e_{ij} \in \mathbb{E}$, $e_{ij}=(s_i, s_j)$, where $s_i,s_j \in \mathbb{S}$, it means state $s_j$ is generated from state $s_i$.
\end{definition}

For any state graph $G$, the following holds:

\begin{prop}
	$\forall s_i,s_j \in \mathbb{S}$, if $\exists e_{ij}=(s_i,s_j) \in \mathbb{E}$, then $\mathbb{ID}_{s_j} \subset \mathbb{ID}_{s_i}$.
\end{prop}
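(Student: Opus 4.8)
The plan is to prove Property~2 (the claim that for any edge $e_{ij}=(s_i,s_j)\in\mathbb{E}$ we have $\mathbb{ID}_{s_j}\subset\mathbb{ID}_{s_i}$) directly from the way edges are introduced during state maintenance, together with the characterization of how states are created in Section~\ref{sect:attempt}. An edge $e_{ij}$ is added precisely when $s_j$ is generated from $s_i$, and the generation mechanism is exactly the one described in Step~2 of Section~\ref{sect:attempt}: when a new frame $i'$ arrives with object set $\mathbb{ID}_{o_{i'}}$, for each existing state $s_i$ we form $\mathbb{ID}' = \mathbb{ID}_{s_i}\cap\mathbb{ID}_{o_{i'}}$, and if no current state already has object set $\mathbb{ID}'$, a new state $s_j = ns$ is created with $\mathbb{ID}_{s_j}=\mathbb{ID}'$ and an edge $(s_i,s_j)$ is recorded. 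So the heart of the argument is: whenever an edge is created, the child's object set is obtained as an intersection of the parent's object set with an incoming frame's object set that does \emph{not} equal the parent's object set.

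First I would fix an arbitrary edge $e_{ij}=(s_i,s_j)$ and identify the moment it was created, say when frame $i'$ with object set $\mathbb{ID}_{o_{i'}}$ arrived. By construction $\mathbb{ID}_{s_j} = \mathbb{ID}_{s_i}\cap \mathbb{ID}_{o_{i'}}$, so immediately $\mathbb{ID}_{s_j}\subseteq\mathbb{ID}_{s_i}$. To upgrade $\subseteq$ to the strict $\subset$, I would use the fact that the ``append'' branch (Step~1 of Section~\ref{sect:attempt}) is what handles the case $\mathbb{ID}_{s_i}\cap\mathbb{ID}_{o_{i'}} = \mathbb{ID}_{s_i}$ (i.e.\ $\mathbb{ID}_{s_i}\subseteq\mathbb{ID}_{o_{i'}}$): in that situation no new state and no new edge is created, the frame id is simply appended to $s_i$. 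Hence a genuine child state $s_j$ with an edge from $s_i$ can only arise when $\mathbb{ID}_{s_i}\cap\mathbb{ID}_{o_{i'}}\subsetneq\mathbb{ID}_{s_i}$, which is exactly $\mathbb{ID}_{s_j}\subsetneq\mathbb{ID}_{s_i}$. I would also note the property is stable as the window slides: expiration only removes frames from $\mathbb{F}_{s}$ and possibly deletes states/edges wholesale, but never alters an object set, so once $\mathbb{ID}_{s_j}\subset\mathbb{ID}_{s_i}$ holds it continues to hold for as long as the edge exists.

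The main obstacle I anticipate is a bookkeeping one rather than a mathematical one: the excerpt states the edge semantics (``$e_{ij}$ means $s_j$ is generated from $s_i$'') but the Strict State Graph may add edges in slightly more situations than the plain two-case creation of Section~\ref{sect:attempt} — for instance, an edge could be recorded toward an \emph{already-existing} state whose object set happens to equal $\mathbb{ID}_{s_i}\cap\mathbb{ID}_{o_{i'}}$, rather than only toward freshly created states. I would therefore phrase the key lemma generally: for every edge, regardless of whether the target was newly created or pre-existing, the target's object set equals $\mathbb{ID}_{s_i}\cap\mathbb{ID}_{o_{i'}}$ for the generating frame $i'$, and an edge is recorded only when this intersection is a \emph{proper} subset of $\mathbb{ID}_{s_i}$ (equality being dispatched by the append branch, which produces no ``generated from'' relationship). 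Granting that, strictness is immediate. A secondary, minor point to address is the degenerate case where $\mathbb{ID}'=\emptyset$; since states require non-empty object sets (a COS is non-empty by definition), such an intersection produces no state and hence no edge, so it does not threaten the claim. With these observations in place the proof is a short case analysis on how $s_j$ came to be a child of $s_i$, and I would present it as such, leading with the intersection identity and closing with the strictness dichotomy.
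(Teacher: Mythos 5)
The paper itself states this Property without any accompanying proof (the appendix proves only the numbered Theorems), so there is nothing to compare your argument against; judged on its own, your proof is essentially correct and is the natural justification. Your central observation --- that every edge target's object set arises as an intersection $\mathbb{ID}_{s_i}\cap\mathbb{ID}_{o_{i'}}$ with the source's object set, and that the equality case is dispatched to the ``append frame id'' branch which creates no edge --- is exactly what the Graph Maintenance Procedure encodes (steps 3 versus 4.b), and your remarks on stability under expiration and on the empty-intersection degenerate case are both apt. Two bookkeeping cases deserve explicit mention beyond your general lemma: first, Algorithm~1 also adds edges $(s, ns)$ \emph{into} the new principal state when $inter=\mathbb{ID}_{ns}$, and CNPS adds edges $(ns, c.s)$ \emph{out of} it; strictness there follows because the branch $inter=\mathbb{ID}_{s}$ is tested first (so $\mathbb{ID}_{ns}\neq\mathbb{ID}_{s}$ when the edge is added) and because distinct states in $\mathbb{S}$ carry distinct object sets. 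Second, the ``Modifying Existing Edges'' step rewires an edge $(s_i,s_j)$ to $(s',s_j)$ where $\mathbb{ID}_{s_j}\subseteq\mathbb{ID}_{s'}$; properness again follows from uniqueness of object sets among states, since $s'$ is only created when no existing state (in particular not $s_j$) already has its object set. With those two cases folded into your case analysis, the proof is complete.
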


\begin{figure}[t!]
    \centering
    \begin{subfigure}[t]{0.25\textwidth}
        \centering
        \includegraphics[height=50pt]{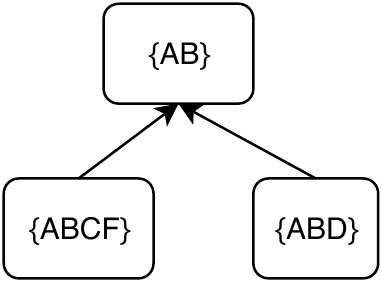}
        \caption{Before}
        \label{fig:edge-ex:before}
    \end{subfigure}%
    ~ 
    \begin{subfigure}[t]{0.25\textwidth}
        \centering
        \includegraphics[height=50pt]{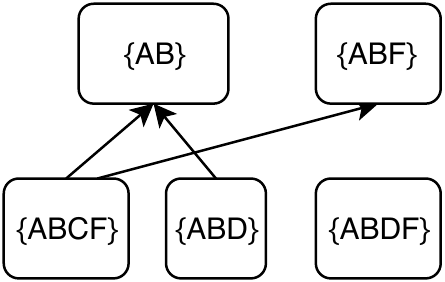}
        \caption{After (1)}
        \label{fig:edge-ex:after1}
    \end{subfigure}
    \begin{subfigure}[t]{0.25\textwidth}
        \centering
        \includegraphics[height=50pt]{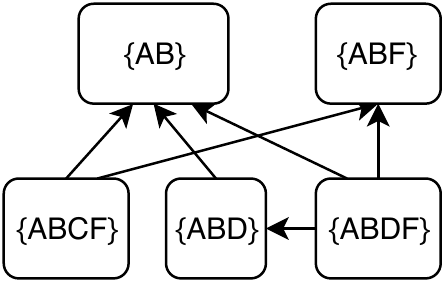}
        \caption{After (2)}
        \label{fig:edge-ex:after2}
    \end{subfigure}%
    ~ 
    \begin{subfigure}[t]{0.25\textwidth}
        \centering
        \includegraphics[height=50pt]{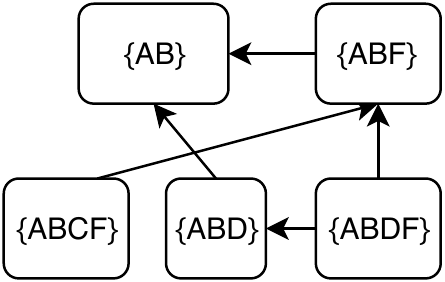}
        \caption{Expected}
        \label{fig:edge-ex:expected}
    \end{subfigure}
    \caption{Adding New Edges to SSG}
    \label{fig:edge-ex}
\end{figure}

Suppose we have two principal states, \{ABD\} and \{ABCF\}, the State Graph generated is shown as Figure \ref{fig:edge-ex:before}. According to Property \ref{prop:prune}, if a new frame with objects \{EG\} arrives, the state with object set \{AB\} can be omitted due to empty object intersection between the new frame and any existing principal states.

Assume a new frame with \{ABDF\} arrives. A new state \{ABF\} can be generated by computing object intersection between \{ABCF\} and the new frame (Figure \ref{fig:edge-ex:after1}). However, if we simply connect states as the way they are generated, we will obtain a graph with unnecessary edges (Figure \ref{fig:edge-ex:after2}). To remove these unnecessary edges, we propose to generate graphs maintaining the following property:

\begin{prop}
	For $\forall s \in \mathbb{S}$, suppose $\mathbb{E}_{s} \subset \mathbb{E}$ is the edge set containing all of the edges starting from $s$. $\forall s_i, s_j \in \mathbb{S}$, where $(s, s_i), (s, s_j) \in \mathbb{E}_s$, we always have $\mathbb{ID}_{s_i} \cap \mathbb{ID}_{s_j} \ne \mathbb{ID}_{s_i}$ and $\mathbb{ID}_{s_i} \cap \mathbb{ID}_{s_j} \ne \mathbb{ID}_{s_j}$.
	\label{prop:ssg}
\end{prop}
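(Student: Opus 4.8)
The plan is to read Property~\ref{prop:ssg} not as a statement about an arbitrary state graph but as an invariant guaranteed by the way the Strict State Graph is built and updated (the maintenance procedure that algorithm ST will implement). So I would prove it by induction on the sequence of events that change the window: the arrival of a new frame and the expiration of an old one. The base case is the one‑frame window, where the only state is the principal state created by that frame, there are no edges at all, and the universally quantified condition over pairs of out‑edges of any $s$ is vacuously true. For an expiration event, only states whose key (marked) frames have all left the window — together with their incident edges — are deleted; since removing edges cannot turn a previously non‑violating pair of children into a violating one, such events preserve the invariant trivially.

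The substance is the arrival of a frame $f_k$ with object set $\mathbb{ID}_{f_k}$. New states appear only as intersections $\mathbb{ID}' = \mathbb{ID}_s \cap \mathbb{ID}_{f_k}$ of $\mathbb{ID}_{f_k}$ with the object sets of existing states (plus, possibly, a fresh principal state for $\mathbb{ID}_{f_k}$ itself), and the claim is exactly that the edge‑insertion step must be designed so that it never attaches a new child $s'$ under a parent $s$ without first reconciling $\mathbb{ID}_{s'}$ with the object sets of the current children of $s$: (i) if some current child $s''$ has $\mathbb{ID}_{s'}\subseteq\mathbb{ID}_{s''}$, do not attach $s'$ to $s$ but recurse the insertion into $s''$ (the push‑down that avoids the redundant edge); (ii) if some current child $s''$ has $\mathbb{ID}_{s''}\subsetneq\mathbb{ID}_{s'}$, delete the edge $(s,s'')$ and re‑route $s''$ underneath $s'$ (the transition from Figure~\ref{fig:edge-ex:after2} to Figure~\ref{fig:edge-ex:expected}); (iii) otherwise attach $s'$ to $s$ directly. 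After this step, the new child $s'$ is incomparable to every child of $s$ that survived, by construction, and the surviving old children were pairwise incomparable by the inductive hypothesis (edges among them were only deleted, never added), so Property~\ref{prop:ssg} holds again at $s$. That the push‑down recursion in (i) is well‑defined and terminates follows from the preceding subset property (Property~\ref{prop:prune}): object sets strictly shrink along every edge, so the recursion descends a finite DAG.

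I expect the main obstacle to be bookkeeping across a single frame's entire batch of updates rather than any one elementary step: processing $f_k$ can spawn several new states and many new edges, and when a child $s''$ is rerouted under a freshly created $s'$ in case (ii) I have to ensure $s''$ is incomparable to whatever other children $s'$ picks up during the same pass, and that the cascaded push‑downs of case (i) do not sneak a comparable pair in further down the graph. The way I would tame this is to strengthen the induction: process the newly created states in order of decreasing object‑set size, and maintain as a loop invariant that the partially updated graph already satisfies Property~\ref{prop:ssg} at all times, so that each individual insert/reroute is merely a local comparison of one object set against an already‑strict antichain of children. With that invariant in hand the three cases above become a routine case analysis, and the theorem follows once the last update for $f_k$ has been applied.
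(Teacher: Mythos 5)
You have correctly read the statement as an invariant of the construction rather than a fact about arbitrary state graphs, and that is the only sensible reading: the paper never proves this Property at all. It is the \emph{defining} condition of a Strict State Graph (``we propose to generate graphs maintaining the following property''), and the paper's only formal correctness argument for its preservation is the appendix proof that the State Selection Procedure does not violate it; preservation under the ``Modifying Existing Edges'' step is argued only by example (the transition from Figure~\ref{fig:edge-ex:after2} to Figure~\ref{fig:edge-ex:expected}). Your induction over window events --- vacuous base case, monotone preservation under expiration, and a three-way case analysis at each parent upon frame arrival, processed in decreasing order of object-set size with the antichain condition carried as a loop invariant --- is therefore a genuinely more systematic route: it unifies in one argument what the paper scatters across the Graph Maintenance Procedure (step 4.a corresponds to your push-down case, step 4.b plus the edge-rewiring rule to your reroute case) and the CNPS theorem. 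What your version buys is an explicit handling of the batch-update issue you flag yourself (several new states and reroutes within one frame), which the paper elides; what the paper's version buys is brevity, since by taking the Property as a definition it only owes a proof for the one sub-procedure where a violation is non-obvious. Two small corrections: the ``object sets strictly shrink along every edge'' fact you invoke for termination is the paper's first (unlabeled) Property of state graphs, not Proposition~\ref{prop:prune}, which concerns query-condition pruning; and neither you nor the paper addresses whether pruning an \emph{interior} state on expiration (and re-homing its children) can reintroduce a comparable pair of siblings --- worth a sentence if you want your induction to be airtight where the paper is silent.
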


We use \textbf{Strict State Graph (SSG)} to denote a State Graph that satisfies the above property.
The graph that conforms to property \ref{prop:ssg} is shown in Figure \ref{fig:edge-ex:expected}.
To leverage the structure of SSG, two crucial issues have to be addressed: 
how to maintain the edges of the graph, and update the marked frame set for each state.
We first focus on edge maintenance in Section \ref{sect:edge-maintain}, and address the state marking issue in Section \ref{sect:state-marking}.
Then we provide the main algorithm in Section \ref{sect:state-algorithm}.
The algorithm analysis will be presented in Section \ref{sect:state-analysis}.

\subsubsection{Edge Maintenance} \label{sect:edge-maintain}
Since other states are generated by principal states directly or indirectly (Section \ref{sect:mPS}), we always maintain the set of principal states in the current window, denoted as $\mathbb{PS}$.

When a new frame $i$ arrives, let $ns$ be the principal state created by $i$. We start traversing the graph from one of the states in $\mathbb{PS}$ (note that $ns$ will be added to $\mathbb{PS}$ after all members of $\mathbb{PS}$ are visited). Assume during traversal of the graph we are visiting state $s$ ($s$ could be a principal state or another existing states on the graph). Utilizing the Strict State Graph, state $s$ will be handled during traversal by the {\bf Graph Maintenance Procedure} as follows:

\begin{enumerate}[noitemsep, nolistsep]
    \item We first compute the object set intersection between $ns$ and state $s$. Let $inter$ denote the intersection result, $inter = \mathbb{ID}_{s} \cap \mathbb{ID}_{ns}$.
    \item If $inter = \emptyset$. That means for any state adjacent to $s$, the object intersection between that state and $ns$ is empty. Thus, all of the adjacent states no longer need to be processed.
    \item If $inter = \mathbb{ID}_{s}$. We add frame $i$ to the frame set of $s$.
    \item Otherwise, we need to further compute object intersections between $ns$ and the adjacent states of $s$. Let $\mathbb{S}_{s} \in \mathbb{S}$ denote the set of adjacent states of $s$:
        \begin{enumerate}[noitemsep,nolistsep]
            \item If $\exists ss \in \mathbb{S}_{s}$, $\mathbb{ID}_{ss} \cap \mathbb{ID}_{ns} = inter$, then the state with object set $inter$ is already on the graph and no new states or edges need to be created.
            \item If $\forall ss \in \mathbb{S}_{s}$, $\mathbb{ID}_{ss} \cap \mathbb{ID}_{ns} \ne inter$, a new state with object set $inter$ is created and added to the graph. We say that the new state with the object set, $inter$, can be \textbf{generated} from $s$.
        \end{enumerate}
    \item After all states are visited according to above 4 steps, the new principal state, $ns$, will be added to the graph.
\end{enumerate}

To satisfy Property \ref{prop:ssg}, we maintain edges in an SSG in two steps:

\begin{enumerate}[noitemsep, nolistsep]
    \item \textbf{Modifying Existing Edges}. 
    When a new state is generated by intersecting objects between the new principal state and existing states (step 4.b of the Graph maintenance Procedure),
    we need to check whether existing edges should be deleted, and add new edges to the graph to satisfy Property \ref{prop:ssg}.
    \item \textbf{Connecting the New Principal State}. 
    When a new principal state is added to the graph, existing states should be carefully chosen to build new edges (step 5 of the Graph maintenance Procedure) while guaranteeing Property \ref{prop:ssg}.  
\end{enumerate}

\subsubsection{Modifying Existing Edges} \label{sect:state-edges}

Suppose a new principal state $ns$ is created by a new frame, and state $s_i$ is being visited ($s_i$ could either be an existing principal state or other states on the graph). $s'$ is another state that can be generated from $s_i$.

Assume after adding the edge, $(s_i, s')$, Property \ref{prop:ssg} is violated. That means, $\exists s_j \in \mathbb{S}_{s_i}$, $\mathbb{ID}_{s'} \cap \mathbb{ID}_{s_j} = \mathbb{ID}_{s_j}$.
In order to satisfy the property, edges have to be removed from the graph. 
To determine which edge should be removed, we need to check the object set intersections between $s'$ and the object sets of all adjacent states of $s_i$. 
In the above case, the intersection between object sets of state $s'$ and $s_j$ will be the same as the object set in $s'$, and thus edge $(s_i, s_j)$ should be removed from the graph, while a new edge $(s', s_j)$ should be added.

Consider Figure \ref{fig:edge-ex:after1} as an example, after the edge $(\{ABCF\},\\ \{ABF\})$ is added, we check the adjacent state of $\{ABCF\}$. $\{AB\} \cap \{ABF\} = \{AB\}$, which violates Property \ref{prop:ssg}. Thus, the existing edge $(\{ABCF\}, \{AB\})$ should be removed. Meanwhile, a new edge $(\{ABF\}, \{AB\})$ will be added.

\subsubsection{Connecting the New Principal State} \label{sect:state-candidate}

To connect a new principal state to the graph,  we first introduce the following theorem:

\begin{theorem}
\label{th4}
Let $\mathbb{PS}$ be the set of principal states and $ns$ be a new principal state. In an SSG, a state $s \in \mathbb{S}$ could be adjacent to $ns$ only if $\exists ps \in \mathbb{PS}$, such that $\mathbb{ID}_{ns} \cap \mathbb{ID}_{ps} = \mathbb{ID}_{s}$.
\end{theorem}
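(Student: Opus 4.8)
The plan is to prove the statement directly: assuming $s$ is adjacent to $ns$, I will exhibit a principal state $ps$ with $\mathbb{ID}_{ns}\cap\mathbb{ID}_{ps}=\mathbb{ID}_s$. First I would dispose of the trivial sub-case: if the edge is $(s,ns)$ then Property~\ref{prop:prune} gives $\mathbb{ID}_{ns}\subset\mathbb{ID}_s$, and since $ns\in\mathbb{PS}$ we may take $ps=ns$, as $\mathbb{ID}_{ns}\cap\mathbb{ID}_{ns}=\mathbb{ID}_{ns}$. So assume the edge is $(ns,s)$; then Property~\ref{prop:prune} gives $\emptyset\ne\mathbb{ID}_s\subset\mathbb{ID}_{ns}$ (object sets of states are non-empty). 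Let $i$ be the frame that created $ns$, so $\mathbb{ID}_{f_i}=\mathbb{ID}_{ns}$. Since $\mathbb{ID}_s\subseteq\mathbb{ID}_{f_i}$, I claim $i\in\mathbb{F}_s$: either $s$ was created while processing frame $i$ (and then records $i$ by construction), or $s$ pre-existed, in which case it is reachable from an old principal state along a path of states each of whose object set contains $\mathbb{ID}_s$ and hence has non-empty intersection with $\mathbb{ID}_{ns}$, so the Graph Maintenance Procedure visits $s$ and its Step~3 appends $i$ to $\mathbb{F}_s$.

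Since the graph carries only valid states, $\mathbb{ID}_s$ is the MCOS of $\mathbb{F}_s$, i.e.\ $\mathbb{ID}_s=\bigcap_{f\in\mathbb{F}_s}\mathbb{ID}_f$, and for each $f\in\mathbb{F}_s$ the state with object set $\mathbb{ID}_f$ is a principal state $p_f$. Using $i\in\mathbb{F}_s$ and $\mathbb{ID}_{f_i}=\mathbb{ID}_{ns}$, this rewrites as $\mathbb{ID}_s=\mathbb{ID}_{ns}\cap\bigcap_{f\in\mathbb{F}_s\setminus\{i\}}\mathbb{ID}_{p_f}$. Suppose, for contradiction, that no principal state $ps$ satisfies $\mathbb{ID}_{ns}\cap\mathbb{ID}_{ps}=\mathbb{ID}_s$. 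Then for every $f\in\mathbb{F}_s\setminus\{i\}$ the set $t_f:=\mathbb{ID}_{ns}\cap\mathbb{ID}_{p_f}$ is a \emph{proper} superset of $\mathbb{ID}_s$ (it contains $\mathbb{ID}_s$ automatically, since $\mathbb{ID}_s\subseteq\mathbb{ID}_{ns}$ and $\mathbb{ID}_s\subseteq\mathbb{ID}_{p_f}$). Because $\bigcap_f t_f=\mathbb{ID}_s$, the $t_f$ cannot all coincide, and they cannot all equal $\mathbb{ID}_{ns}$ (that would force $\mathbb{ID}_s=\mathbb{ID}_{ns}$); so I can fix a frame $f_1$ with $\mathbb{ID}_s\subset t_{f_1}\subset\mathbb{ID}_{ns}$. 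As $t_{f_1}=\mathbb{ID}_{ns}\cap\mathbb{ID}_{p_{f_1}}$ is a non-empty COS of the two-frame set $\{i,f_1\}$, the valid state $s_t$ with $\mathbb{ID}_{s_t}=t_{f_1}$ is present in the current graph.

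The crux is to show that $s_t$ is reachable from $ns$ in the SSG. Intuitively this must hold because $s_t$ is precisely one of the states created or confirmed while connecting $ns$ (it is the intersection of $ns$ with $p_{f_1}$), so the maintenance must place it below $ns$; more formally, one invokes the coverage invariant that, after a principal state is connected, its out-neighbours $\subseteq$-dominate every state whose object set is a proper subset of that principal state's object set. Granting this, pick a shortest directed path $ns=v_0\to v_1\to\dots\to v_r=s_t$; since $\mathbb{ID}_{s_t}\subset\mathbb{ID}_{ns}$ we have $r\ge 1$, so $v_1$ is an out-neighbour of $ns$ with $\mathbb{ID}_s\subset\mathbb{ID}_{s_t}\subseteq\mathbb{ID}_{v_1}$; in particular $v_1\ne s$ and $\mathbb{ID}_s\cap\mathbb{ID}_{v_1}=\mathbb{ID}_s$. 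But $s$ and $v_1$ are then two out-neighbours of $ns$ violating Property~\ref{prop:ssg} --- a contradiction. Hence some $ps=p_f$ satisfies $\mathbb{ID}_{ns}\cap\mathbb{ID}_{ps}=\mathbb{ID}_s$, proving the theorem. I expect that reachability step to be the real obstacle: Properties~\ref{prop:prune} and~\ref{prop:ssg} on their own do not force it (one can construct degenerate graphs satisfying both in which the statement fails), so the argument genuinely has to appeal to the completeness/coverage guarantee of the Graph Maintenance Procedure, and stating that invariant precisely --- and checking $s_t$ falls under it --- is where the care lies.
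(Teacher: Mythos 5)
Your main argument is essentially the paper's: both proofs proceed by contradiction, exhibit a principal state whose intersection with $\mathbb{ID}_{ns}$ \emph{properly} contains $\mathbb{ID}_{s}$, argue that the state carrying that intersection sits below $ns$ in the graph, and then derive a violation of Property~\ref{prop:ssg} from the two out-neighbours $s$ and $v_1$ (the paper's $s'$). The only real difference is how the witnessing principal state is found: the paper takes some $PS'$ from which $s$ is reachable (so $\mathbb{ID}_{s}\subseteq\mathbb{ID}_{PS'}$), whereas you extract $p_{f_1}$ from the frame set of $s$ via the MCOS identity $\mathbb{ID}_{s}=\bigcap_{f\in\mathbb{F}_{s}}\mathbb{ID}_{f}$; both routes work. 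You are also more candid than the paper about the crux: the paper flatly asserts that the state with object set $\mathbb{ID}_{ns}\cap\mathbb{ID}_{PS'}$ ``must also be adjacent to $ns$,'' while you weaken this to reachability from $ns$ and explicitly flag that it rests on a coverage invariant of the Graph Maintenance Procedure that neither you nor the paper actually establishes. So the gap you identify as ``the real obstacle'' is present in the published proof as well.

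Two slips worth correcting. First, your disposal of the ``edge $(s,ns)$'' sub-case fails: there you must produce $ps$ with $\mathbb{ID}_{ns}\cap\mathbb{ID}_{ps}=\mathbb{ID}_{s}$, but taking $ps=ns$ yields $\mathbb{ID}_{ns}$, which in that sub-case is a \emph{proper subset} of $\mathbb{ID}_{s}$; in fact no $ps$ can work, since $\mathbb{ID}_{ns}\cap\mathbb{ID}_{ps}\subseteq\mathbb{ID}_{ns}\subsetneq\mathbb{ID}_{s}$. The resolution is that the theorem, as used to build the candidate set $C$ and as proved in the paper, concerns only out-neighbours of $ns$ --- exactly the case your main argument covers --- so the sub-case should simply be excluded rather than ``proved.'' Second, the containment $\mathbb{ID}_{s_j}\subset\mathbb{ID}_{s_i}$ along an edge is the paper's first, unnumbered Property of state graphs, not Proposition~\ref{prop:prune} (which is the query-pruning result). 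Neither slip affects the substance of your argument.
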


Thus, as per Theorem \ref{th4} one can obtain at most one state to connect $ns$ starting from each state in $\mathbb{PS}$ during graph traversal. We use $C$ to denote a set of states that could be adjacent to the new frame state $ns$. The set $C$ is constructed as follows:
Starting from each principal state, $ps$:
\begin{enumerate}[noitemsep, nolistsep]
    \item If $\mathbb{ID}_{ps} \cap \mathbb{ID}_{ns} = \emptyset$, then no state will be added to $C$.
    \item Otherwise, let $s$ be the state with object set $\mathbb{ID}_{ps} \cap \mathbb{ID}_{ns}$, then $s$ could be an existing states in the graph (Graph Maintenance Procedure step 4.a), or a new state (Graph Maintenance Procedure step 4.b). In either case, state $s$ will be added to $C$.
\end{enumerate}

For a state, $s \in C$, let $CS_{s}$ be the set containing all states that can be obtained by performing a DFS (Depth-First Search) from $s$ on the graph. 
States in $C$ will be selected according to the {\textbf{ State Selection Procedure}} as follows:

\begin{enumerate}[noitemsep, nolistsep]
    \item We sort the states in $C$ according to $|\mathbb{ID}_{s}|$ (object set) in descending order, where $s \in C$. Meanwhile, an empty set, $RS$, is initialized, to materialize all reachable states from $ns$, and is dynamically updated.
    \item State, $s$, is processed and selected as follows:
    \begin{itemize}[noitemsep, nolistsep]
        \item If  $s \notin RS$, then state $s$ will be selected. All states from $CS_{s}$ will be added to $RS$.
        \item If $s \in RS$, $s$ will not be selected.
    \end{itemize}
\end{enumerate}

After selecting the states, we can now add an edge between each selected state and the new principal state. The following theorem can be proved.

\begin{theorem}
The State Selection Procedure will not violate Property \ref{prop:ssg} while all newly generated states can be reached from the new principal state.
\end{theorem}

Now consider the example in Figure \ref{fig:edge-ex:before} again. 
When a new frame $\{ABDF\}$ arrives: From state $\{ABCF\}$, a candidate, $c_1$, with state $\{ABF\}$ will be created, along with the reachable state set $\{ \{ABF\}, \{AB\}\}$; From state $\{ABD\}$, a candidate, $c_2$, with state $\{ABD\}$ will be created, along with the reachable state set $\{ \{ABD\}, \{AB\} \}$. Since they have the same object set size, we can start from either candidate. Assume we select state from $c_1$ first, after that the reachable state set for state $\{ABDF\}$ is $\{ \{ABF\}, \{AB\}\}$. $\{ABD\}$ is not in the set, so we still need to select the state from $c_2$. After that, a new edges will be created between $\{ABD\}$, $\{ABF\}$ and $\{ABDF\}$. The result is the same as Figure \ref{fig:edge-ex:expected}, and Property \ref{prop:ssg} is maintained.

\subsubsection{Marking the Frame Sets of States} \label{sect:state-marking}

During edge maintenance, the intersections between the new arriving frame and existing states are computed on SSG. 
To minimize the overall maintenance cost, we introduce \textbf{State Marking Procedure} to update the Marked Frame Set for each state at the same time.

For each new arriving frame state $ns$ at frame $i$, we then visit the SSG starting from principal states in their arrival order. the marked frame set for each state is updated as follows:

\begin{enumerate}[noitemsep, nolistsep]
    \item Frame $i$ in state $ns$ will always be marked. 
    
    \item For each state $s$:
    \begin{itemize}[noitemsep,nolistsep]
        \item If a new state $s'$ is generated from $s$. We copy the marked frame set from $s$ to $s'$ ($\mathbb{F}_{s'} \gets \mathbb{F}_{s}$). If $s' \neq ns$, we also append the new frame $i$ to its marked frame set.
        \item If $\mathbb{ID}_{s} \cap \mathbb{ID}_{ns} = \mathbb{ID}_{s}$, we simply append the new frame $i$ to the marked frame set of $s$.
    \end{itemize}
    
    \item After state $s$ is processed, it is marked as visited. The adjacent states of $s$ will be further processed by applying the above rule recursively until there are no adjacent states or all adjacent states are already visited.
    
    \item If some principal state, $ps$, (created from frame $i'$) is already visited, and $\mathbb{ID}_{ps} \cap \mathbb{ID}_{ns} \neq \emptyset$. This means from $ps$ and $ns$, a new state $s'$ is generated. Since $ps$ is already visited, the new state $s'$ must already created on the SSG. In this case, we mark the frame id $i'$ in the marked frame set of $s'$.
    
\end{enumerate}

The correctness of the above procedure is guaranteed by the following theorem.

\begin{theorem}
    Using the State Marking Procedure, a state is valid in SSG iff at least one frame in its frame set is marked. 
\end{theorem}

\subsubsection{The Algorithm} \label{sect:state-algorithm}

We now present the algorithm that builds SSGs. The algorithm consists of two main parts: State Traversal (ST), and Connecting the New Principal State (CNPS).

\begin{enumerate}[noitemsep, nolistsep]
    \item We maintain principal states in their order of arrival.
    The State Traversal (ST) algorithm implements the Graph Maintenance Procedure and State Marking Procedure.
    Meanwhile, a set of states, $C$ (described in Section \ref{sect:state-candidate}), will also be generated by the ST algorithm.
    \item Connecting the New Principal State (CNPS) algorithm, implements the state selection procedure to add new edges between existing states and the new principal state, as discussed in Section \ref{sect:state-candidate}.
\end{enumerate}

The ST algorithm is shown as Algorithm \ref{al:st}, which receives 6 parameters: the arriving frame id, $i'$, an existing state to visit, $s$, the parent state $ps$ of state $s$, the new principal state, $ns$,
the previous object intersection $pInter$ (generated by $ps$ and $ns$),
and the current candidate $c$. 

Starting from principal states, $pInter$ is initialized as an empty set, while its parent state $ps$ and the current candidate $c$ is set to $null$.
To avoid visiting the same state multiple times, we include a flag to each state (shown as Lines 1-2). The flag will be set to the current frame id once the state is visited.

For each state, we first remove expired frames from its frame set and prune the state if necessary (function {\em pruneState} Line 3). Then, the intersection between the object sets of states $s$ and $ns$ will be computed (Line 4). 
Lines 5-31 implement the Graph Maintenance Procedure (Steps 1-4), meanwhile, the frame set of each state is updated according to the State Marking Procedure (Steps 1-3).

\begin{algorithm}[h]
\begin{small}
	\SetAlgoLined
	\KwIn{current frame id $i'$; 
		visiting state $s$; parent state $ps$; new principal state $ns$; 
		the previous intersection result $pInter$ (between $ns$ and $ps$); 
		the current candidate $c$ ;
		}
	\lIf{$flag(s) = i'$}{
		return
	}
	flag($s$) $\gets$ $i'$; \textit{// visit each state only once\\}
	pruneState($s$)\textit{// prune the current state}\;
	$inter \gets \mathbb{ID}_{s} \cap \mathbb{ID}_{ns}$ \textit{// compute the intersection} \;
		\uIf{$pInter \ne \emptyset$ \textbf{and } $inter = \emptyset$ }{
			\If{$|pInter| \ne |\mathbb{ID}_{ps}| $ and $|pInter| \ne |\mathbb{ID}_{ns}|$}{
			
				createState($pInter$, $ps$, $ns$, $c$)\;
			}
		} \ElseIf{$inter\ne \emptyset$ } {
			\textit{// check if we need to add states for $pInter$}\\
			\If{$pInter \ne \emptyset$  \textbf{and } $|pInter|>|inter|$}{
				\If{$|pInter| \ne |\mathbb{ID}_{ps}| $  \textbf{and } $|pInter| \ne |\mathbb{ID}_{ns}|$}{
					$s' \gets$ createState($pInter$, $ps$, $ns$, $c$)\;
					\lIf{$|inter| = |\mathbb{ID}_{s}|$} {
					$\mathbb{E} \gets \mathbb{E} \setminus \{(ps, s)\} \cup \{(ns, s)\}$
					}
				}
			}
			\textit{// deal with the current intersection}\\
			\uIf{$|inter| = |\mathbb{ID}_{s}|$}{
			    \lIf{$|pInter| = |inter|$}{
			        $\mathbb{F}_{s} \gets  merge(\mathbb{F}_{s}, \mathbb{F}_{ps})$
			    }
				$\mathbb{F}_{s} \gets \mathbb{F}_{s} \cup \{i'\}$\;
				visitNext($s$, $inter$, $c$)\;
			}
			\uElseIf{$|inter| = |\mathbb{ID}_{ns}|$}{
				$\mathbb{F}_{ns} \gets merge(\mathbb{F}_{s}, \mathbb{F}_{ns})$; 
				$\mathbb{E} \gets \mathbb{E} \cup \{(s, ns)\} $\;
				visitNext($s$, $inter$, $c$)\;
			}
			\Else{
				hasNext $\gets $ visitNext($s$, $inter$, $c$)\;
				\If{hasNext = false}{
					createState($inter$, $s$, $ns$, $c$)\;
				}
			}
	}
	
	\caption{State Traversal (ST)}
	\label{al:st}
\end{small}
\end{algorithm}
In the algorithm, we utilize two other functions, {\em visitNext} and {\em createState}: {\em createState} will create a state (or retrieve an existing one) with object set equal to $inter$ and modify edges according to Section \ref{sect:state-edges}; while {\em visitNext} visits the adjacent vertices of the given state $s$, calling the ST algorithm recursively.

After each principal state is visited by the ST Algorithm, the CNPS Algorithm is responsible for connecting the new principal state to the graph, which is described as the State Selection Procedure in Section \ref{sect:state-candidate}, shown as Algorithm \ref{al:conn}. In the algorithm, $c$ stores a state ($c.s$) and the reachable state set ($c.CS$) from the state, both of which are obtained in the ST algorithm.

A state $s$ is satisfied and valid if $|\mathbb{F}_{s}| \ge d$ and at least one frame in $\mathbb{F}_{s}$ is marked.
Let $\mathbb{SR}_{i}$ be the \textbf{Result State Set}, which contains all satisfied and valid states in the window of frame $i$. The result state set, $\mathbb{SR}_{i}$, is sent to the Query Evaluation module. Meanwhile, we also keep a copy of set $\mathbb{SR}_{i}$.
When a new frame $i'$ arrives, let $G'$ be the new state graph. 
The new result, $\mathbb{SR}_{i'} = \mathbb{SR'}_{i} \cup \mathbb{SR}_{G'}$, where $\mathbb{SR'}_{i}$ is the set of satisfied and valid states from $\mathbb{SR}_{i}$ in the current window,
while $\mathbb{SR}_{G'}$ is the set of satisfied and valid states obtained on the graph.
This is because, for a new principal state, $ns$, we visit state $s \in \mathbb{S}$ on the graph only if $\mathbb{ID}_{s} \cap \mathbb{ID}_{ns} \ne \emptyset$. States $s'$ in the previous result set $\mathbb{SR}_{i}$ may still be valid and satisfied. If the intersection of object sets between state $s'$ and $ns$ is empty, we may leave it out during graph traversal. Keeping only the previous result set is sufficient since for any existing unsatisfied state, $s$, it will become satisfied only if $s$ is visited on the graph in the current window. 
Thus, if $s$ is a satisfied state at frame $i'$, then it is either: a) visited in the current window at frame $i'$, or b) was also satisfied in the window of the previous frame $i$.

\begin{algorithm}
\begin{small}
	\SetAlgoLined
	\KwIn{the state set $C$; the new principal state $ns$; }
	$CL \gets $ sort the $C$ in descending order of $|\mathbb{ID}_{c.s}|$\;
	$RS \gets \emptyset$;\textit{// stores states that can be reached from $ns$}\\
	\For{$c : CL$}{
		\If{ $c.s \notin RS$ }{
		    $\mathbb{E} \gets \mathbb{E} \cup \{(ns, c.s)\}$;
		    $RS \gets RS \cup c.CS$\;
		}
	}
	\caption{CNPS}
	\label{al:conn}
\end{small}
\end{algorithm}
\subsubsection{Analysis} \label{sect:state-analysis}
Assume the number of frames that share the same object set is $\lambda$ on average.
Given a window size $w$, $x = \frac{w}{\lambda}$ number of unique principal states will be generated. In the full version of the paper we prove:

\begin{theorem}
    In SSG, there will be at most $2^{x}$ number of states and $x2^{x}$ number of edges. Thus, the complexity of algorithm ST is $O(x2^{x})$, where $x=\frac{w}{\lambda}$.
\end{theorem}

\section{Query Evaluation} \label{sect:eval}

\subsection{The CNF Algorithm}
For our query evaluation module any CNF evaluation algorithm can be employed. 
We choose to utilize the evaluation algorithm proposed in \cite{whang2009indexing}, which will be referred to as {\em CNFEval}.

The CNFEval algorithm evaluates a set of CNF queries that include $\in$, $\notin$ (set membership) predicates. The algorithm first builds an inverted index for the starting set of queries. The inverted index is dynamically maintained as queries are inserted and/or deleted. 

For example, query $q_1 = age \in \{2, 3\} \land (state \in \{CA\} \lor gender \in \{F\})$, specifies a CNF on name-value pairs. Specifically, this query can be interpreted as follows: for name "age", the value is $2$ or $3$; for name "state", the value is "CA"; and for name "gender", the value is "F". 

An inverted index for $q_1$ containing posting lists is generated as shown in Table \ref{table:cnf-pl}. A posting list is a list of triples that are generated according to query conditions. A triplet is represented as $(qid, p, disjId)$, where $qid$ is the query id, $p$ is the predicate ($\in$ or $\notin$), and $disjId$ is the disjunction id (starting from 0) of the condition. For example, condition $gender \in \{F\}$ is in the second disjunction, and thus a triplet (1, $\in$, 1) is generated.

For a given set of name-value pairs, the algorithm retrieves posting lists from the inverted index first. By ordering and scanning all retrieved posting lists, queries can be answered correctly. For example, given an input $\{(age, 3), (gender, F)\}$, two posting lists $(1, \in, 0)$ and $(1, \in, 1)$ are retrieved. From the two triplets in the lists, we determine that both disjunctions $0$ and $1$ are satisfied. 
Thus, query $q_1$ is evaluated as TRUE. The details of the entire algorithm are available elsewhere \cite{whang2009indexing}.
\begin{table}[]
    \centering
\begin{small}
        \caption{Inverted Index}
        \begin{tabular}{|l|l|}
        \hline
        Key   & Posting List \\ \hline
        (age, 2) & (1, $\in$, 0)     \\ \hline
        (age, 3) & (1, $\in$, 0)     \\ \hline
        (state, CA) & (1, $\in$, 1)     \\ \hline
        (gender, F) & (1, $\in$, 1)     \\ \hline
        \end{tabular}
        \label{table:cnf-pl}
\end{small}
\end{table}

\subsection{CNF Evaluation with Inequality Predicates} \label{sect:eval:eval}

The CNFEval algorithm only handles queries containing set predicates. As a result inequality conditions are not supported naturally. Thus, we propose to build three separate inverted indexes for the query conditions of the form $label$ $\theta$ $n$ ($label$ is used as the key, while $n$ is used as the value), with an extra column, value, associated to each of the three cases, $\leq, \geq, =$. 
Consider a query $q_2=(car \ge 2 \lor person \le 3) \land (car \ge 3 \lor person \ge 2) \land (car \le 5) $ in our framework. Tables \ref{table:pl-ge} and \ref{table:pl-le} depict the inverted indexes for this query. Since this query contains both $\geq$ and $\le$ conditions, two inverted indexes are built. 
Each key in the inverted list is associated with an ordered list; ordering takes place by value in descending (if $\theta$ is $\le$) or ascending (if $\theta$ is $\ge$) order. 
Given an input with a set of name-value pairs, posting lists are retrieved as follows. For each name-value pair, $(k, v)$:

\begin{enumerate}[noitemsep, nolistsep]
    \item Retrieve the ordered list from both tables where key=$k$.
    \item Retrieve posting lists in order where value $\le v$ (if $\theta$ is $\ge$) or value $\ge v$ (if $\theta$ is $\le$).
\end{enumerate}

\begin{table}[!htb]
\begin{small}
    \begin{minipage}[t]{.5\linewidth}
      \caption{$\ge$ Index}
      \centering
        \begin{tabular}{|l|l|l|}
        \hline
        Key                & Value & Posting \\
        &  & List \\\hline
        \multirow{2}{*}{Car} & 2     & (2, 0)     \\ \cline{2-3} 
                           & 3     & (2, 1)     \\ \hline
        Person & 2     & (2, 1)     \\ \hline
        \end{tabular}
        \label{table:pl-ge}
    \end{minipage}%
    \begin{minipage}[t]{.5\linewidth}
      \centering
        \caption{$\le$ Index}
        \begin{tabular}{|l|l|l|}
        \hline
        Key                & Value & Posting \\
        &  & List \\\hline
        Car& 5     & (2, 2)     \\ \hline
        Person   & 3     & (2, 0)     \\ \hline
        \end{tabular}
        \label{table:pl-le}
    \end{minipage} 
\end{small}
\end{table}

We refer to this version of the algorithm, which encompasses our enhancements, as {\em CNFEvalE}. It is utilized to evaluate queries on the Result State Set (Section \ref{sect:state-algorithm}) produced by the MCOS Generation module. The procedure to evaluate queries on video feeds is described as follows:  

\begin{enumerate}[noitemsep]
    \item For the given queries, we first build inverted lists.
    \item For each state, $s$, in the Result State Set (Section \ref{sect:state-algorithm}) generated by the MCOS Generation module:
    \begin{enumerate}[noitemsep,nolistsep]
        \item A set of aggregate values, $\mathbb{A}_{s}$, is computed based on the MCOS of state $s$. Each aggregate value is represented as $(l, v)$, where $l \in \mathbb{L}$ is the class label, while $v$ is the number of objects of type $l$.
        \item The aggregate value set, $\mathbb{A}_{s}$, is utilized as input to CNFEvalE to produce the results.
        \item If some query, $q$, is evaluated as true, the frame set of the state, $\mathbb{F}_{s}$, is produced as the result.
    \end{enumerate}
\end{enumerate}

\subsection{Pruning States by Evaluation Results} \label{sect:eval:prune}

Query evaluation proceeds on the MCOS of a state in the Result State Set (Section \ref{sect:ssg}). We make the following observation: if queries are evaluated as FALSE on the MCOS generated by some state $s$, then they will always be evaluated as FALSE on every possible MCOS generated by state $s'$, where $s'$ is generated from $s$.

This observation is captured by the following proposition:

\begin{proposition}
    Let $s$ be some state in an SSG, and $\mathbb{ID}_{s}$ be the MCOS generated by $s$.
    For each condition $c$ in query $q$, if $c$ is evaluated as FALSE on $\mathbb{ID}_{s}$, then for any state $s'$ with $\mathbb{ID}_{s'}$, where $\mathbb{ID}_{s'} \subset \mathbb{ID}_{s}$, the condition $c$ will always be evaluated as FALSE on $\mathbb{ID}_{s'}$.
    \label{prop:prune}
\end{proposition}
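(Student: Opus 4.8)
The plan is to reduce the claim to a one-line monotonicity fact about per-label object counts under set inclusion, so that the SSG structure enters only through the hypothesis $\mathbb{ID}_{s'}\subset\mathbb{ID}_{s}$. First I would unfold what it means to evaluate a single condition $c\equiv(x\,\theta\,n)$ on an MCOS: by the query semantics of Section~\ref{sect:problem} this is exactly the test $\mathrm{cnt}(x,\mathbb{ID})\,\theta\,n$, where $\mathrm{cnt}(x,\mathbb{ID})=\bigl|\{o\in\mathbb{ID}:\mathrm{label}(o)=x\}\bigr|$; I would also check that the aggregate value fed to CNFEvalE in Section~\ref{sect:eval:eval} for class $x$ is precisely this $\mathrm{cnt}(x,\cdot)$ evaluated on the state's object set, so that ``$c$ evaluated as FALSE'' in the statement literally means ``$\mathrm{cnt}(x,\mathbb{ID}_{s})\,\theta\,n$ fails''.

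The key lemma is then immediate: whenever $\mathbb{ID}_{s'}\subseteq\mathbb{ID}_{s}$ and $x$ is any label, $\{o\in\mathbb{ID}_{s'}:\mathrm{label}(o)=x\}\subseteq\{o\in\mathbb{ID}_{s}:\mathrm{label}(o)=x\}$, hence $\mathrm{cnt}(x,\mathbb{ID}_{s'})\le\mathrm{cnt}(x,\mathbb{ID}_{s})$. If $s'$ is supplied only as ``generated from $s$ directly or indirectly'' rather than with an explicit subset relation, the needed $\mathbb{ID}_{s'}\subset\mathbb{ID}_{s}$ follows by chaining the edge property $e_{ij}\in\mathbb{E}\Rightarrow\mathbb{ID}_{s_j}\subset\mathbb{ID}_{s_i}$ along the path that generated $s'$ from $s$. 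Feeding the lemma into the test finishes the monotone case: if $c\equiv(x\ge n)$ is FALSE on $\mathbb{ID}_{s}$ then $\mathrm{cnt}(x,\mathbb{ID}_{s})<n$, so $\mathrm{cnt}(x,\mathbb{ID}_{s'})\le\mathrm{cnt}(x,\mathbb{ID}_{s})<n$ and $c$ is FALSE on $\mathbb{ID}_{s'}$; the same one-liner handles the ``too few'' half of an equality, since $\mathrm{cnt}(x,\mathbb{ID}_{s})<n$ forces $\mathrm{cnt}(x,\mathbb{ID}_{s'})<n$.

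The point that requires care --- and the only genuine obstacle --- is that this monotonicity is one-directional, so the proposition as literally phrased fails for an upper-bound conjunct: ``$x\le n$'' can be FALSE on $\mathbb{ID}_{s}$ (more than $n$ objects of class $x$) yet TRUE on a subset $\mathbb{ID}_{s'}$ with at most $n$ of them, and similarly for the ``too many'' half of $=$. I would therefore scope the statement to the lower-bound conditions that actually drive the pruning of Section~\ref{sect:eval:prune}, namely $\theta$ being $\ge$ (plus the downward half of $=$), and note that this is exactly what soundness of the pruning rule needs: a disjunction whose literals are all such conditions, once FALSE on $s$, stays FALSE on every state generated from $s$, hence so does the enclosing CNF. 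With that scoping in place, the proof is just the two-line count argument above, and no further properties of the SSG or of the State Marking / Graph Maintenance procedures are needed.
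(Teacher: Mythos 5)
Your proof is correct and is the natural count-monotonicity argument; the paper in fact gives no formal proof of this proposition (it is the one statement absent from the appendix), so your two-line argument --- unfold $c\equiv(x\,\theta\,n)$ as a test on $\mathrm{cnt}(x,\cdot)$, then use $\mathbb{ID}_{s'}\subseteq\mathbb{ID}_{s}\Rightarrow\mathrm{cnt}(x,\mathbb{ID}_{s'})\le\mathrm{cnt}(x,\mathbb{ID}_{s})$ --- supplies exactly what is missing. Your caveat that the statement as literally phrased fails for $\le$ (and for the ``too many'' half of $=$) is also correct and consistent with the paper's own remark immediately after the proposition that it ``can be satisfied only if $\theta$ is $\ge$'', which is precisely the scoping the pruning strategy of Section~\ref{sect:eval:prune} relies on.
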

If Proposition \ref{prop:prune} holds, such states can be safely removed from the graph while guaranteeing correctness.
Proposition \ref{prop:prune} can be satisfied only if $\theta$ is $\ge$. To adopt such a pruning strategy, for a given set of queries, we test whether they contain $\ge$ predicates only. If so, when a new state is generated (Graph Maintenance Procedure step 4.b), we evaluate the MCOS of that state using the Query Evaluation module. In this case, we do not wait for the entire Result State Set to be generated. If all queries are evaluated as FALSE, the state will be marked as "terminated". Terminated states will no longer be processed, and thus we can reduce the number of states maintained in the MCOS Generation module. This provides additional pruning flexibility to the MCOS generation module in certain cases.

\section{Experiments} \label{sect:exp}

In this section, we present a thorough experimental evaluation of the two proposed approaches (Marked Frame Set and Strict State Graph) on both synthetic and real datasets and study the tradeoffs between them.
We first evaluate the MCOS generation methods proposed in this paper in Section \ref{sect:exp:state}.
Then, based on MCOS generation , we show the performance of our Query Evaluation in Section \ref{sect:exp:eval}.

\subsection{Settings} \label{sect:exp:setting}
\textbf{Environment}. 
Our experiments are conducted on a machine with Intel i5-9600K CPU, one GeForce GTX 1070 GPU and 16 GB RAM.
All algorithms are implemented in Java except the object detection and tracking algorithms for which we utilize standard implementations.

\textbf{Object Detection and Tracking}.
We adopt Faster R-CNN \cite{DBLP:journals/pami/RenHG017} as the object detection algorithm to identify all objects in videos and Deep SORT \cite{Wojke2017simple} as the object tracking algorithm. 
For both algorithms, we utilize standard implementation from open-source projects \footnote{https://github.com/open-mmlab/mmdetection}\footnote{https://github.com/nwojke/deep\_sort}.

\textbf{Synthetic Datasets}. We use synthetic videos generated by the VisualRoad benchmark \cite{haynes2019visual}, which is proposed to evaluate the performance of video database management systems (VDBMSs). We select two representative videos provided on the project homepage\footnote{https://db.cs.washington.edu/projects/visualroad/}: rain with light traffic (V1), and postpluvial with heavy traffic (V2). 
We also generate videos with different configurations to study the relationship between the performance and the number of objects per frame.

\begin{table}[]
    \centering
    \begin{small}
    \caption{Dataset Statistics}
    \begin{tabular}{|l|c|c|c|c|c|c|}
        \hline
         \textbf{Dataset} & V1 & V2 & D1 & D2 & M1 & M2\\
         \hline
         \textbf{Frames} & 1800 & 1700 & 1150 & 1145 & 1194 & 750 \\
         \hline
         \textbf{Objects} & 173 & 127 & 179 & 158 & 342 & 186 \\
         \hline
         \textbf{Obj/F} & 7.37 & 5.94 & 7.56 & 8.99 & 6.75 & 11.59 \\
         \hline
         \textbf{Occ/Obj} & 3.6 & 6.33 & 5.20 & 7.23 & 3.37 & 3.48 \\
         \hline
         \textbf{F/Obj} & 76.71 & 79.84 & 48.61 & 65.18 & 23.67 & 46.96 \\
         \hline
    \end{tabular}
    \label{tab:dataset-statistics}
    \end{small}
\end{table}

\textbf{Real Datasets}. We also evaluate our approach based on four videos from Detrac \cite{lyu2017ua} and MOT16 datasets \cite{milan2016mot16}. 
From Detrac, we use MVI\_40171 (D1) and MVI\_40751 (D2). From MOT16, we use MOT16-06 (M1) and MOT16-13 (M2). 

\textbf{Dataset Statistics}.  In the rest of this section, we use V1 and V2 to denote two synthetic videos, while using D1, D2, and M1, M2 to denote four real videos. D1 and D2 are captured by static cameras, {\em while M1 and M2 are captured by moving cameras}. In our experiments, we only identify objects belonging to one of the following classes: person, car, truck, and bus. After applying object detection and tracking algorithms, the statistical information of these videos can be summarized as Table \ref{tab:dataset-statistics}.
In the table, Frames and Objects represent the total number of frames and unique objects, respectively; Obj/F represents the average number of objects per frame; Occ/Obj represents the average occlusion times per object; while F/Obj denotes the number of frames that each object appears on average.

\subsection{MCOS Generation} \label{sect:exp:state}

We implement a baseline approach that simply stores the set of frames that each object set appears. We first collect all the object sets that satisfy the given duration threshold, and then check whether they share the same frame set. If that is the case, according to the definition of MCOS, we only keep the object set with the maximum size. The refer to this approach as {\em NAIVE}. 
We also implemented the MFS approach as described in Section \ref{sect:mPS}, as well as SSG as presented in Section \ref{sect:state-algorithm}.
All three methods are memory-based, where a hash table is used to map the object sets with frame sets (NAIVE), or marked frame sets (MFS), or internal states (SSG).
To study the performance of the three methods, we design experiments to measure only the MCOS generation time.

We vary the window size, $w$ (default value is set to $300$ frames), and the duration parameter, $d$, (default value is set to $240$ frames).
With 30 fps, the default setting aims to identify objects that appear at least 8 seconds in the window of 10 seconds.
The number of occlusions per object is also varied during experiments. By default we do not vary occlusion; each video has a number of occlusions per object occurring naturally in each data-set (Table \ref{tab:dataset-statistics}).

We then vary the number of total frames evaluated on datasets, shown as Figure \ref{fig:state-frames}. 
The performance of all methods demonstrates a similar general trend: when the number of frames increases, the total time also increases. 
In all three methods, MCOSs are generated according to the object sets, thus the number of distinct object sets can have a profound impact on the overall performance.
Thus, even under the same window size and duration threshold, their performance on different videos can vary significantly. 
For example, in Figure \ref{fig:state-frame-d1}, processing the first 400 frames is very expensive, the total cost does not increase significantly as the number of frames increases. 
In contrast, processing the {\em last} 145 frames is much more costly on video D2 (Figure \ref{fig:state-frame-d2}).
This is because these frames include more objects per frame than the average objects per frame for the clip. As a result, these frames contribute significantly to the number of states created overall and subsequently to the overall cost. 
MFS performs slightly better than SSG on all videos generated by the visualroad benchmark. These videos have smaller number of objects per frame, while the duration for each object in the visible screen (F/Obj in Table \ref{tab:dataset-statistics}) is relatively longer than other videos.
In such scenarios,
the number of unique states is relatively small, while most states can be generated directly from principal states. 
The pruning power of SSG is not significant compared with MFS since the graph structure does not provide significant advantage. Instead, extra cost is paid to maintain the graph in the case of SSG.
In contrast, on other datasets (D1 to M2),the performance of SSG is better than MFS due to the pruning effectiveness attributed to the graph structure.
For example, in Figure \ref{fig:state-frame-d2}, SSG is around 10\% faster than MFS and 30\% faster in Figure \ref{fig:state-frame-m2}.
In these datasets, the number of objects per frame is much larger (Obj/F in Table \ref{tab:dataset-statistics}) or the duration of objects is the visible screen is lower, thus more states are maintained.

\begin{figure}[h!]
	\centering
	\begin{subfigure}[b]{0.21\textwidth}
		\centering
		\includegraphics[width=\textwidth]{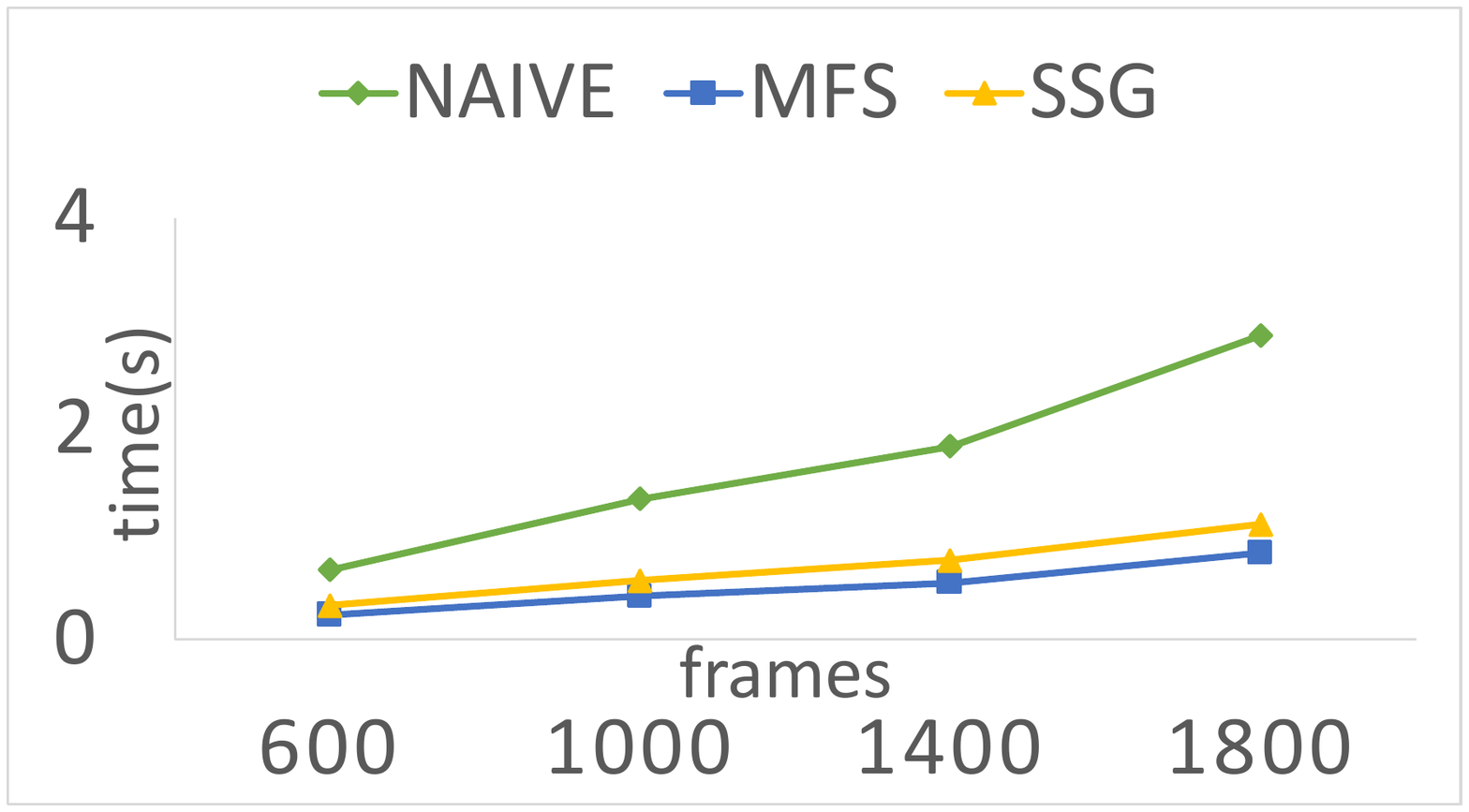}
		\caption{V1}
		\label{fig:state-frame-v1}
	\end{subfigure}
	\hfill
	\begin{subfigure}[b]{0.21\textwidth}
		\centering
		\includegraphics[width=\textwidth]{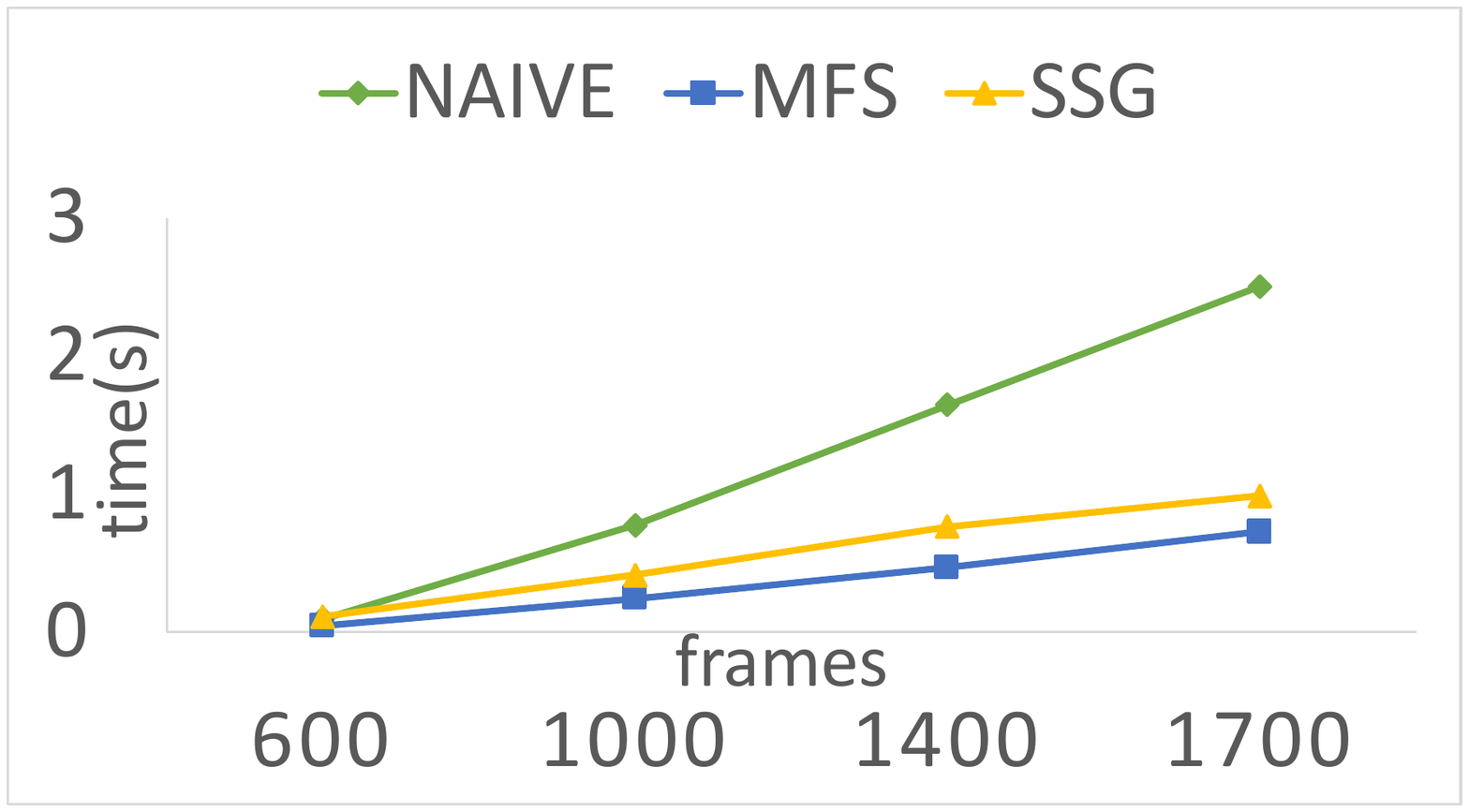}
		\caption{V2}
		\label{fig:state-frame-v2}
	\end{subfigure}
	\begin{subfigure}[b]{0.21\textwidth}
		\centering
		\includegraphics[width=\textwidth]{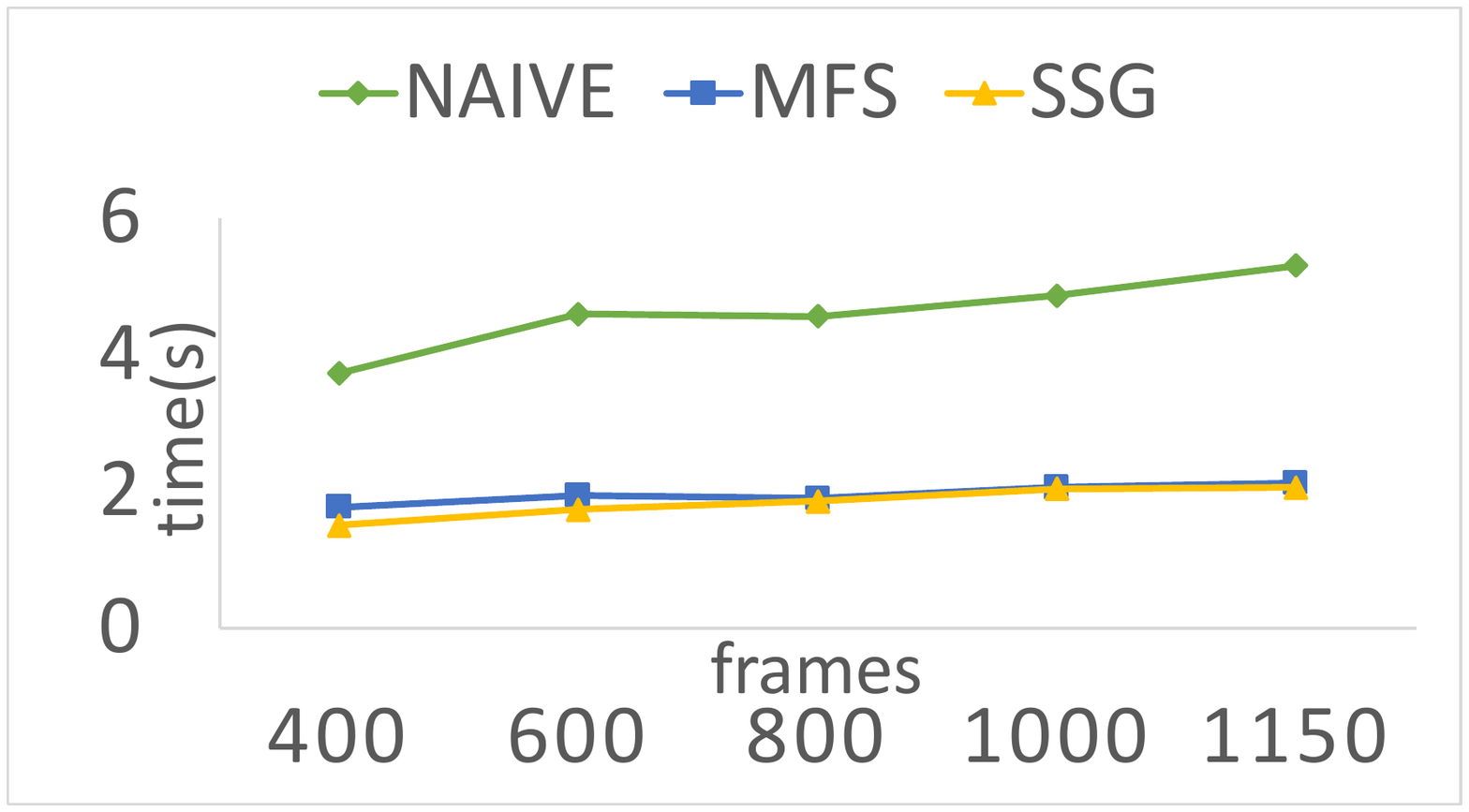}
		\caption{D1}
		\label{fig:state-frame-d1}
	\end{subfigure}
	\hfill
	\begin{subfigure}[b]{0.21\textwidth}
	\centering
	\includegraphics[width=\textwidth]{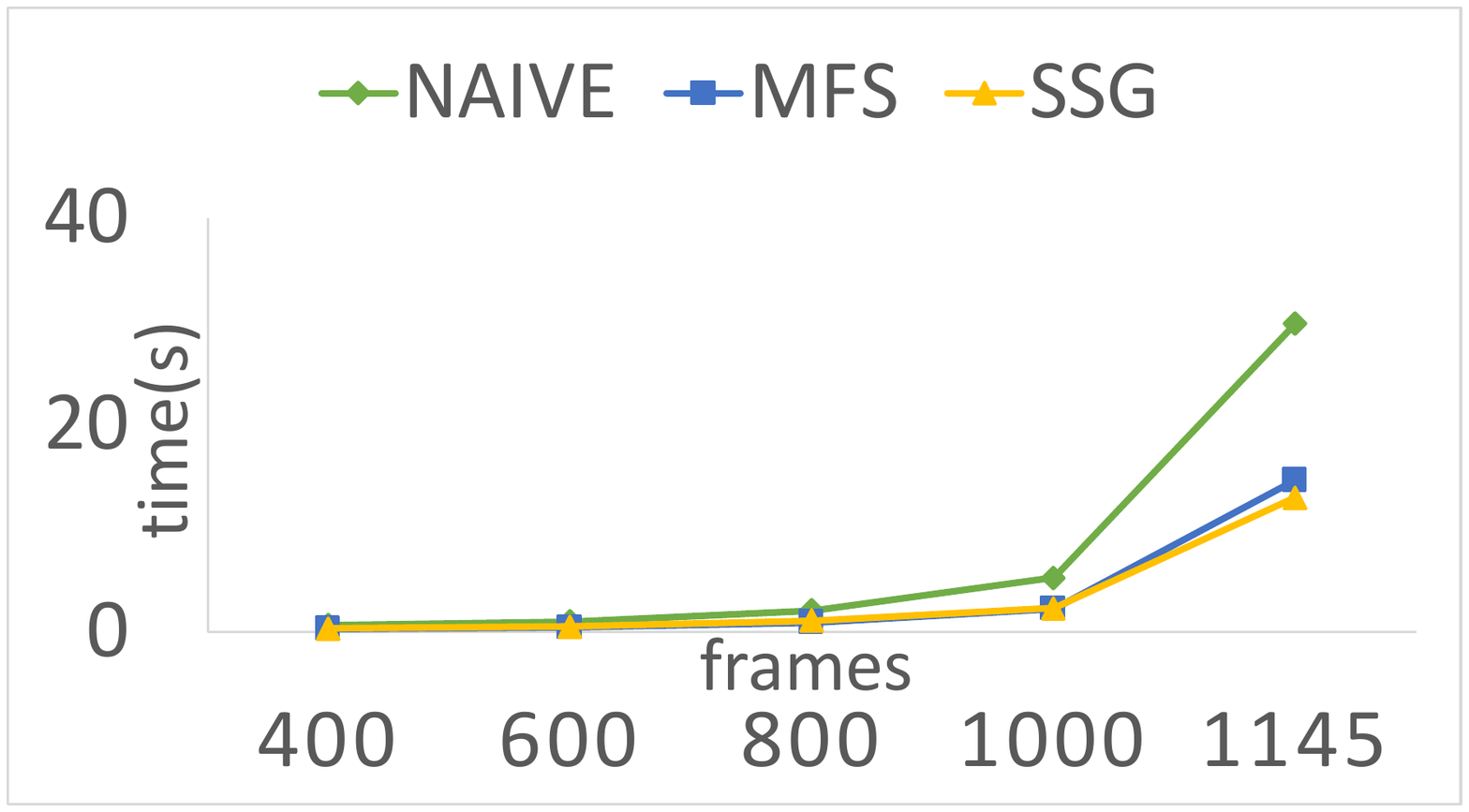}
	\caption{D2}
	\label{fig:state-frame-d2}
	\end{subfigure}
	\begin{subfigure}[b]{0.21\textwidth}
		\centering
		\includegraphics[width=\textwidth]{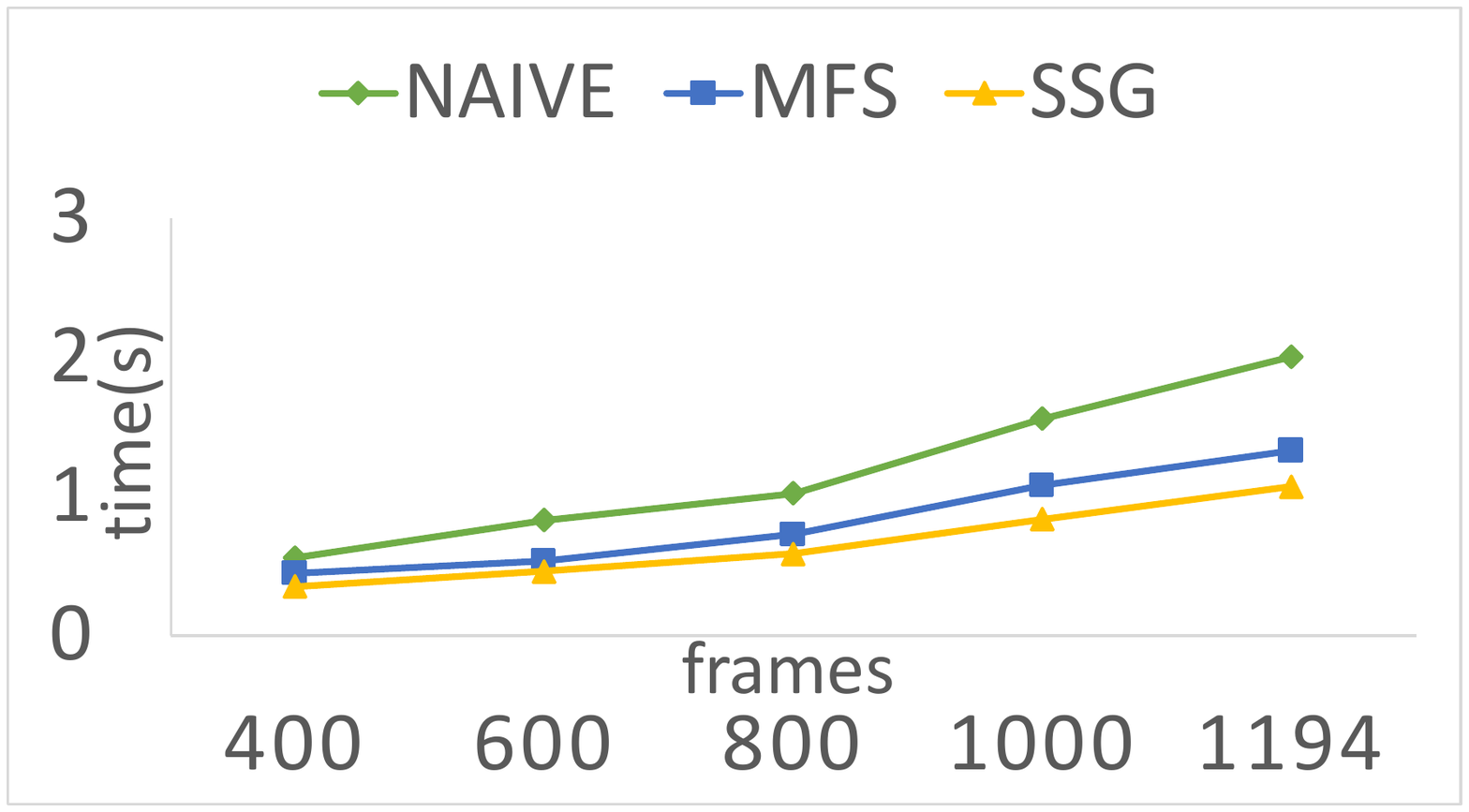}
		\caption{M1}
		\label{fig:state-frame-m1}
	\end{subfigure}
	\hfill
	\begin{subfigure}[b]{0.21\textwidth}
	\centering
	\includegraphics[width=\textwidth]{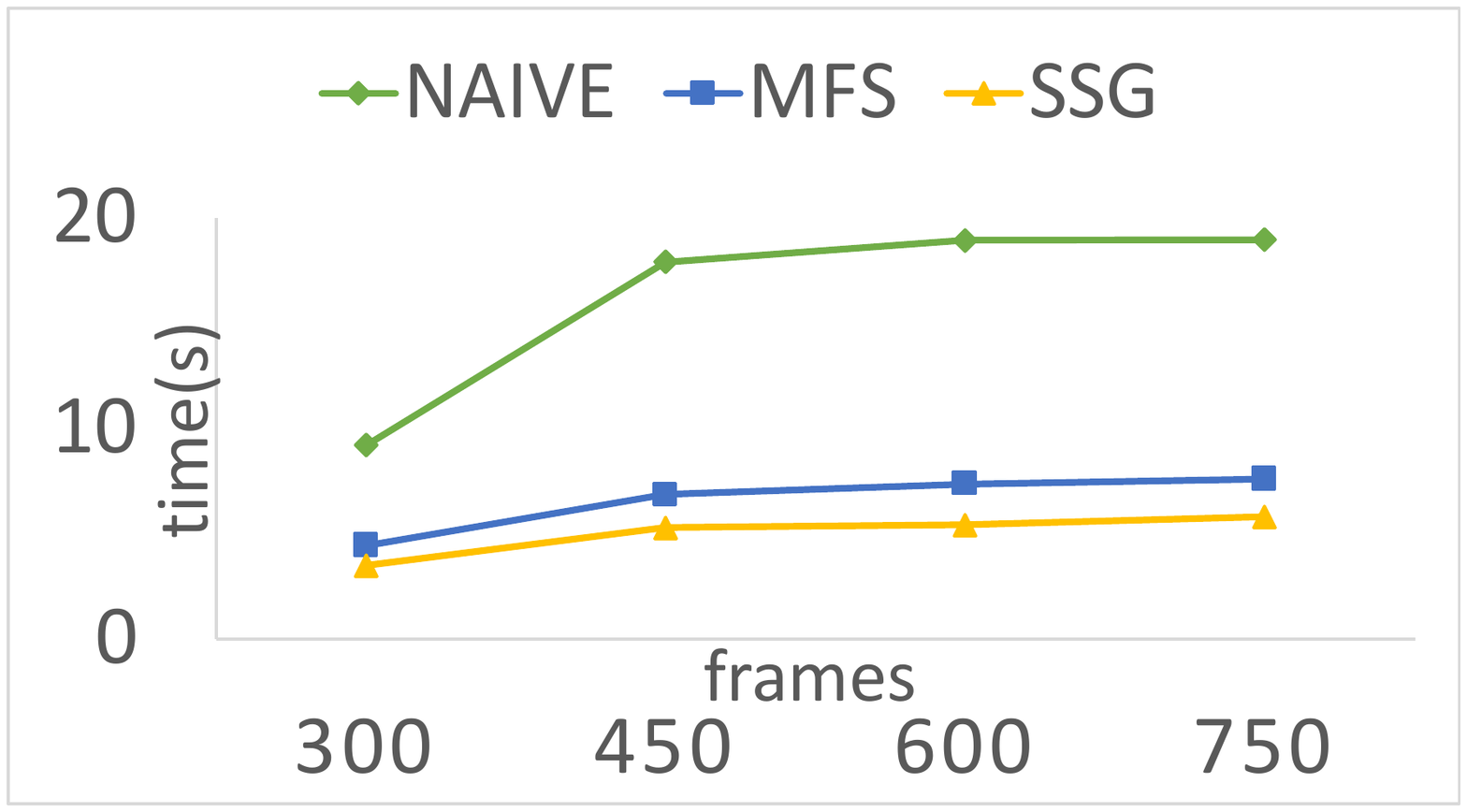}
	\caption{M2}
	\label{fig:state-frame-m2}
	\end{subfigure}
	\caption{Varying the Total Number of Frames}
	\label{fig:state-frames}
\end{figure}

With a window size $w=300$ frames, we vary the duration parameter $d$ from 180 to 270 frames shown as Figure \ref{fig:state-duration}. The performance of all methods is relatively stable since the duration parameter only influences the size of the Result State Set. All possible states still have to be maintained and computed.
Among all datasets, MFS performs best on V2, where the maximum speed up is more than 3 times than NAIVE, while SSG performs best on M2, where the maximum speedup is around 3.5 times than NAIVE.
This observation is also consistent with the previous figures, since V2 has the least number of objects per frame, while M2 has the most.

\begin{figure}[h!]
	\centering
	\begin{subfigure}[b]{0.21\textwidth}
		\centering
		\includegraphics[width=\textwidth]{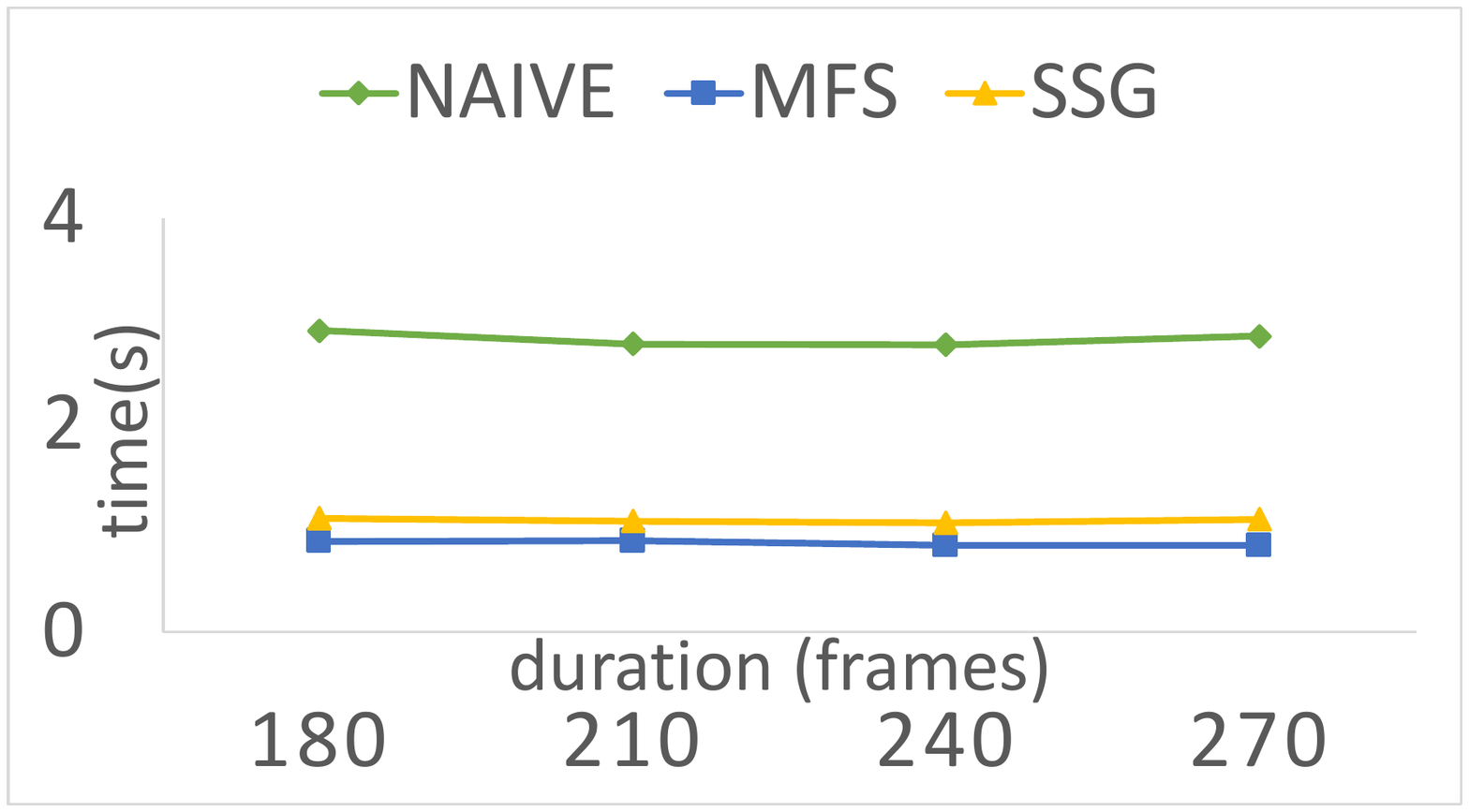}
		\caption{V1}
		\label{fig:state-duration-v1}
	\end{subfigure}
	\hfill
	\begin{subfigure}[b]{0.21\textwidth}
		\centering
		\includegraphics[width=\textwidth]{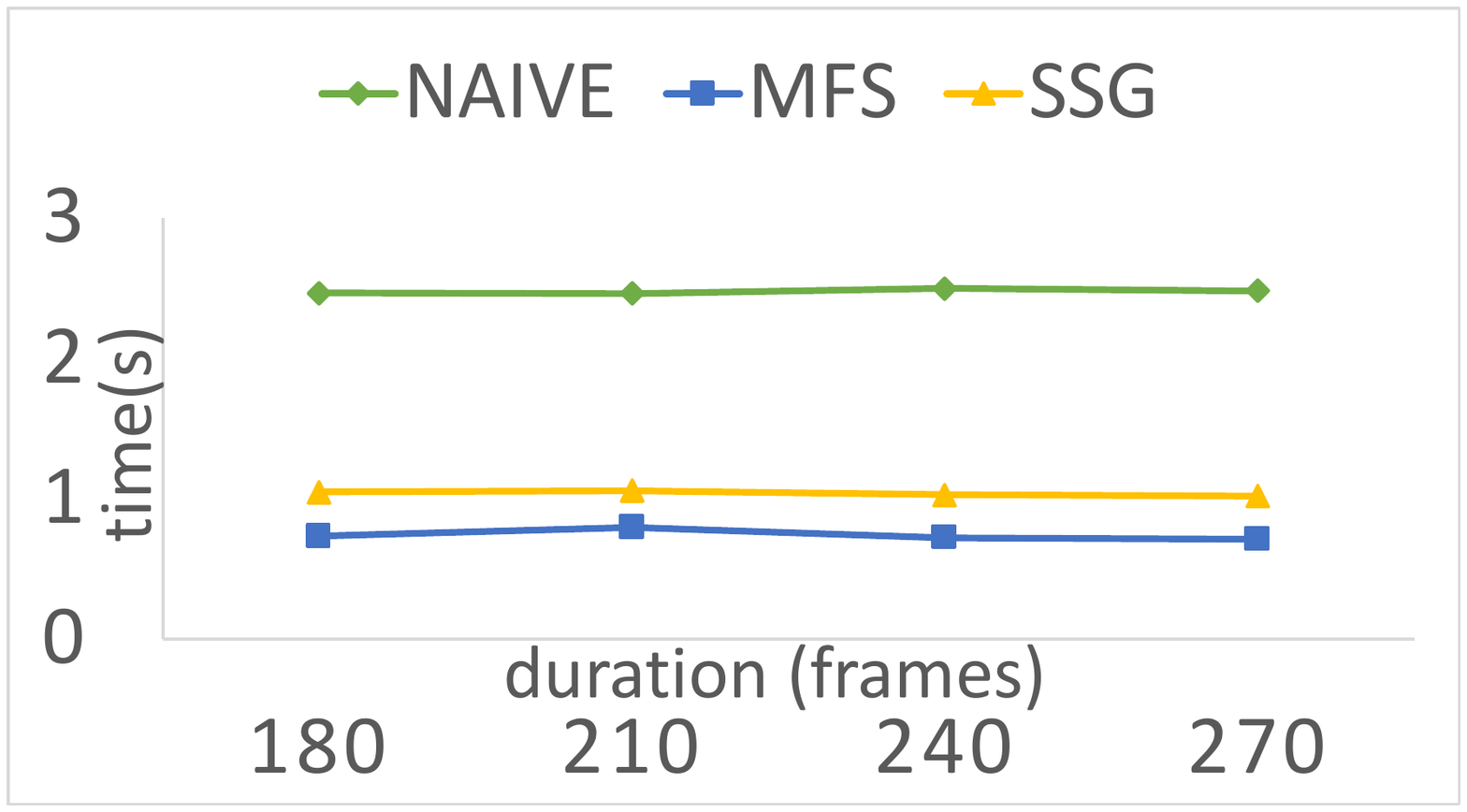}
		\caption{V2}
		\label{fig:state-duration-v2}
	\end{subfigure}
	\begin{subfigure}[b]{0.21\textwidth}
		\centering
		\includegraphics[width=\textwidth]{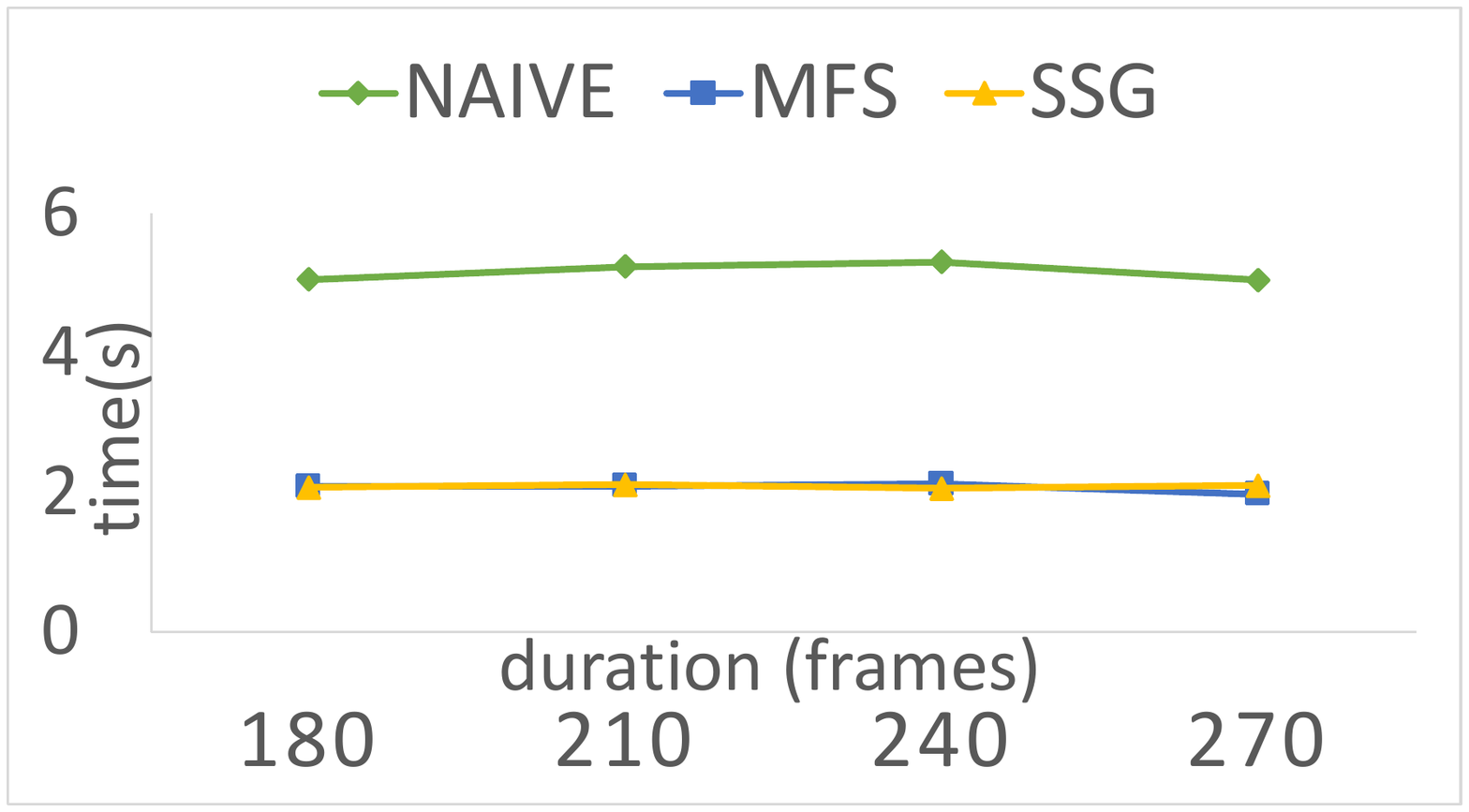}
		\caption{D1}
		\label{fig:state-duration-d1}
	\end{subfigure}
	\hfill
	\begin{subfigure}[b]{0.21\textwidth}
	\centering
	\includegraphics[width=\textwidth]{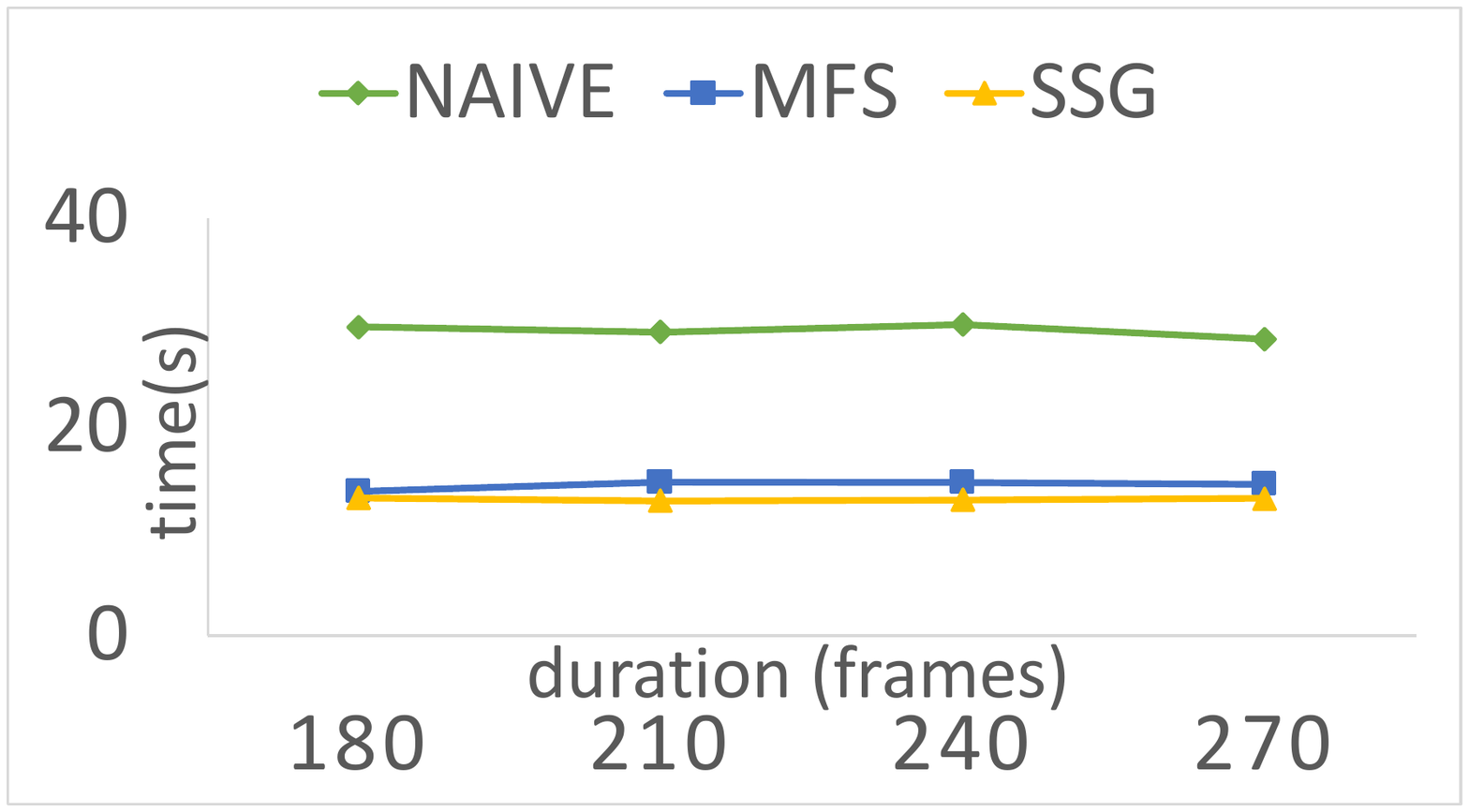}
	\caption{D2}
		\label{fig:state-duration-d2}
	\end{subfigure}
	\begin{subfigure}[b]{0.21\textwidth}
		\centering
		\includegraphics[width=\textwidth]{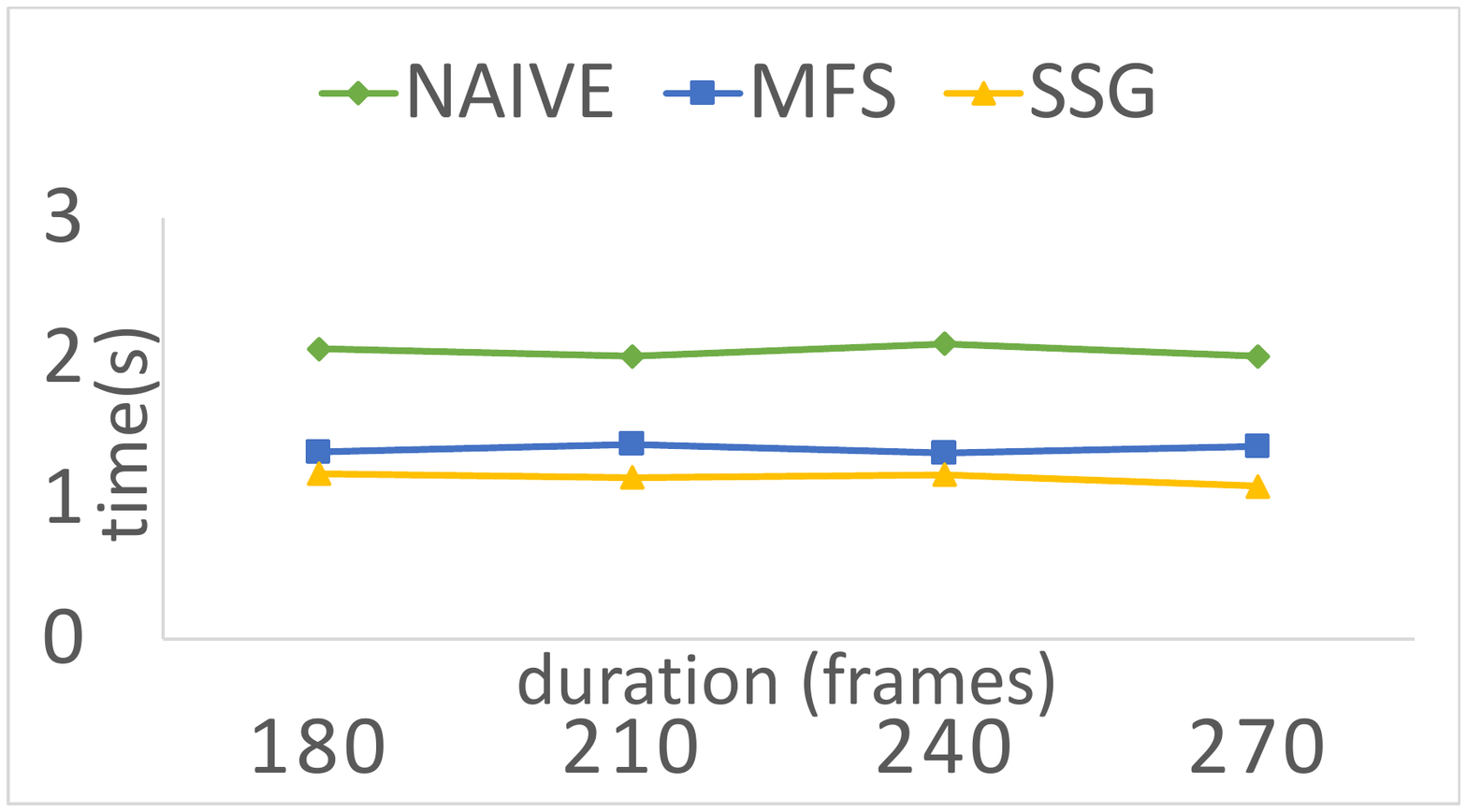}
		\caption{M1}
		\label{fig:state-duration-m1}
	\end{subfigure}
	\hfill
	\begin{subfigure}[b]{0.21\textwidth}
	\centering
	\includegraphics[width=\textwidth]{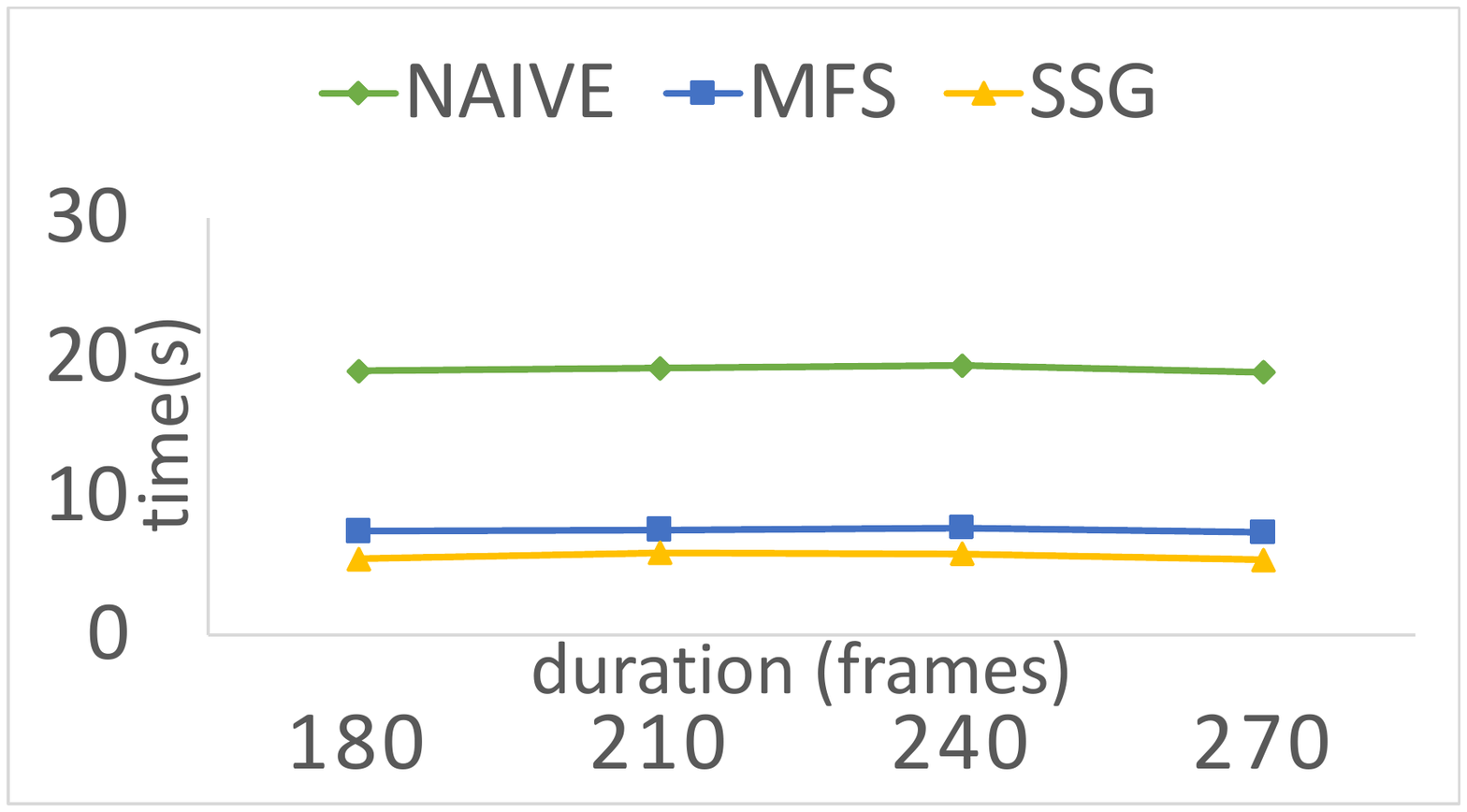}
	\caption{M2}
		\label{fig:state-duration-m2}
	\end{subfigure}
	\caption{Varying Duration $d$}
	\label{fig:state-duration}
\end{figure}

We next vary the window size $w$ and fix the duration parameter $d=240$ showing the results as Figure \ref{fig:state-window}. All methods require more time to compute as the window size increases since more states have to be maintained and computed. The NAIVE and MFS methods are penalized more, since they both have to compute object set intersection between each existing state and the new arriving frame. 
As we increase the window size, the trends remain overall the same. 
It is interesting to observe how data set characteristics affect performance.
For example, Dataset M1, M2 are captured by moving cameras; this means the duration of each object in the visible screen (Obj/F in Table \ref{tab:dataset-statistics}) is smaller and new objects are introduced frequently to the visible screen, leading to more unique states.
By increasing the window size, SSG benefits most due to its pruning power, which can be observed from Figure \ref{fig:state-window-m1} (where SSG is 40\% faster than MFS) and \ref{fig:state-window-m2} (where SSG requires almost half the time of MFS to execute).
Such trends do not exist on other datasets generated or captured by static cameras (Figure \ref{fig:state-window-v1} to \ref{fig:state-window-d2}).

\begin{figure}[h!]
	\centering
	\begin{subfigure}[b]{0.21\textwidth}
		\centering
		\includegraphics[width=\textwidth]{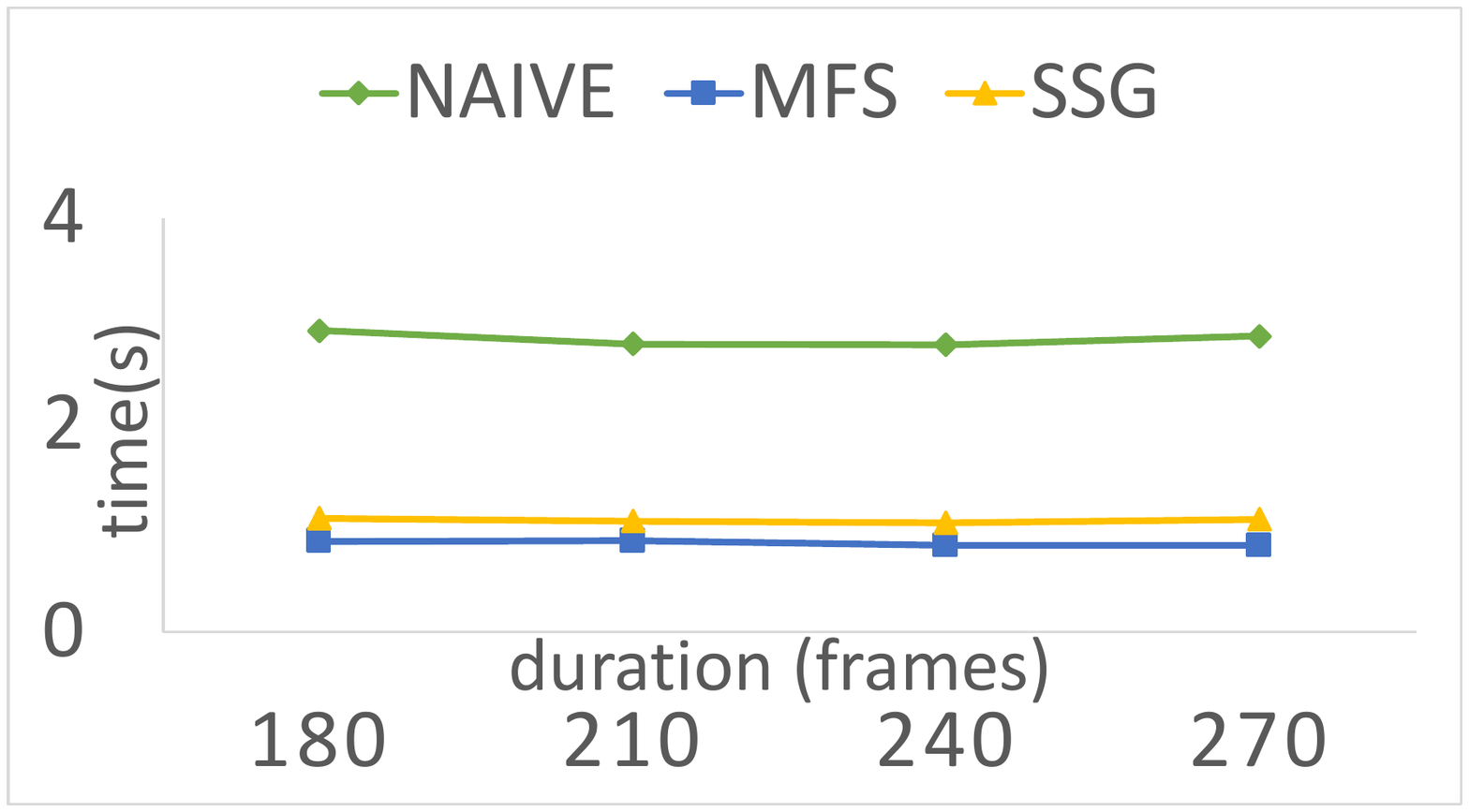}
		\caption{V1}
		\label{fig:state-window-v1}
	\end{subfigure}
	\hfill
	\begin{subfigure}[b]{0.21\textwidth}
		\centering
		\includegraphics[width=\textwidth]{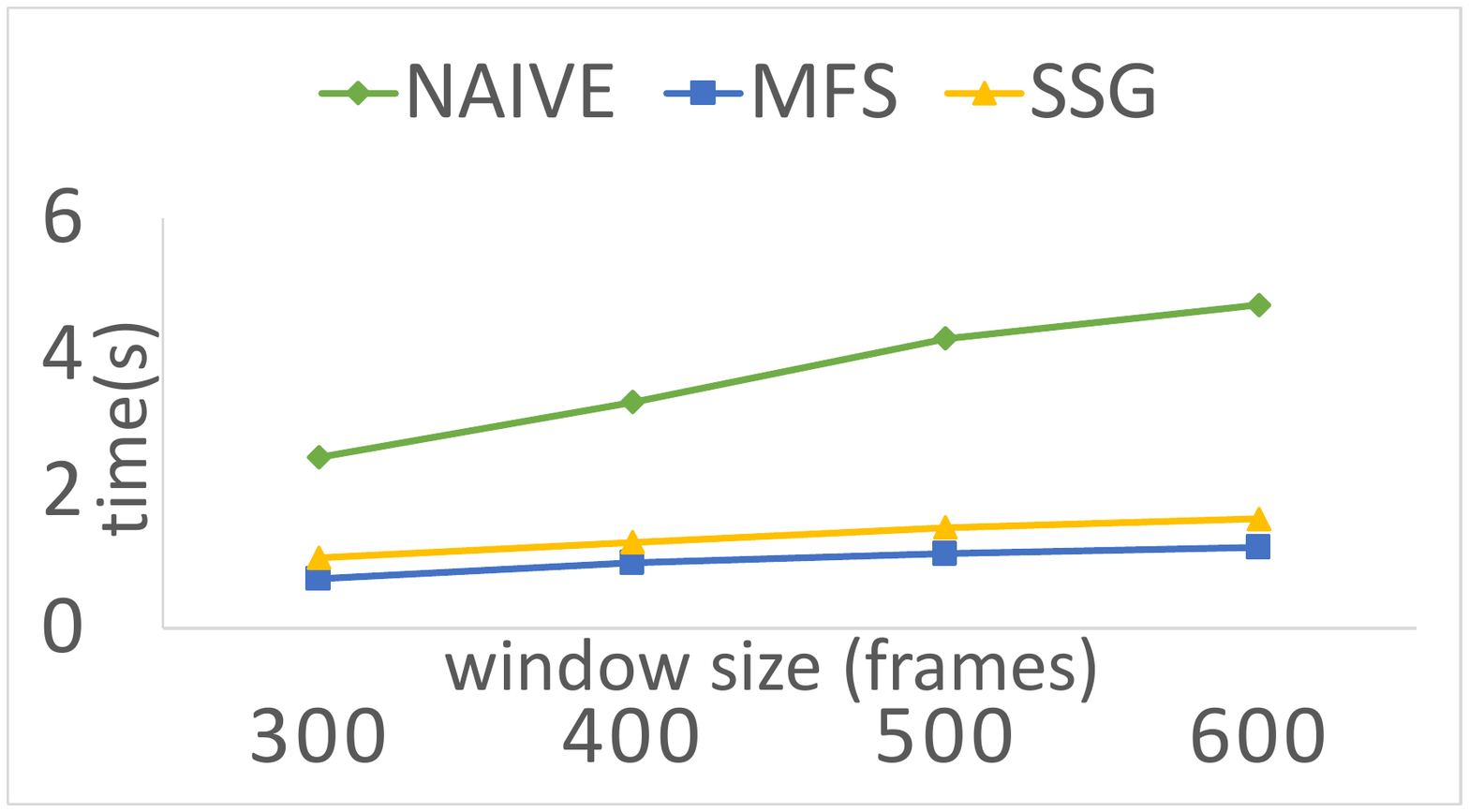}
		\caption{V2}
		\label{fig:state-window-v2}
	\end{subfigure}
	\begin{subfigure}[b]{0.21\textwidth}
		\centering
		\includegraphics[width=\textwidth]{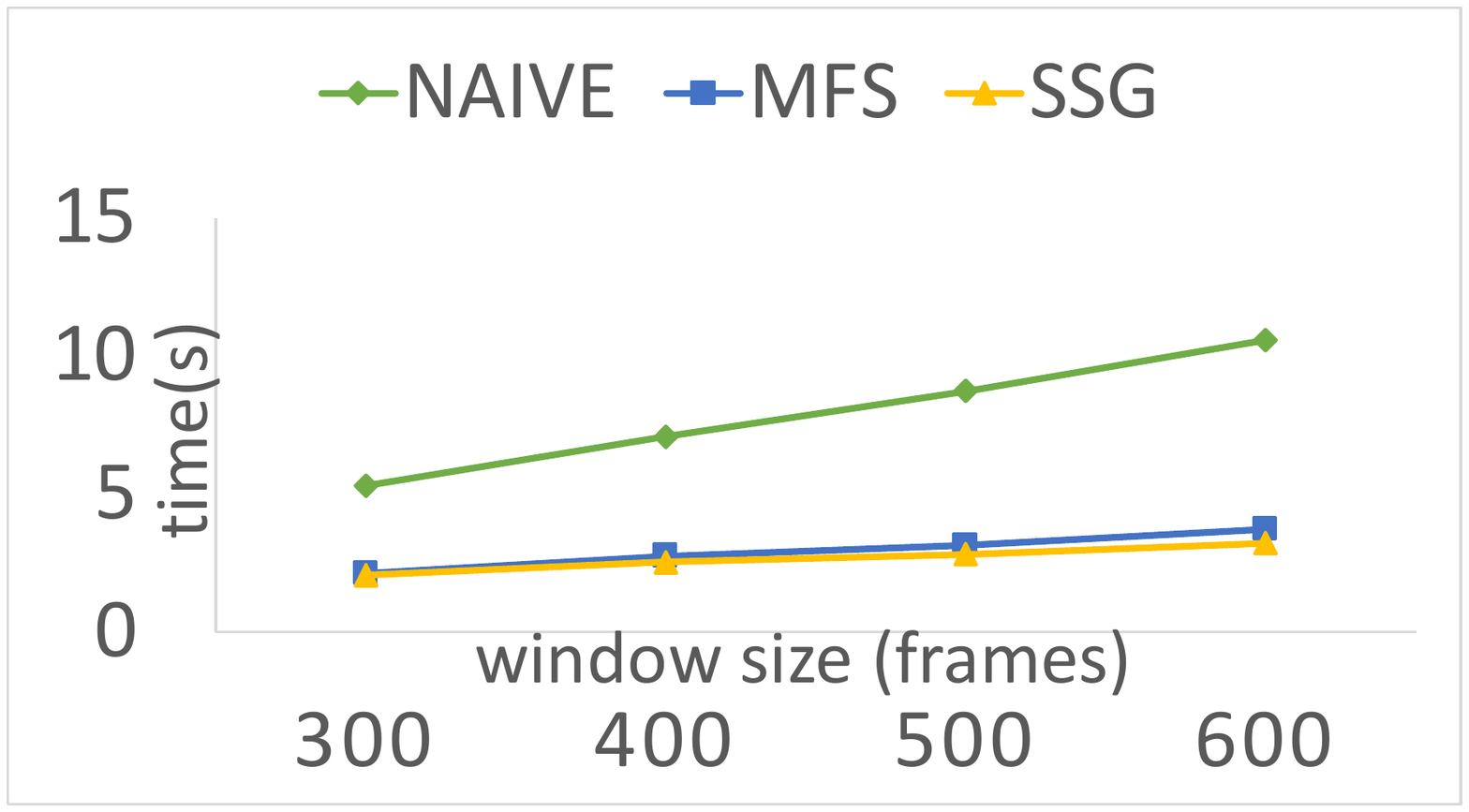}
		\caption{D1}
		\label{fig:state-window-d1}
	\end{subfigure}
	\hfill
	\begin{subfigure}[b]{0.21\textwidth}
	\centering
	\includegraphics[width=\textwidth]{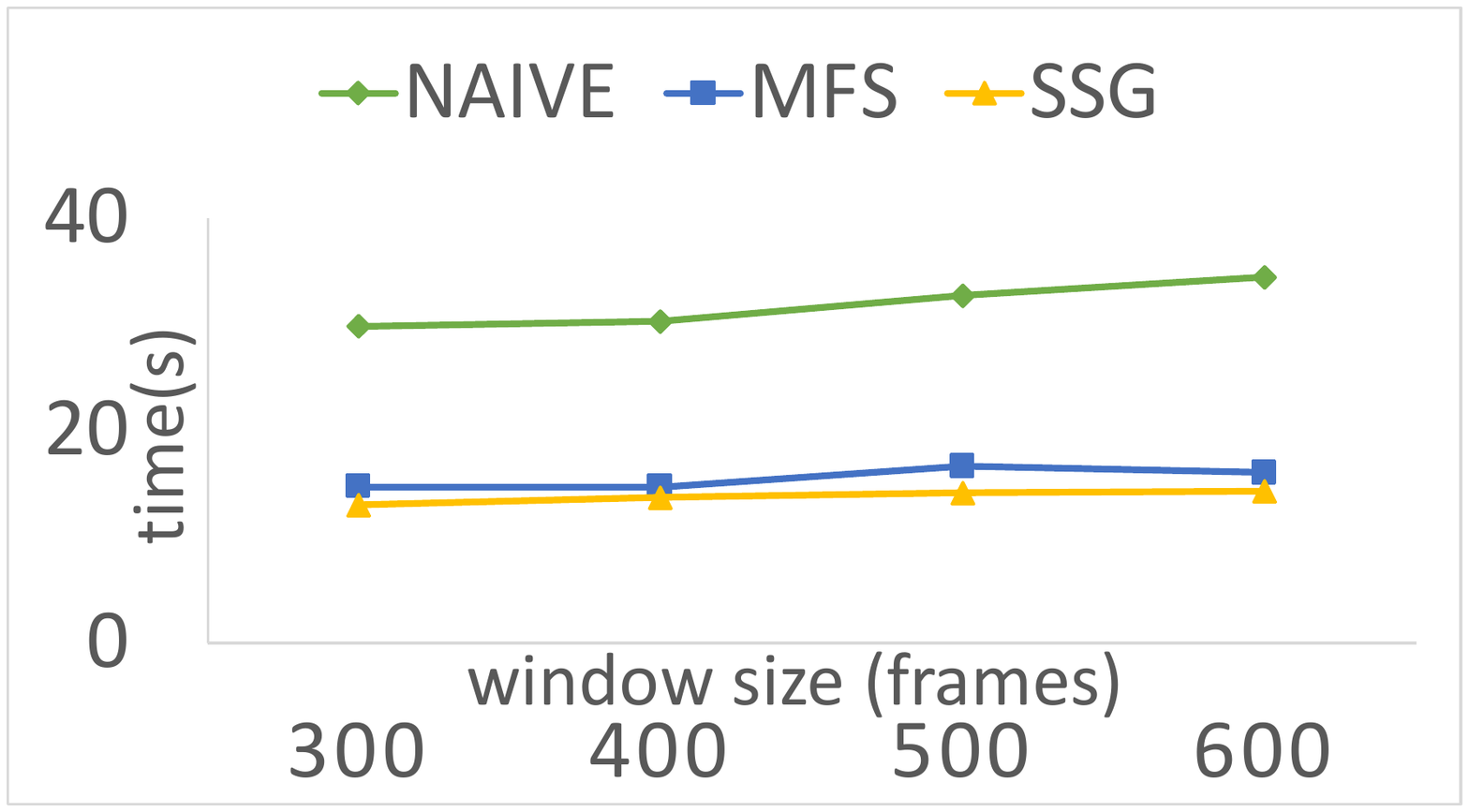}
	\caption{D2}
		\label{fig:state-window-d2}
	\end{subfigure}
	\begin{subfigure}[b]{0.21\textwidth}
		\centering
		\includegraphics[width=\textwidth]{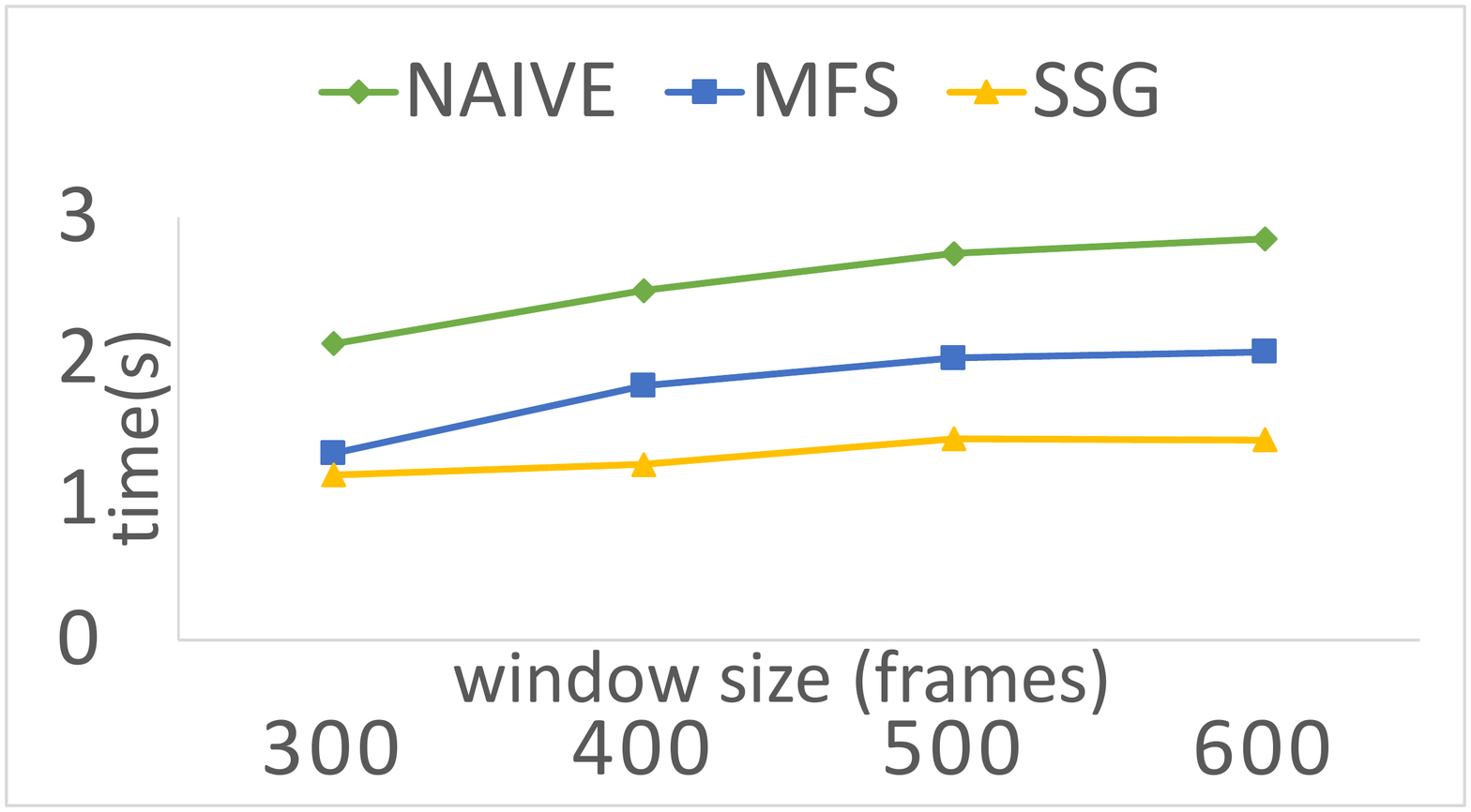}
		\caption{M1}
		\label{fig:state-window-m1}
	\end{subfigure}
	\hfill
	\begin{subfigure}[b]{0.21\textwidth}
	\centering
	\includegraphics[width=\textwidth]{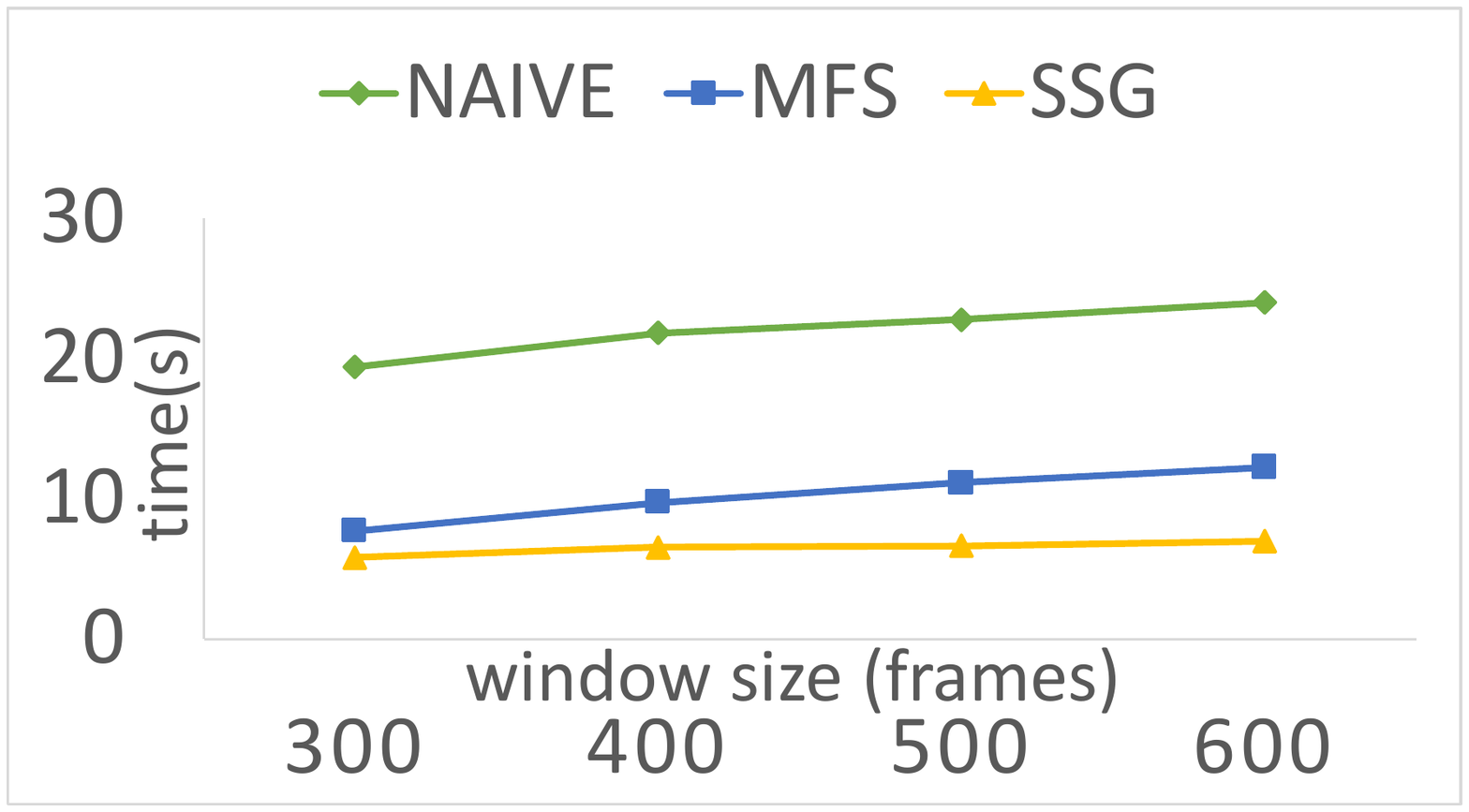}
	\caption{M2}
		\label{fig:state-window-m2}
	\end{subfigure}
	\caption{Varying Window Size $w$}
	\label{fig:state-window}
\end{figure}

So far both real and synthetic datasets utilized the intrinsic number of object occlusions present in them.
To introduce more occlusions (and be able to vary them) we reuse the same object id after an object disappears from the video.
Thus, to simulate more occlusions, we introduce an occlusion parameter, $p_o$.
Each object id will be reused at most $p_o$ times. Figure \ref{fig:state-occ} depicts the results, where $p_o$ varies from 0 to 3.
In general, more occlusions increase the chances that the intersection of object sets between states is non-empty. This penalizes the overall performance as larger number of states have to be maintained. Such trends can be observed for both synthetic and real data sets.
Since both MFS and SSG provide the ability to remove invalid states early, both methods can benefit from more occlusions. 
For example, in Figure \ref{fig:state-occ-v1}, MFS is more than 3.8 times faster than NAIVE when $p_o=3$, while SSG is more than 2.8 times faster.
As $p_o$ increases, more states may need to be explored on the graph, thus the pruning power is reduced. We can observe MFS can perform slightly better than SSG when $p_o=3$ (Figure \ref{fig:state-occ-m1}).

\begin{figure}[h!]
	\centering
	\begin{subfigure}[b]{0.21\textwidth}
		\centering
		\includegraphics[width=\textwidth]{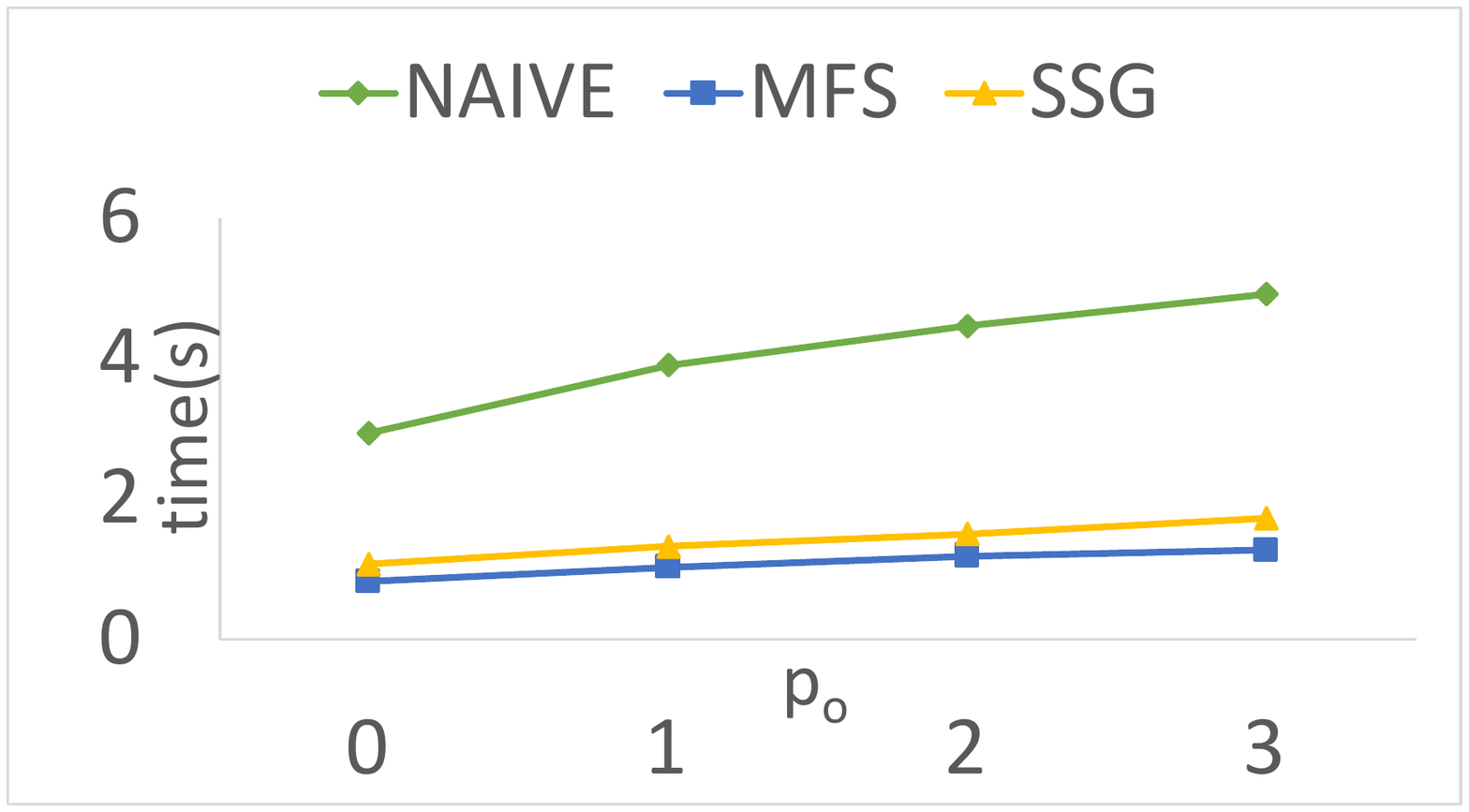}
		\caption{V1}
		\label{fig:state-occ-v1}
	\end{subfigure}
	\hfill
	\begin{subfigure}[b]{0.21\textwidth}
		\centering
		\includegraphics[width=\textwidth]{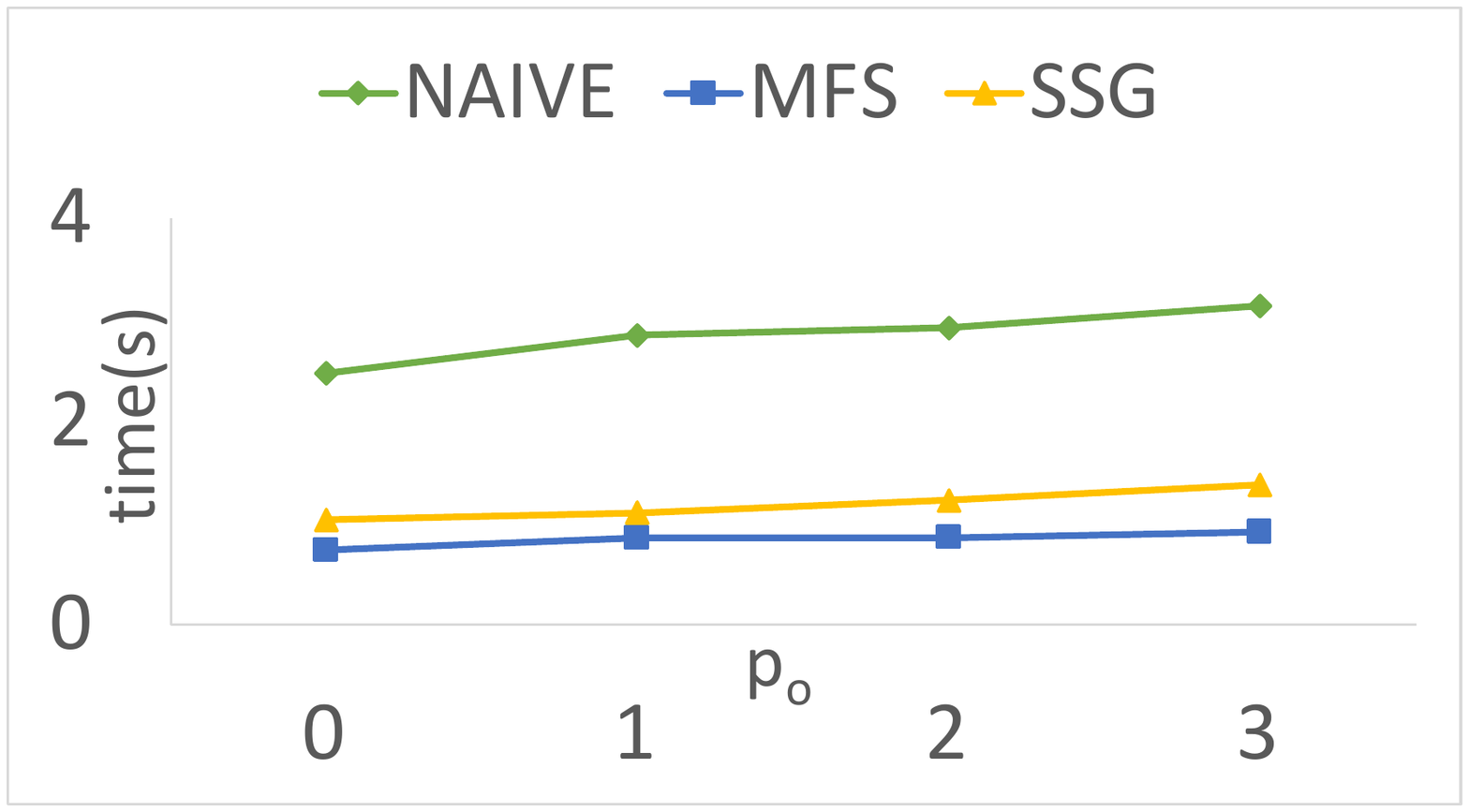}
		\caption{V2}
		\label{fig:state-occ-v2}
	\end{subfigure}
	\begin{subfigure}[b]{0.21\textwidth}
		\centering
		\includegraphics[width=\textwidth]{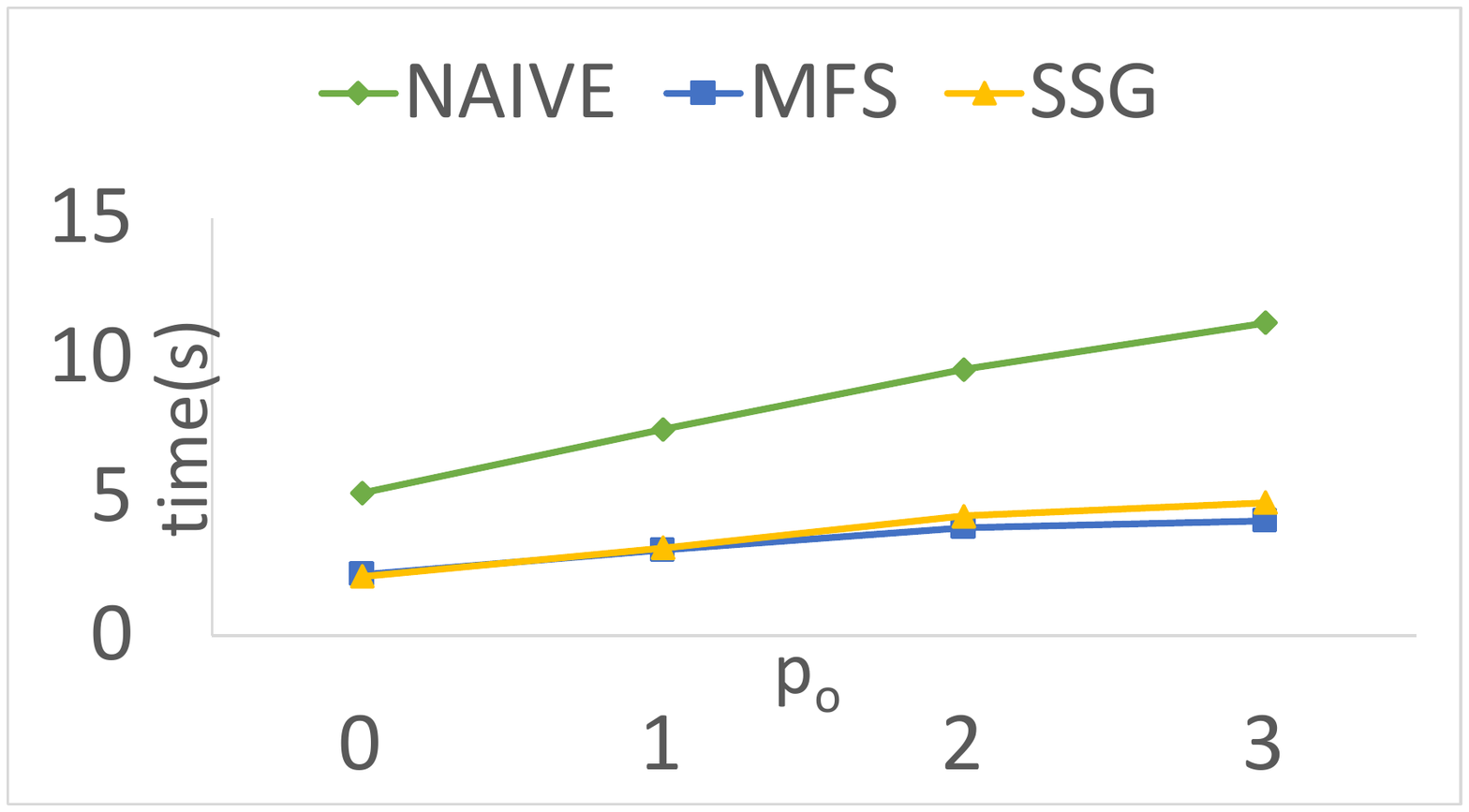}
		\caption{D1}
		\label{fig:state-occ-d1}
	\end{subfigure}
	\hfill
	\begin{subfigure}[b]{0.21\textwidth}
	\centering
	\includegraphics[width=\textwidth]{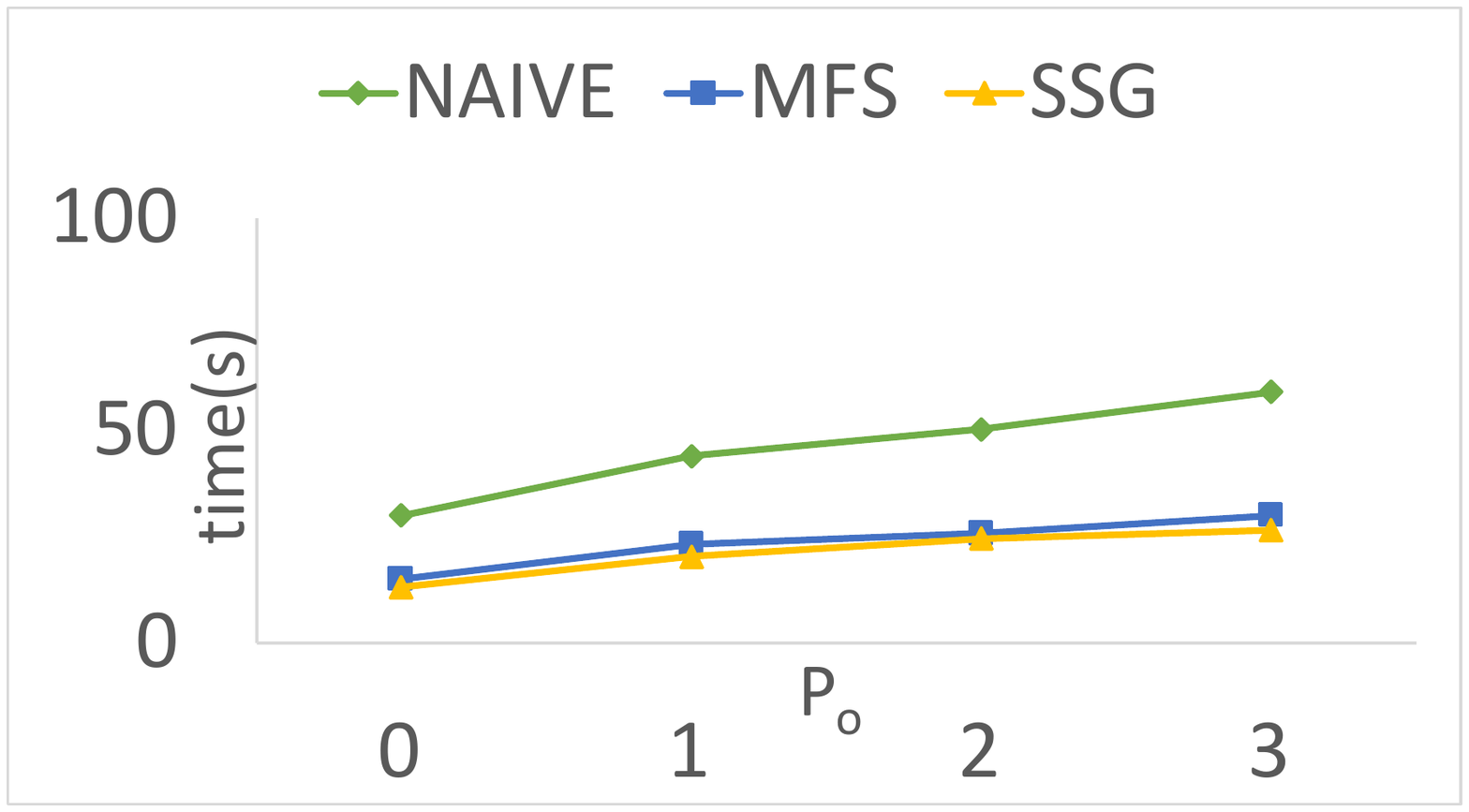}
	\caption{D2}
		\label{fig:state-occ-d2}
	\end{subfigure}
	\begin{subfigure}[b]{0.21\textwidth}
		\centering
		\includegraphics[width=\textwidth]{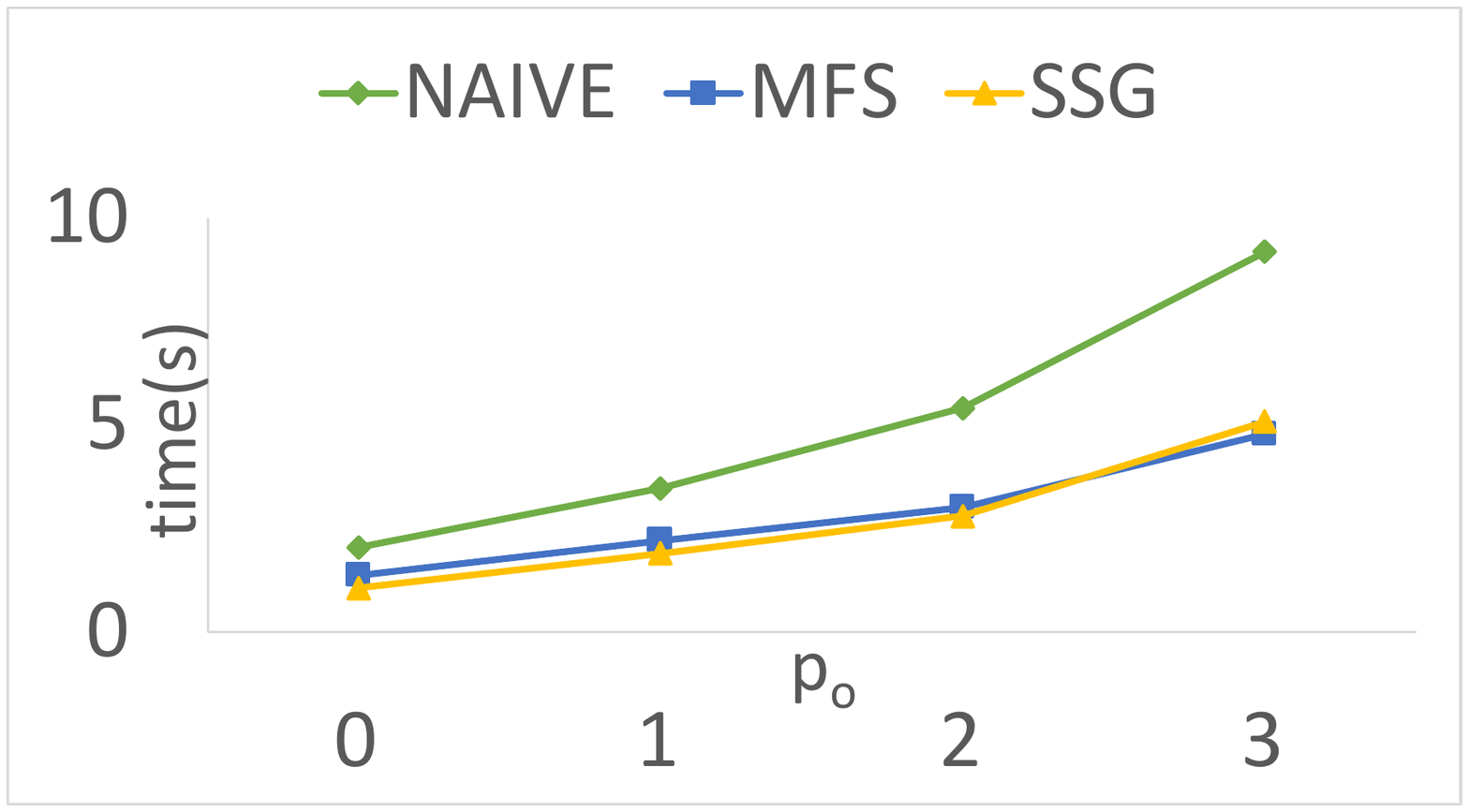}
		\caption{M1}
		\label{fig:state-occ-m1}
	\end{subfigure}
	\hfill
	\begin{subfigure}[b]{0.21\textwidth}
	\centering
	\includegraphics[width=\textwidth]{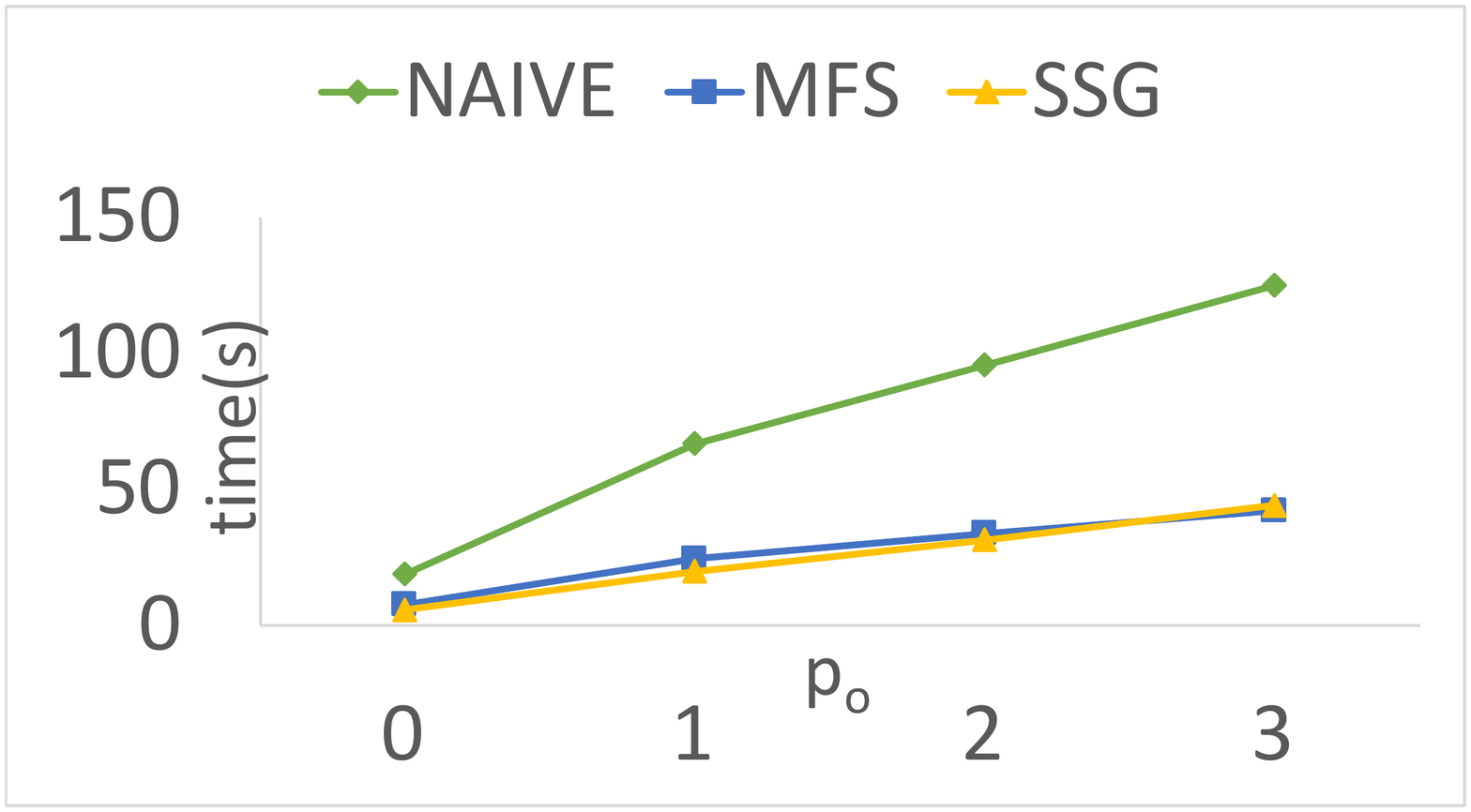}
	\caption{M2}
		\label{fig:state-occ-m2}
	\end{subfigure}
	\caption{Varying \# Occlusions $p_o$}
	\label{fig:state-occ}
\end{figure}

In summary, it is evident that both MFS and SSG offer large performance improvements compared to the NAIVE method.
On videos with small number of objects per frame, MFS could perform better than SSG.
When the number of objects per frame is relatively larger and/or the duration of objects in the visible screen is relatively smaller, SSG has the best performance, up to more than 3 times faster than the NAIVE method and can be up to one time faster than MFS. 

In general, the performance trade-off between MFS and SSG depends on the characteristics of different datasets.
On datasets with more objects per frame and/or data sets captured by moving cameras (yielding shorter duration of objects in the visible screen and increased number of new objects entering the visible screen, thus more unique states), SSG achieves higher speedups, as is observed from experimental results.

\subsection{Query Evaluation} \label{sect:exp:eval}
Based on MCOS Generation methods, we benchmark our Query Evaluation module (Section \ref{sect:eval:eval}). 
We vary the number of queries from 10 to 50, as shown in Figure \ref{fig:state-query}.
The y-axis depicts overall performance including both MCOS generation and query evaluation. The performance of each method remains almost the same even when the number of queries increases.
Both MFS and SSG are more than 2 times faster than NAIVE in Figure \ref{fig:state-query-s}, while SSG is even better than MFS in Figure \ref{fig:state-query-r}, with an overall speedup of more than 3. 
Thus, it is evident that the overheads of query evaluation on MCOS are negligible compared to that of state maintenance.

\begin{figure}
	\centering
	\begin{subfigure}[b]{0.21\textwidth}
		\centering
		\includegraphics[width=\textwidth]{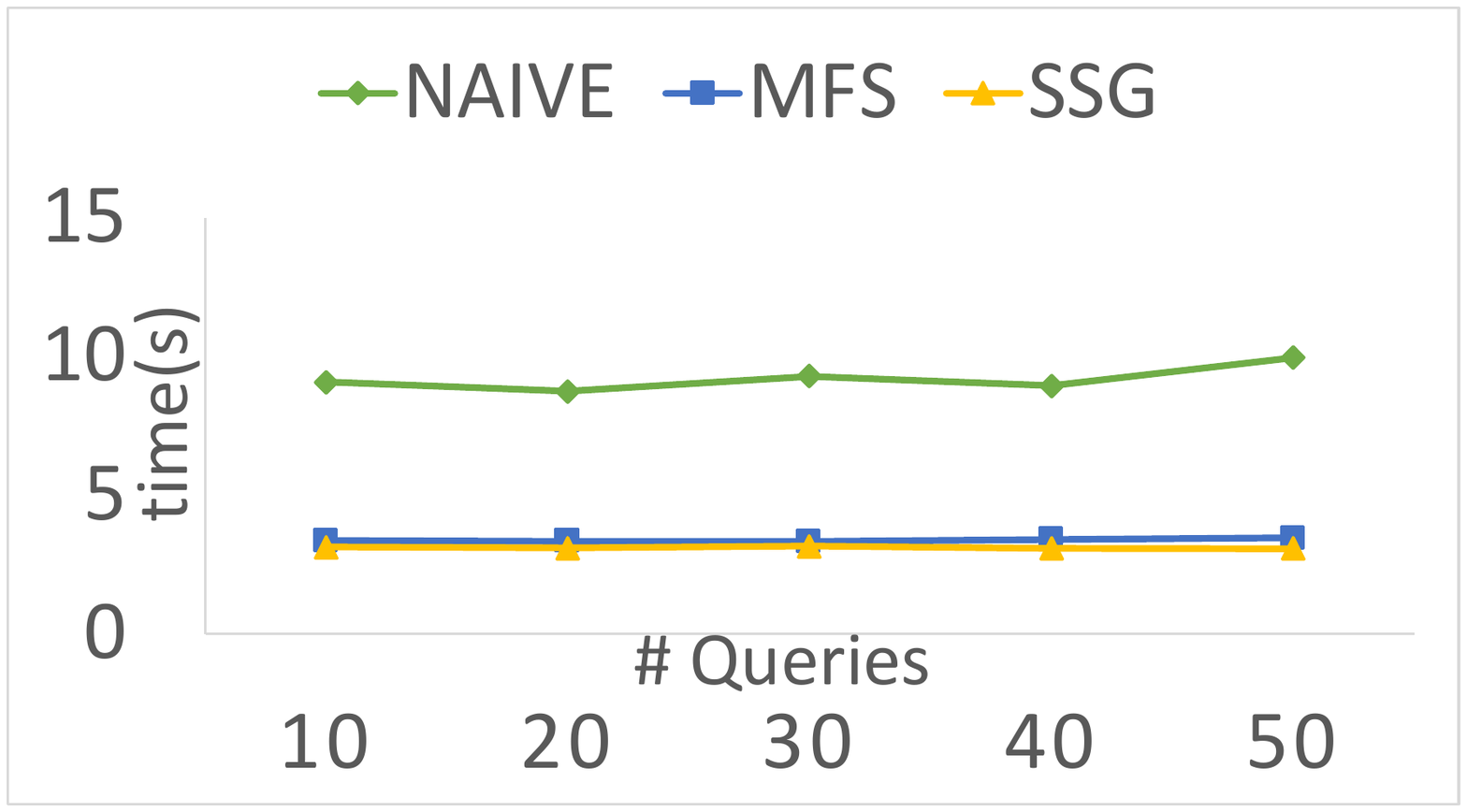}
		\caption{Synthetic (V1)}
		\label{fig:state-query-s}
	\end{subfigure}
	\hfill
	\begin{subfigure}[b]{0.21\textwidth}
		\centering
		\includegraphics[width=\textwidth]{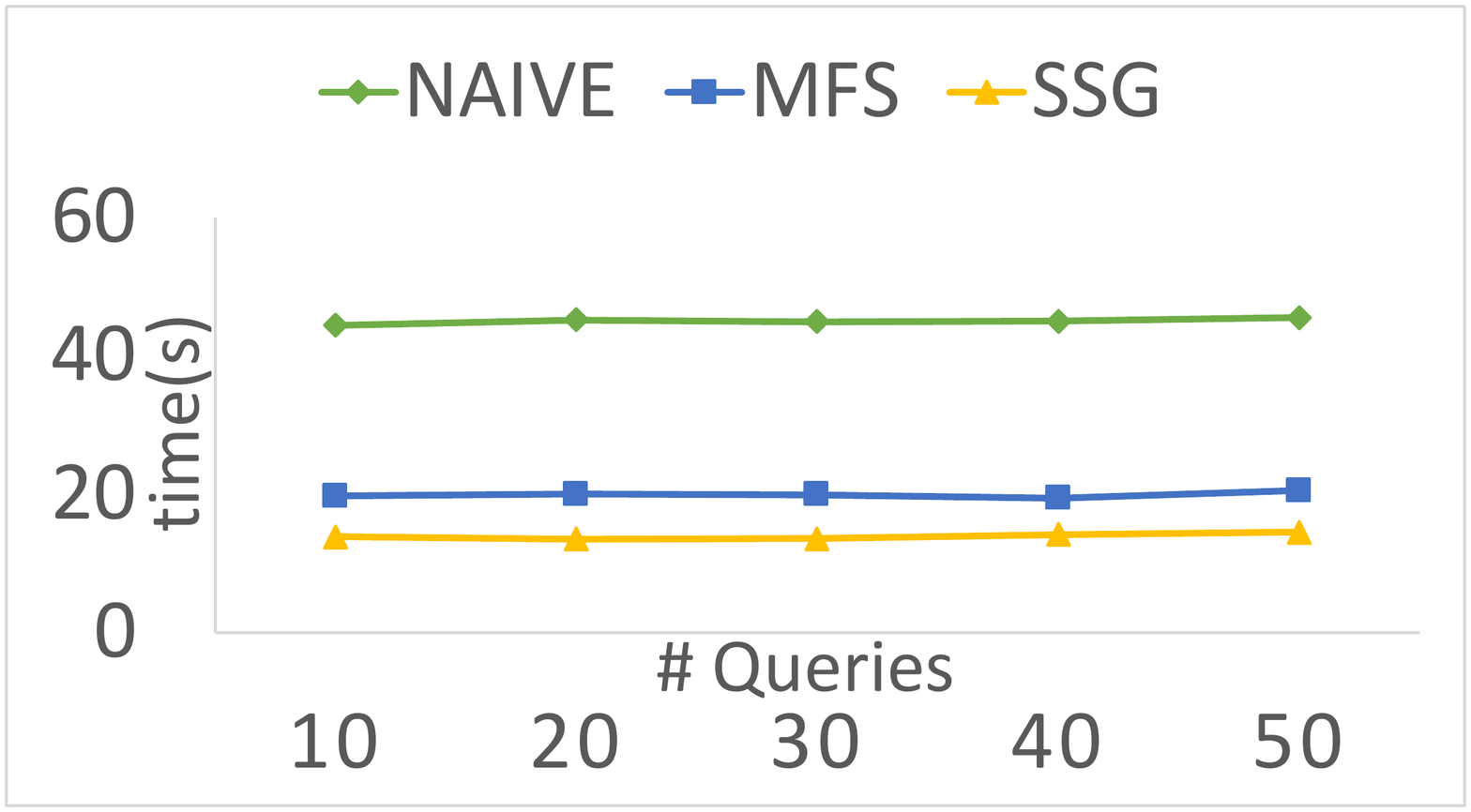}
		\caption{Real data (M2)}
		\label{fig:state-query-r}
	\end{subfigure}
	\caption{Varying \# Queries}
	\label{fig:state-query}
\end{figure}

To evaluate the pruning strategy introduced in Section \ref{sect:eval:prune}, we implemented five different methods: MFS\_O and SSG\_O, which implement the pruning strategy based on MFS and SSG methods; NAIVE\_E, MFS\_E and SSG\_E, which only follow the CNFEvalE method proposed in Section \ref{sect:eval:eval}. 
We generate 100 queries containing $\ge$ conditions only.
Let $C$ be the set containing all conditions from the given queries; we use $n_{min}= \min_{\forall c \in C} n_c$ to denote the minimum value in all conditions, where $n_c$ is the threshold value in condition $c$ (conditions are of the form $x \geq n_c$, where $x$ is the object class).
We vary $n_{min}$ from 1 to 9, the evaluation result on real datasets is shown as Figure \ref{fig:state-oeval-real}.
As $n_{min}$ increases, methods adopting SSG perform better than those adopting MFS in general. For example, in Figure \ref{fig:state-oeval-m1}, when $n_min=3$, SSG\_O is 3 times faster than NAIVE, while MFS\_O is only 2 times faster. Similarly, in Figure \ref{fig:state-oeval-d2}, when $n_min=5$, SSG\_O is also around one time faster than MFS\_O compared to NAIVE, where the speedup is five times.
When $n_{min}$ is large enough, we observe a significant performance improvement on all methods that adopt the pruning strategy (MFS\_O and SSG\_O). Among them, the SSG\_O has the best performance in general.
When $n_{min}=1$, methods adopting the pruning strategy are slightly more expensive on some datasets (Figure \ref{fig:state-oeval-d1}, \ref{fig:state-oeval-d2}) since more states in the graph are evaluated on the given queries.
We observe that methods based on SSG are always the best compared to other approaches, while the pruning strategy offers significant improvements on both MFS and SSG methods. 
When the pruning strategy is enabled, the optimized approach provides more than 100 times speedup on real datasets, as is evident in Figure \ref{fig:state-oeval-real}, where the $n_{min}$ is set to 9.

\begin{figure}
	\centering
	\begin{subfigure}[b]{0.22\textwidth}
		\centering
		\includegraphics[width=\textwidth]{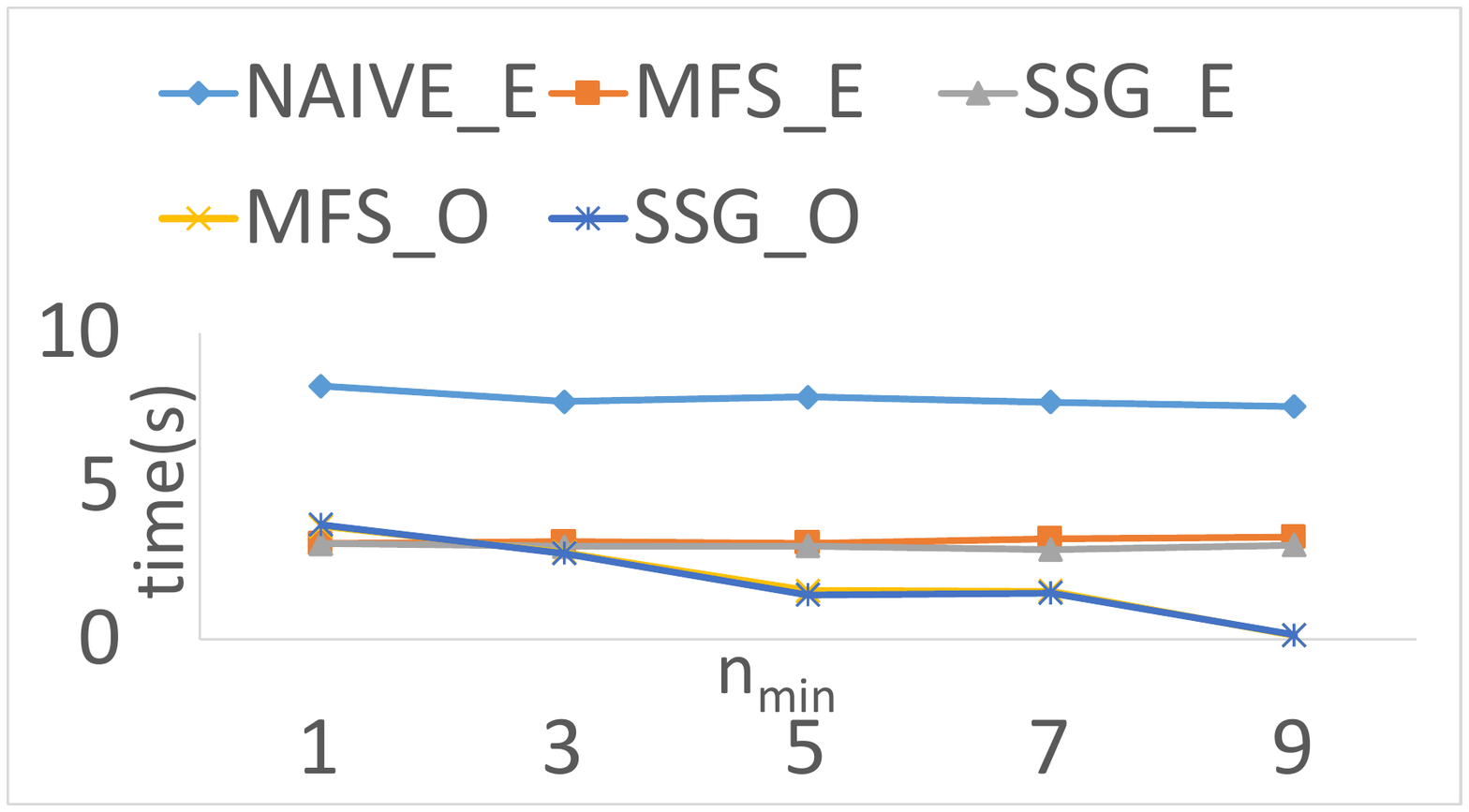}
		\caption{D1}
		\label{fig:state-oeval-d1}
	\end{subfigure}
	\hfill
	\begin{subfigure}[b]{0.22\textwidth}
		\centering
		\includegraphics[width=\textwidth]{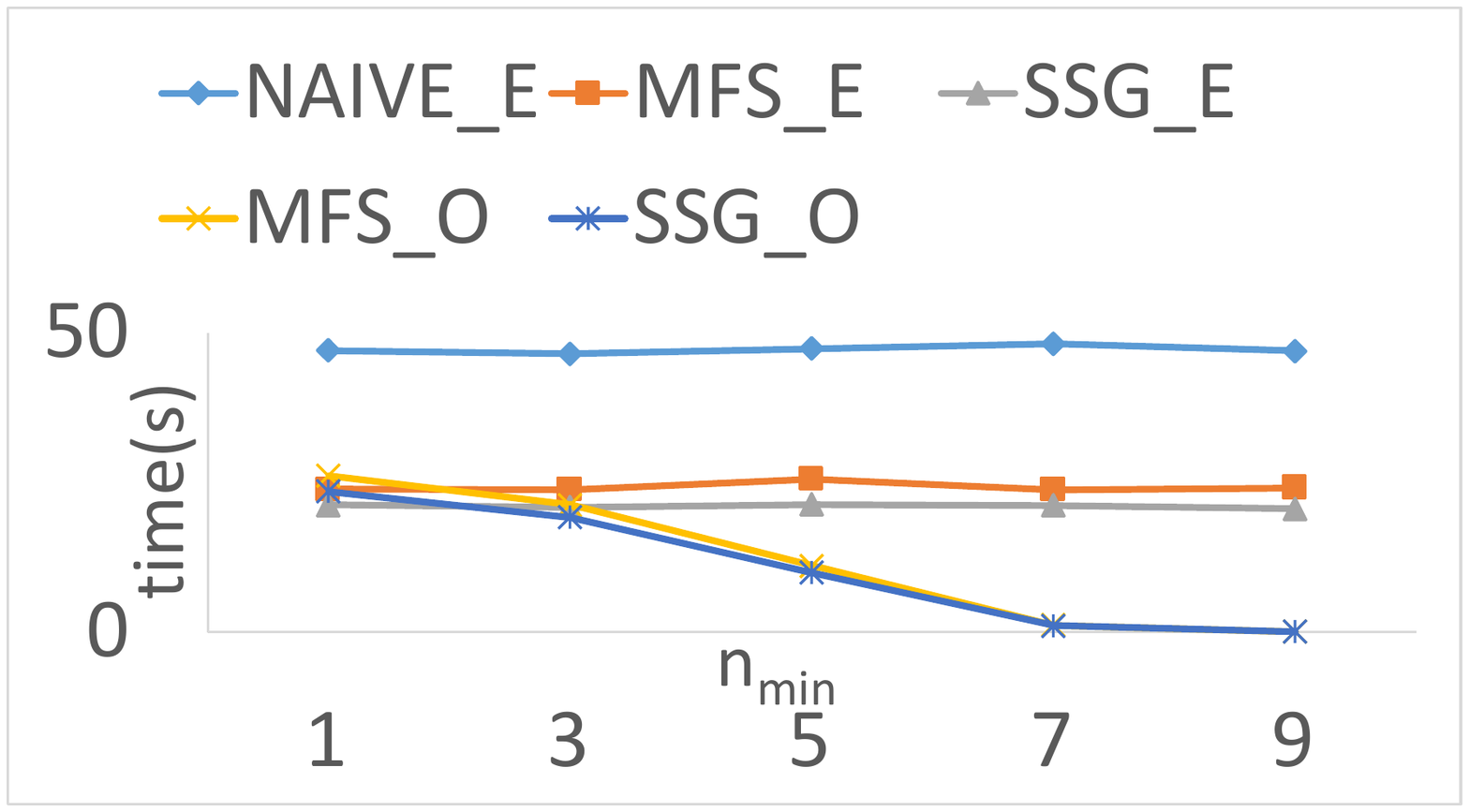}
		\caption{D2}
		\label{fig:state-oeval-d2}
	\end{subfigure}
	\begin{subfigure}[b]{0.22\textwidth}
		\centering
		\includegraphics[width=\textwidth]{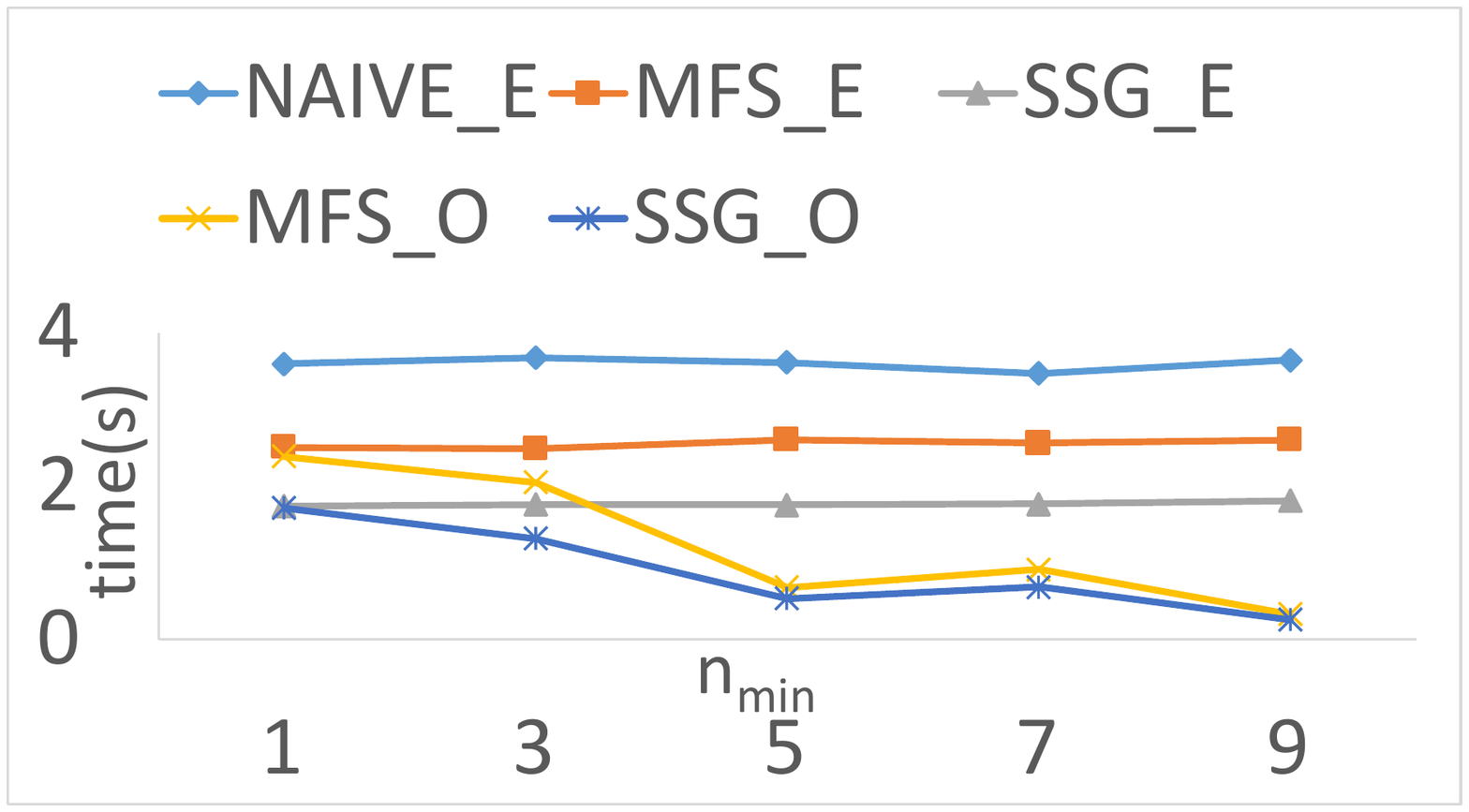}
		\caption{M1}
		\label{fig:state-oeval-m1}
	\end{subfigure}
	\hfill
	\begin{subfigure}[b]{0.22\textwidth}
	\centering
	\includegraphics[width=\textwidth]{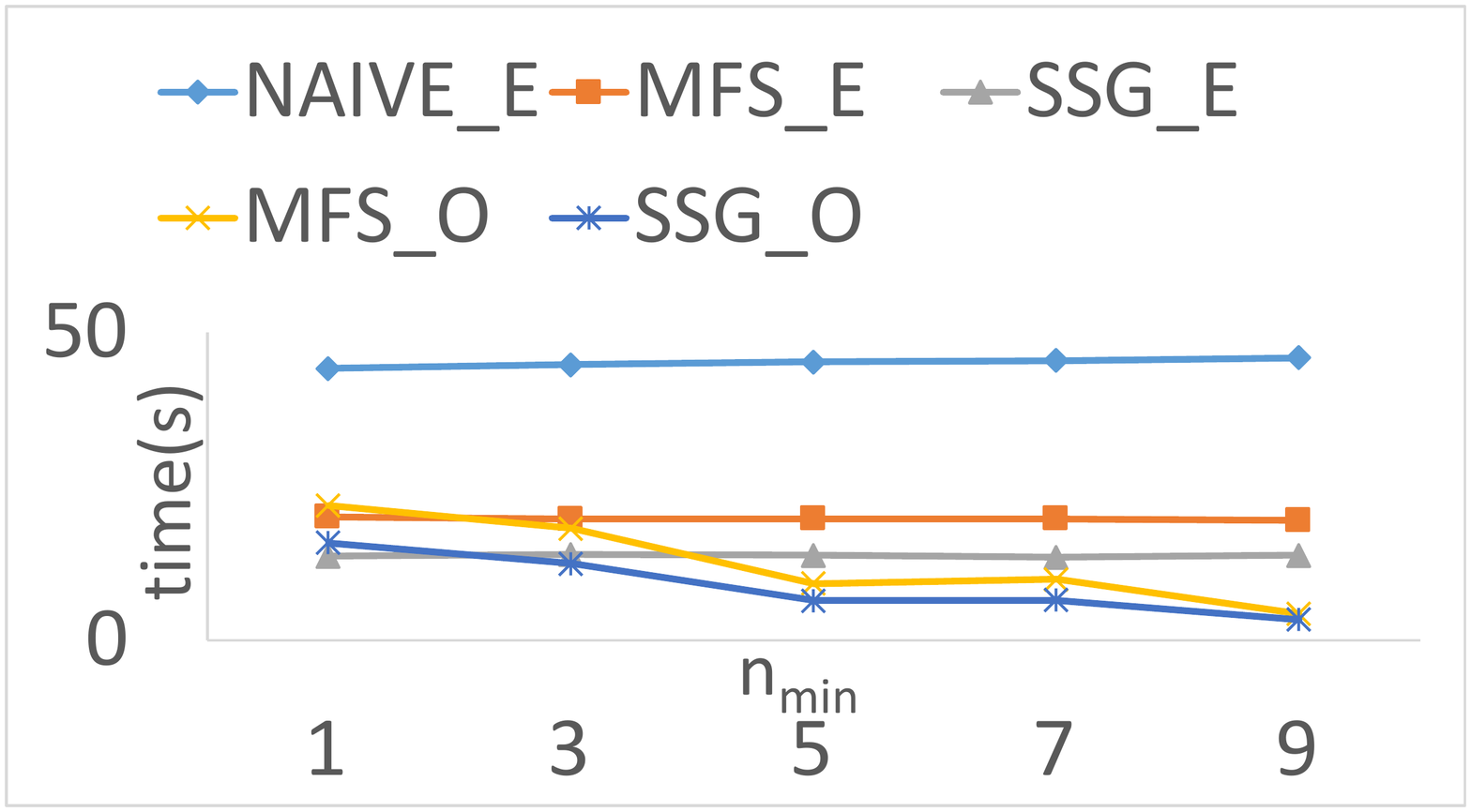}
	\caption{M2}
		\label{fig:state-oeval-m2}
	\end{subfigure}
	\caption{Varying $n_{min}$ in $\ge$ Queries}
	\label{fig:state-oeval-real}
\end{figure}

We finally present the performance of the three methods in an end-to-end manner, as shown in Figure \ref{fig:e2e-all}.
We measure the average time per query (issuing 50 queries and computing the average) in seconds for each dataset (the lower the better), including the object detection and tracking time. As can be observed, both MFS and SSG have leading performance, among which SSG has the best performance overall.

\begin{figure}[]
	\centering
		\centering
		\includegraphics[width=0.33\textwidth]{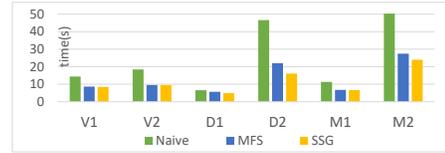}
		\caption{End-to-end Evaluation Time}
		\label{fig:e2e-all}
\end{figure}
\section{Related Work} \label{sect:related}

Several pieces of recent work focus on different aspects of declarative query processing over video feeds \cite{Kang:2017:NON:3137628.3137664,DBLP:conf/cidr/KangBZ19,blazeit}. Kang et. al. \cite{Kang:2017:NON:3137628.3137664} present an approach to filter frames via inexpensive filters for query processing purposes. Subsequently they present query processing techniques for specific types of aggregates over the video feed  \cite{blazeit,DBLP:conf/cidr/KangBZ19}. On a related thread, recent declarative query processing frameworks support queries with spatial constraints among objects on video feeds \cite{xarchakos,xarchakos1} as well as interactions among objects in the visible screen \cite{chao20}. 
Other proposals, \cite{Lu:2016:ORP:2987550.2987564,Hsieh18} present system approaches to query processing across video feeds and are concerned mainly with scalability aspects.
In the vision community, several Deep Learning approaches for object detection and classification have been proposed with impressive accuracy \cite{Krizhevsky:2017:ICD:3098997.3065386, DBLP:conf/iccv/Girshick15, DBLP:conf/cvpr/GirshickDDM14, DBLP:journals/corr/SimonyanZ14a, DBLP:conf/iccv/HeGDG17}.  
Although not directly related, some work has been conducted in the context of mining frequent itemsets over sliding windows. The main focus of these works is to produce itemsets  given a support/confidence threshold \cite{ chi2004moment, jiang2006cfi, nori2013sliding}. 

\section{Conclusions} \label{sect:conclusion}

We considered the problem of evaluating CNF temporal queries over video feeds. Utilizing  DL approaches, it is possible to extract valuable context from frames and enable advanced query processing. We introduced the Marked Frame Set and Strict State Graph approaches to maintain context that aims to reduce the overheads of processing new frames and facilitate CNF query evaluation. 
We demonstrate via experiments that our approaches yield vast performance improvements during query execution over alternate approaches.

\bibliographystyle{abbrv}
\bibliography{reference}  

\newpage
\appendix
\section{Table of Notations}

\begin{table}[!htbp]
    \centering
    \begin{tabular}{|c|c|}
        \hline
        Notation &  Definition \\
        \hline
        $V$ & A video feed, $V = \langle f_0, \ldots,  f_{i}, \ldots, f_{N-1}\rangle$  \\
        \hline
        $VR$ & Structured relation \\
        & which is obtained from video feed,$V$ \\
        \hline
        
        $\mathbb{L}$ & A set of class labels for all objects \\
        \hline
        $\mathbb{ID}$ & A set of objects  \\
        \hline
        $\mathbb{F}$ & A set of frames  \\
        \hline
        $\mathbb{KF}$ & A set of key frames \\
        \hline
        $\mathbb{S}$ & A set of states  \\
        \hline
        $\mathbb{PS}$ & A set of principal states  \\
        \hline
        $\mathbb{E}$ & A set of edges  \\
        \hline
        $w$ & Window size  \\
        \hline
        $d$ & Duration threshold  \\
        \hline
    \end{tabular}
    \caption{Table of Notations}
    \label{tab:my_label}
\end{table}

\section{Proofs}

\setcounter{theorem}{0}

\begin{theorem}
The set of marked frames in each state is a Key Frame Set.
\end{theorem}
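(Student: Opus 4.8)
The plan is to prove the statement by induction on the sequence of update events processed by the MFS approach --- the arrival of each new frame together with the expirations triggered by the sliding window --- carrying the invariant that for \emph{every} state $s$ currently maintained, the set $\mathbb{M}_s$ of frames marked in $\mathbb{F}_s$ is a Key Frame Set of $s$. This invariant already subsumes the corollary asserted right after the theorem: a nonempty Key Frame Set forces $\mathbb{ID}_s$ to be an MCOS of $\mathbb{F}_s$, so a state with at least one marked frame is valid, and conversely once all marks expire the state is correctly discarded.

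Before starting the induction I would record two elementary facts that make the two clauses of the Key Frame Set definition checkable. First, for any state $s$ we have $\mathbb{ID}_s=\bigcap_{f\in\mathbb{F}_s}\mathbb{ID}_f$, since the maximum COS of a frame set is exactly the per-frame intersection; hence ``$\mathbb{ID}_s$ is an MCOS of $\mathbb{F}'\subseteq\mathbb{F}_s$'' is equivalent to $\bigcap_{f\in\mathbb{F}'}\mathbb{ID}_f=\mathbb{ID}_s$. Using this, clause~1 becomes $\bigcap_{f\in\mathbb{F}_s\setminus\mathbb{KF}_s}\mathbb{ID}_f\supsetneq\mathbb{ID}_s$ --- equivalently, some object $o\notin\mathbb{ID}_s$ has all of its witnessing frames in $\mathbb{F}_s$ (those $f$ with $o\notin\mathbb{ID}_f$) lying inside $\mathbb{KF}_s$ --- and clause~2 becomes: for every $kf\in\mathbb{KF}_s$, $\bigl(\bigcap_{f\in\mathbb{F}_s\setminus\mathbb{KF}_s}\mathbb{ID}_f\bigr)\cap\mathbb{ID}_{kf}=\mathbb{ID}_s$, i.e.\ each single key frame, intersected with the non-key frames, already pins $\mathbb{ID}_s$ down.

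The base case (after $f_0$) is immediate: the only state is the principal state created by $f_0$, with $\mathbb{F}_s=\mathbb{M}_s=\{f_0\}$, and both clauses are immediate. For the inductive step I would classify, for an arriving frame $i$ with object set $O$, how each state is touched by the Frame Marking Rules: a state may be left alone; may have $i$ appended without a mark (when $\mathbb{ID}_s\subsetneq O$); may be the state whose object set equals $O$, in which $i$ is marked; or may be newly generated as $\mathbb{ID}_p\cap O$ for a parent $p$, inheriting $p$'s marks and possibly further marks propagated along other routes. The first three cases reduce --- via the reformulation above --- to short checks that appending a superset frame destroys neither clause and that $i$ alone excludes every object absent from $O$. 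The real work is the generation/propagation case, and the subtlety is that one and the same state can be (re)generated from several parents and collect marks from each: a naive union of the parents' Key Frame Sets need not be a Key Frame Set (clause~1 only gets easier, but clause~2 can break), so one must show the rules contribute \emph{exactly} the right frames.

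I expect the cleanest route is to strengthen the invariant to a closed-form description of $\mathbb{M}_s$ --- e.g.\ ``$f\in\mathbb{M}_s$ iff $f$ created a principal state $P_f$ still in the window and $s$ descends from $P_f$ along the generation history'' --- prove that the Frame Marking Rules maintain this description verbatim (a routine but careful case analysis against the rules), and then derive membership in ``Key Frame Set'' from it: clause~1 by exhibiting, for a marked frame $f^\star$, an object $o\in\mathbb{ID}_{P_{f^\star}}\setminus\mathbb{ID}_s$ whose every witness in $\mathbb{F}_s$ is marked; clause~2 by using that re-adding a single $P_f$-derived frame to the unmarked frames collapses the intersection to $\mathbb{ID}_s$ because the unmarked frames are precisely those appended as supersets of $\mathbb{ID}_s$. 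The main obstacle I anticipate is pinning down this strengthened invariant precisely enough that it is simultaneously maintained under multi-parent generation and under expiration, and still strong enough to force both clauses. The expiration half is comparatively easy: deleting an expired frame from every $\mathbb{F}_s$ and $\mathbb{M}_s$ preserves both clauses for any state that retains a mark, and a state left with $\mathbb{M}_s=\emptyset$ is exactly one that has ceased to be valid.
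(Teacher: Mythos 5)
Your overall strategy---induction on frame arrivals and expirations, carrying the invariant that every state's marked set is a Key Frame Set---matches the paper's proof in spirit (the paper likewise inducts, splitting on whether the newest frame of a state is marked, and for multi-parent states takes the key frame set to be the union of the parents' key frame sets). You also correctly isolate the genuinely hard point, namely that a state can be (re)generated from several parents and that clause~2 of the Key Frame Set definition is the clause at risk. But the proposal stops exactly there: the multi-parent case is the entire content of the theorem, and you leave it as ``one must show the rules contribute exactly the right frames,'' with a candidate strengthened invariant that is neither verified against the Frame Marking Rules nor shown to imply clause~2. As written, this is a plan with the decisive step missing.

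Moreover, the sketched derivation of clause~2 does not go through. You argue that ``re-adding a single $P_f$-derived frame to the unmarked frames collapses the intersection to $\mathbb{ID}_s$ because the unmarked frames are precisely those appended as supersets of $\mathbb{ID}_s$.'' That property only gives $\bigcap_{f\in\mathbb{F}_s\setminus\mathbb{M}_s}\mathbb{ID}_f\supseteq\mathbb{ID}_s$; intersecting with one marked frame $kf$ (whose object set is also a superset of $\mathbb{ID}_s$) yields a set containing $\mathbb{ID}_s$ but gives you no reason it equals $\mathbb{ID}_s$. Concretely, if frames $1$ and $3$ both carry object set $\{A,B,C\}$ and frame $2$ carries $\{A,B,D\}$, the state $\{A,B\}$ ends up with frames $1$ and $2$ marked and frame $3$ unmarked, yet $\bigl(\bigcap_{f\in\{3\}}\mathbb{ID}_f\bigr)\cap\mathbb{ID}_1=\{A,B,C\}\ne\{A,B\}$, so the step you rely on fails. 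To close the argument you must actually characterize which objects are excluded by each marked frame and show that the unmarked frames exclude none beyond $\mathbb{ID}_s$ that the single re-added key frame fails to exclude---this is precisely the accounting the paper performs via its two contradiction arguments (a marked frame that is not key, and a key frame that is not marked), and it is absent from your proposal.
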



\begin{proof}
    For each $s \in \mathbb{S}$, we use $f_m$ to denote the maximum frame id, i.e. $f_m=max(\mathbb{F}_{s})$. 
    $f_m$ must be one of the following cases:
        \begin{enumerate}
            \item $f_m$ is marked. This means the object set can be directly generated by frame $f_m$. Assume the frame set is marked properly before frame $f_m$ arrives. We use $\mathbb{KF'}_{s}$ to denote the key frame set. Since $f_m$ can generate the object set at frame $f_m$ directly, $\mathbb{KF'}_{s}$ is no longer a valid key frame set. According to the definition, $f_m$ has to be added to it. Since $f_m$ is marked, the new marked frame set is a valid key frame set.
            \item $f_m$ is not marked. 
            Then $s$ must can be generated by two or more states.
            We use $\mathbb{S}_s \subseteq \mathbb{S}$ denote the set of states that generates $s$. Assume all the marked frame set in states in $\mathbb{S}_s$ are key frame sets. We use $\mathbb{KF}_{s}$ to denote the key frame set of state $s$.
            Then we have $\mathbb{KF}_s = \bigcup\limits_{ss \in \mathbb{S}_s} {\mathbb{KF}_{ss}}$ , $\mathbb{ID}_s = \bigcap\limits_{ss \in \mathbb{S}_s} {\mathbb{ID}_{ss}}$.
            
            Suppose the marked frame set of the new state is not a key frame set. Then $\exists kf_s \in \mathbb{KF}_s$ which is not a key frame. According to the definition, we can still obtain $\mathbb{ID}_s$ if we remove $kf_s$ from the key frame set. However, $\exists ss' \in \mathbb{S}_s$, $kf_s \in \mathbb{KF}_{ss'}$, which means $kf_s$ can be removed from the marked frame set of $ss'$ which is a  contradiction.

            Suppose a key frame $kf_s$ is missing in the generated state, Then $\mathbb{ID}_s$ can still be generated if we remove all of the marked frames.
            Assume $\forall ss \in \mathbb{S}_{s}$, we remove the marked frames from its frame set and use $\mathbb{ID'}_{ss}$ to denote the object set generated by the rest of the frames. Then we have $\mathbb{ID}_{s} =  \bigcap\limits_{ss \in \mathbb{S}_s} {\mathbb{ID'}_{ss}}$. However, for each $ss$, we have $\mathbb{ID}_{ss} \subseteq \mathbb{ID'}_{ss}$. Which means, $\exists ss'$, $\mathbb{ID'}_{ss'} = \mathbb{ID}_{ss}$. Thus the marked frame set is not a key frame set for some state $ss'$ which is a contradiction.
        \end{enumerate}

\end{proof}

\begin{theorem}
Let $\mathbb{PS}$ be the set of principal states and $ns$ be a new principal state. In an SSG, a state $s \in \mathbb{S}$ could be adjacent to $ns$ only if $\exists ps \in \mathbb{PS}$, such that $\mathbb{ID}_{ns} \cap \mathbb{ID}_{ps} = \mathbb{ID}_{s}$.
\end{theorem}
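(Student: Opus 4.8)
The plan is to argue by contradiction, exploiting Property \ref{prop:ssg} (the defining property of an SSG) together with the fact, established in Section \ref{sect:ssg}, that every state is generated from principal states directly or indirectly. Suppose $s\in\mathbb{S}$ becomes adjacent to the new principal state $ns$, i.e., $(ns,s)\in\mathbb{E}$ after the update, yet for \emph{every} $ps\in\mathbb{PS}$ we have $\mathbb{ID}_{ns}\cap\mathbb{ID}_{ps}\neq\mathbb{ID}_{s}$. By Property \ref{prop:prune} applied to the edge $(ns,s)$ we know $\mathbb{ID}_{s}\subset\mathbb{ID}_{ns}$. The goal is to derive from the state-generation structure a principal state $ps$ whose intersection with $ns$ is \emph{exactly} $\mathbb{ID}_{s}$, contradicting the assumption.

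The key steps, in order, would be: (1) Observe that $s$ is generated (directly or indirectly) from some principal state; trace the generation chain back to a principal state $ps_0$ with $\mathbb{ID}_{s}\subseteq\mathbb{ID}_{ps_0}$ — this uses the observation at the end of Section \ref{sect:ssg} that principal states are the ``roots'' from which all states descend, and Property \ref{prop:prune} which guarantees object sets only shrink along edges. (2) Since $s$ is adjacent to $ns$ and $\mathbb{ID}_{s}\subset\mathbb{ID}_{ns}$, the object set $\mathbb{ID}_{s}$ must be realizable as an intersection involving $ns$: when $s$ was connected to $ns$ by the CNPS / State Selection Procedure, $s$ was placed in the candidate set $C$, and by construction of $C$ (Section \ref{sect:state-candidate}) each member of $C$ is the state with object set $\mathbb{ID}_{ps}\cap\mathbb{ID}_{ns}$ for some principal state $ps$. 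Hence there \emph{is} a principal state $ps$ with $\mathbb{ID}_{ps}\cap\mathbb{ID}_{ns}=\mathbb{ID}_{s}$ — this is precisely the contradiction. (3) Conversely, to complete the ``only if'' direction cleanly, note that if $s$ is not of the form $\mathbb{ID}_{ps}\cap\mathbb{ID}_{ns}$ for any $ps$, then $s$ is never placed in $C$, so the CNPS algorithm never creates the edge $(ns,s)$; and no other mechanism (Graph Maintenance Procedure steps 1--4, which only touch states reachable from already-visited principal states) adds an edge from $ns$, since $ns$ is added to the graph only after all members of $\mathbb{PS}$ are visited.

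I would also need a small lemma handling the case where the candidate state $s=\mathbb{ID}_{ps}\cap\mathbb{ID}_{ns}$ coincides with $ns$ itself or with $ps$ itself (the degenerate cases $|pInter|=|\mathbb{ID}_{ns}|$ or $|pInter|=|\mathbb{ID}_{ps}|$ guarded against in Algorithm \ref{al:st}); in those cases no \emph{new} intermediate state is needed and the edge handling is different, so they should be dispatched separately and shown not to produce a counterexample.

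The main obstacle I expect is step (1): making rigorous the claim that $s$'s generation history can be traced back to a principal state \emph{in the current window} whose object set contains $\mathbb{ID}_{s}$, given that the SSG is dynamically rewired (edges deleted and re-added by Section \ref{sect:state-edges}) as frames arrive and expire. One has to argue that rewiring preserves the invariant ``$\mathbb{ID}_{s}$ equals the intersection of the object sets of some subset of current principal states,'' rather than merely reasoning about a single snapshot. Establishing and maintaining this invariant across updates — essentially a structural induction on the sequence of arriving frames, using the Graph Maintenance and State Selection Procedures as the inductive step — is where the real work lies; the contradiction itself is then immediate.
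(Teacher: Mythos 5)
There is a genuine gap: your central step (2) is circular. You derive the contradiction from the fact that the edge $(ns,s)$ was created by the CNPS procedure, whose candidate set $C$ contains only states of the form $\mathbb{ID}_{ps}\cap\mathbb{ID}_{ns}$. But the theorem is precisely what \emph{licenses} the algorithm to restrict its candidates to that form — the paper states ``as per Theorem \ref{th4} one can obtain at most one state to connect $ns$ starting from each state in $\mathbb{PS}$'' \emph{before} defining $C$. Read as a statement about the algorithm's output, your argument makes the theorem true by construction but vacuous as a justification; read as the intended structural claim about any graph satisfying Property \ref{prop:ssg}, your argument assumes the conclusion. The paper's proof avoids this by drawing the contradiction from Property \ref{prop:ssg} itself: if $s$ is adjacent to $ns$ yet $\mathbb{ID}_{s}$ is a \emph{strict} subset of $\mathbb{ID}_{PS'}\cap\mathbb{ID}_{ns}$ for the principal state $PS'$ from which $s$ is reachable, then the state $s'$ with object set $\mathbb{ID}_{PS'}\cap\mathbb{ID}_{ns}$ is generated when $ns$ arrives and must also be adjacent to $ns$; having two states $s,s'$ adjacent to $ns$ with $\mathbb{ID}_{s}\cap\mathbb{ID}_{s'}=\mathbb{ID}_{s}$ violates Property \ref{prop:ssg}. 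That is the missing idea — the contradiction must come from the defining property of the SSG, not from the candidate-set construction.

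Your step (1) (tracing $s$ back to a principal state $PS'$ with $\mathbb{ID}_{s}\subset\mathbb{ID}_{PS'}$, hence $\mathbb{ID}_{s}\subseteq\mathbb{ID}_{PS'}\cap\mathbb{ID}_{ns}$, strict by the contradiction hypothesis) does match the paper's argument, and your concern that this reachability invariant needs to survive edge rewiring is legitimate — the paper simply asserts ``$s$ must be reachable from some principal state'' without proving it is preserved under the maintenance procedures. Your step (3) and the degenerate-case lemma are about the algorithm's behaviour rather than the theorem's claim, so they can be dropped once the contradiction is repaired as above.
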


\begin{proof}
    Assume $s$ is adjacent to $ns$, and $\forall PS \in \mathbb{PS}$, $\mathbb{ID}_{PS} \cap \mathbb{ID}_{ns} \ne \mathbb{ID}_{s}$. 
    Then there must be some principal state $PS' \in \mathbb{PS}$, such that $\mathbb{ID}_{s} \subset \mathbb{ID}_{PS'} \cap \mathbb{ID}_{ns}$. 
    For a State Graph, $s$ must can be reached from some principal state $PS'$. Since $s$ is adjacent to $ns$, $s$ can be generated by intersecting objects between $ns$ and other existing states.
    Assume there are some state $s'$ with object set $\mathbb{ID}_{ns} \cap \mathbb{ID}_{PS'}$. Then $s'$ must also be adjacent to $ns$.
    However, since $s'$ and $s$ both are adjacent to $ns$, while $\mathbb{ID}_{s} \subset \mathbb{ID}_{s'}$, which violates the property \ref{prop:ssg}.
\end{proof}

\begin{theorem}
The State Selection Procedure will not violate Property \ref{prop:ssg} while all newly generated states can be reached from the new principal state.
\end{theorem}

\begin{proof}
    Assume it breaks the property \ref{prop:ssg}. $\exists s_i, s_j \in \mathbb{S}$ that are connected in the graph through one or more edges. Suppose $s_j$ can be reached from $s_i$. Then we have $\mathbb{ID}_{s_j} \subset \mathbb{ID}_{s_i}$. Thus, $|\mathbb{ID}_{s_j}| < |\mathbb{ID}_{s_i}|$. Since we sort $C$ before selection, the only case is $s_j \notin CS_{s_i}$. However, $CS_{s_i}$ is obtained using DPS, which means, $s_j$ can not be reached from $s_i$, which leads to a contradiction.

    Assume some edge $(ns, s_i)$ is missing, where $s_i \in \mathbb{S}$. 
    Then $\exists s_j$, such that $s_i \in CS_{s_j}$.
    Since $s_i \in CS_{s_j}$, it is reachable from state $s_j$, which is a contradiction.
    
\end{proof}

\begin{theorem}
    Using the State Marking Procedure, a state is valid in SSG iff at least one frame in its frame set is marked. 
\end{theorem}

\begin{proof}
    When there is only one principal state, the marked frame set is always correct. 
    Assume after processing frame $i$ (which principal state is $ns$), all valid states have at least one frame marked.
    According to the State Marking Procedure, after processing a new arriving frame $ns$, all states that are principal states or generated by some principal states directly will be marked.
    
    Suppose at frame $i'$ with principal state $ns'$, for some valid state $s$, no frame is marked. Then $s$ must be generated by principal states indirectly.
    We use $S_{s}$ to denote the set of states that are adjacent to $s$.
    Since $s$ is valid, $s$ can be generated by computing intersections between some existing state, $s'$ and $ns'$. By applying this rule repeatedly, eventually, we have some state $s''$, which is generated by some existing principal state $ps''$. Since at least one frame is marked in $s''$, so does $s'$ and $s$, which is a contradiction to the assumption. 
    
    If a state, $s$, is invalid. Then starting from all the principal states from the current window on the SSG, state $s$ can not be generated. State $s$ must be generated from some previous window with some marked frame, $i'$, expired. Since $s$ can no longer be generated in the current window, the marked frames of $s$ will not be updated. Therefore, no frame is marked in state, $s$, which is invalid and will be pruned.

\end{proof}

\begin{theorem}
    In SSG, there will be at most $2^{x}$ number of states and $x2^{x}$ number of edges. Thus, the complexity of algorithm ST is $O(x2^{x})$, where $x=\frac{w}{\lambda}$.
\end{theorem}

\begin{proof}
Suppose the number of frames that share the same object set is $\lambda$ on average.
Given window size $w$, $\frac{w}{\lambda}$ number of unique principal states will be generated. 
We use $x=\frac{w}{\lambda}$ to denote the number of unique principal states.
We use $N_O$ to denote the number of objects in each frame on average.

We first consider the case where all principal states share some objects in common. In this case, if we revert the direction of each edge, we can obtain a tree, where the root is the state with the common object set among all frames, and the leaves are principal states.

In the worst case, starting from $N_O$, every possible state should be generated with different object set sizes.
Suppose a new state can be generated by any two states, then ${x}\choose{2}$ number of states can be generated with object size $N_O-1$. The maximum number of states would be ${x}\choose{1}$ $+$ ${x}\choose{2}$ $+ ... +1=2^{x}=2^{\frac{w}{\lambda}}$.
For each principal state with object set size of $size$, there will be at most $size -1$ number of edges. 
Thus in the worst case, the total number of edges would be ${x}\choose{1}$ $*(N_O-1) + $ ${x}\choose{2}$ $*(N_O-2) +... +$ ${x}\choose{x-1}$ $ * 1= x!(\frac{N_O-1}{x-1} + \frac{N_O-2}{(x-2)!2!} +...+\frac{1}{(x-1)!})$.
Normally, $\frac{w}{\lambda} > N_O$. Thus, we have the upper bound $x2^{x}=\frac{w}{\lambda}2^{\frac{w}{\lambda}}$.

Since we are using the DFS algorithm as the graph traversal algorithm, the complexity of the algorithm is
$O(V+E)$=$O(\frac{w}{\lambda}2^\frac{w}{\lambda})$.
    
\end{proof}

\end{document}